\def\@citecolor{blue}\def\@urlcolor{blue}\def\@linkcolor{blue}
\def\orcidID#1{\smash{\href{http://orcid.org/#1}{\protect\raisebox{-1.25pt}{\protect\includegraphics{orcid_color.eps}}}}}
\newcommand{\dual}[1]{\overline{#1}}
\newcommand{\auxarrow}
        {\mathop{\longrightarrow}}
\newcommand{\arrow}[1]
        {\, \auxarrow\limits^{#1} \,}
\newcommand{\Tbranchindex}[4]{\&\{{#1}_{#3}:{#2}_{#3}\}_{#3\in #4}}
\newcommand{\TbranchindexNoidx}[5]{\&\{{#1}_{#3}:{#2}, \ #5 \}_{ #4}}
\newcommand{\Tbranch}[2]{\Tbranchindex{#1}{#2}{i}{I}}
\newcommand{\Tbranchsingle}[2]{\&\{{#1}:{#2}\}}
\newcommand{\Tbranchset}[3]{\&\{{#1}\!:\!{#2}\}_{#1\in #3}}
\newcommand{\Tbra}[1]{\&\{#1\}}
\newcommand{\Tselectindex}[4]{\oplus\{{#1}_{#3}:{#2}_{#3}\}_{#3\in #4}}
\newcommand{\Tselect}[2]{\Tselectindex{#1}{#2}{i}{I}}
\newcommand{\Tselectsingle}[2]{\oplus\{{#1}:{#2}\}}
\newcommand{\Tselectset}[3]{\oplus\{{#1}:{#2}\}_{#1\in #3}}
\newcommand{\Tsel}[1]{\oplus\{ #1 \}}
\newcommand{\TselectindexNoidx}[5]{\oplus\{{#1}_{#3}:{#2}, \ #5 \}_{ #4}}
\newcommand{\Trec}[1]{\mu \mathbf{#1}}
\newcommand{\Tvar}[1]{\mathbf{#1}}
\newcommand{\Tend}{\mathbf{end}}
\newcommand{\context}[3]{\mathcal{#1}[{#2}]^{#3}}
\newcommand{\unfold}[2]{\mathsf{unfold}^{#1}(#2)}
\newcommand{\unf}[1]{\mathsf{unfold}(#1)}
\newcommand{\selunf}[1]{\mathsf{selUnfold}(#1)}
\newcommand{\dec}[1]{\mathsf{selRepl}(\Tvar t,\Tvar{\hat t},#1)}
\newcommand{\conf}{s}
\newcommand{\parconf}{|}
\newcommand{\cnfg}[4]{[#1,#2]\parconf[#3,#4]}
\newcommand{\append}{\!\cdot\!}
\newcommand{\refine}{\sqsubseteq}
\newcommand{\calA}{\mathcal{A}}
\newcommand{\grepl}[3]{#1\langle #2 \rangle^{#3}}
\newcommand{\crepl}[2]{#1\lfloor #2 \rfloor}
\newcommand{\tlab}{\lambda}
\newcommand{\holes}{\mathit{holes}}
\DeclareMathOperator{\subtype}{\leq}
\newcommand{\semT}[2]{[\![{#1}]\!]^{#2}}
\newcommand{\semcurly}[1]{\{\!\!\{{#1}\}\!\!\}}
\newcommand{\semthree}[1]{[\![\![{#1}]\!]\!]}
\newcommand{\semTt}[2]{[\![\![{#1}]\!]\!]^{#2}}
\newcommand{\unicontro}{\,\mathsf{ok}}
\newcommand{\contro}{\,\mathsf{ctrl}}
\newcommand{\treetrans}{\twoheadrightarrow}
\newcommand{\simtree}[2]{\mathit{simtree}(#1,#2)}
\newcommand{\treepair}[2]{( #1 , \; #2)}
\newcommand{\exTM}{\msg{tm}}
\newcommand{\exTC}{\msg{tc}}
\newcommand{\exDONE}{\msg{done}}
\newcommand{\exOVER}{\msg{over}}
\newcommand{\exGR}{T'_G}
\newcommand{\exG}{T_G}
\newcommand{\exS}{T_S}
\newcommand{\exSR}{T'_S}
\newcommand{\exSRR}{T''_S}
\newcommand{\ivar}[3]{#1_{#2,#3}}
\newif\iflong 
\newcommand{\appendixref}{\iflong the appendix\else \cite{techreport}\fi}
\newcommand{\vt}{\Tvar t}
\newcommand{\free}[1]{\mathsf{free}(#1)}
\DeclareMathAlphabet{\mathpzc}{OT1}{pzc}{m}{it}
\newcommand{\xqedhere}[2]{\rlap{\hbox to#1{\hfil\llap{\ensuremath{#2}}}}}
\newcommand{\dst}{.\ }
\newcommand{\subs}[2]{\{\nicefrac{#1}{#2}\}}
\newcommand{\lempty}{\epsilon}
\newcommand{\word}{\omega}
\newcommand{\msg}[1]{\mathit{#1}}
\newcommand{\snd}[1]{!{\msg{#1}}}
\newcommand{\rcv}[1]{?\msg{#1}}
\newcommand{\ctx}[1]{\mathcal{#1}}
\newcommand\reduline{\bgroup\markoverwith {\lower3.5\p@\hbox{\sixly \textcolor{red}{\char58}}}\ULon}\font\sixly=lasy6 \makeatother 
\newcommand{\lstCodeSize}{\normalsize}
\newcommand{\lstPrimitiveStyle}{\color{blue}\bfseries}
\newcommand{\lstNumberStyle}{\tiny\sffamily\color{gray}}
\lstdefinestyle{nonumber}{numbers=none,
  xleftmargin=1em,
  framexleftmargin=1em,
}
\tikzset{
every state/.style={minimum size=1pt,inner sep=1.5pt, initial text={}},
  mycfsm/.style={
    font=\scriptsize,
    initial where=left,
    initial distance=0.25cm,
    ->,>=stealth,auto, node distance=0.8cm and 0.8cm,
    scale=1, every node/.style={transform shape},
    baseline=(current  bounding  box.center)
  },
  ogate/.style = {
    diamond, draw, fill=white,
    minimum size=4mm,
    inner sep=0pt,
    postaction={path picture={\draw[black]
        ([yshift=\gatedistancein]path picture bounding box.south) -- ([yshift=-\gatedistancein]path picture bounding box.north)
        ([xshift=-\gatedistancein]path picture bounding box.east) -- ([xshift=\gatedistancein]path picture bounding box.west)
        ;}}, drop shadow},
  agate/.style={draw,rectangle,
    minimum size=3mm,
    inner sep=0pt,
    fill=white,
    postaction={path picture={\draw[black]
        ([yshift=\gatedistanceinand]path picture bounding box.south) --
        ([yshift=-\gatedistanceinand]path picture bounding box.north) ;}}, drop shadow},
  source/.style={draw,circle,fill=white,
    minimum size=3mm,
    inner sep=0pt, drop shadow},
  sink/.style={draw,circle,double,fill=white,
    minimum size=3mm,
    inner sep=0pt, drop shadow},
  intera/.style = {rectangle, draw=black, align=center, fill=white, rounded corners=0.1cm,
    minimum height=12,
    inner sep=2pt, drop shadow},
  line/.style = {draw,->, rounded corners=0.07cm,>=latex},
  venn/.style={preaction={fill, #1},opacity=0.6},
  cnode/.style={rectangle,draw=black,inner sep=2pt},
  ancestor/.style={densely dashed,->},
  silentedge/.style={>=latex,->},
  nlabel/.style={fill=white,inner sep=0pt,font=\footnotesize},
  notexplo/.style={fill=gray!10},
  echnode/.style={rectangle,draw=black,inner sep=2pt},
  schnode/.style={diamond,draw=black,inner sep=0pt},
}
\newcommand{\inference}[3]{\infer[\ifthenelse{\equal{#1}{}}{}{\inferrule{#1}}]{#3}{#2}}
\newcommand{\coinference}[3]{\infer=[\ifthenelse{\equal{#1}{}}{}{\inferrule{#1}}]{#3}{#2}}
\def \rightarrowfill{\m@th\mathord{\smash-}\mkern-6mu\cleaders\hbox{$\mkern-2mu\mathord{\smash-}\mkern-2mu$}\hfill
  \mkern-6mu\mathord\rightarrow}
\newcommand{\trans}[1]{\stackrel{#1}{\rightarrow}}
\newcommand{\node}[1]{\raisebox{-.2ex}{\mbox{\large {\tt [}}} \, #1 \, \raisebox{-.2ex}{\mbox{\large {\tt ]}}}}
\newcommand{\pa}{{|}}
\newcommand{\sem}[1]{[\![#1]\!]}
\begin{document}

\title{Fair Asynchronous Session Subtyping}

\thanks{
  This work has been partially supported by the research project
  FREEDA (CUP: I53D23003550006) funded by the framework PRIN 2022
  (MUR, Italy), the French ANR project SmartCloud ANR-23-CE25-0012,
  and the H2020-MSCA-RISE project ID 778233 ``Behavioural Application
  Program Interfaces (BEHAPI)'' }

\author[M.~Bravetti]{Mario Bravetti\lmcsorcid{0000-0001-5193-2914}}[a]
\author[J.~Lange]{Julien Lange\lmcsorcid{0000-0001-9697-1378}}[c]
\author[G.~Zavattaro]{Gianluigi Zavattaro\lmcsorcid{0000-0003-3313-6409}}[b]

\address{University of Bologna, ITALY}
\email{mario.bravetti@unibo.it}

\address{University of Bologna / INRIA OLAS Team, ITALY}
\email{gianluigi.zavattaro@unibo.it}

\address{Royal Holloway, University of London, Egham, UK}	\email{julien.lange@rhul.ac.uk}

\begin{abstract}
Session types are widely used as abstractions of asynchronous 
  message passing systems.
Refinement for such abstractions is crucial as it allows
  improvements of a given component without compromising
 its compatibility with the rest of the system.
In the context of session types, the most general notion of
 refinement is asynchronous session subtyping, which allows
 message emissions to be anticipated w.r.t. a bounded
 amount of message consumptions.
In this paper we investigate the possibility to anticipate 
  emissions w.r.t. an unbounded amount of consumptions: 
to this aim we propose to consider fair compliance over
  asynchronous session types and fair refinement as the relation that
  preserves it.
This allows us to propose a novel variant of session subtyping
  that leverages the notion of controllability from service contract theory and that is a sound characterisation of fair refinement.
In addition, we show that both fair refinement and our novel
  subtyping are undecidable. We also present a sound algorithm which
  deals with examples that feature potentially unbounded buffering.
Finally, we present an implementation of our algorithm and an
  empirical evaluation of it on synthetic benchmarks.

\end{abstract}

\maketitle 

\section{Introduction}

The coordination of software components via message-passing techniques
is becoming increasingly popular in modern programming languages and
development methodologies based on actors and microservices, e.g.,
Rust, Go, and the Twelve-Factor App methodology~\cite{twelvefactor}.
Often the communication between two concurrent or distributed
components takes place over point-to-point \textsc{fifo} channels.

Abstract models such as communicating finite-state
machines~\cite{BZ83} and asynchronous session types~\cite{HYC16} are
essential to reason about the correctness of such systems in a
rigorous way.
In particular these models are important to reason about
mathematically grounded techniques to improve concurrent and
distributed systems in a compositional way.
The key question is whether a component can be \emph{refined}
independently of the others, without compromising the correctness of
the whole system.
In the theory of session types, the most general notion of refinement
is the asynchronous session subtyping~\cite{ESOP09, CDY2014,
  MariangiolaPreciness}, which leverages asynchrony by allowing the
refined component to anticipate message emissions, but only under
certain conditions.
Notably asynchronous session subtyping rules out candidate
subtypes that occur naturally in communication protocols where, e.g.,
two parties simultaneously send each other a finite but unspecified
amount of messages before removing them from their buffers.

We illustrate this key limitation of the asynchronous session
subtyping with Figure~\ref{fig:runex-types}, which depicts possible
communication protocols between a spacecraft and a ground station that
communicate via two unbounded asynchronous channels (one in each direction).
For convenience, the protocols are represented as
session types (bottom) and equivalent communicating finite-state
machines (top).
Consider $\exS$ and $\exG$ first.
Session type $\exS$ is the abstraction of the spacecraft. It may send
a finite but unspecified number of telemetries ($\exTM$), followed by a
message $\exOVER$ --- this phase of the protocol typically 
models a \texttt{for} loop and its exit.
In the second phase, the spacecraft receives a number of telecommands
($\exTC$), followed by a message $\exDONE$.
Session type $\exG$ is the abstraction of the ground station. It is
the \emph{dual} of $\exS$, written $\dual{\exS}$, as required in
standard binary session types without subtyping.
Since $\exG$ and $\exS$ are dual of each other, the theory of session
types guarantees that they form a \emph{correct composition}, namely
no communication errors can be generated and the communication protocol
can always terminate successfully, with empty queues.

However, it is clear that this protocol is not efficient: the
communication is half-duplex, i.e., it is never the case that more
than one party is sending at any given time.
Using full-duplex communication is crucial in distributed systems with
intermittent connectivity, e.g., in this case ground stations are not
always visible from low orbit satellites.

The abstraction of a more efficient ground station is given by type
$\exGR$, which sends telecommands before receiving telemetries.
In this way $\exGR$ and $\exS$ interact in a symmetric manner:
they first send all of their messages and then consume the messages 
sent from the other partner. No communication error can occur, and the 
communication protocol can always terminate successfully, with empty queues.
Unfortunately $\exGR$ is not an asynchronous subtype of
$\exG$ according to earlier definitions of session
subtyping~\cite{ESOP09, MariangiolaPreciness, CDY2014}.
Hence they cannot formally guarantee that $\exGR$ is a safe replacement
for $\exG$. 
Note that the composition of $\exGR$ and $\exS$ is not existentially
bounded, hence it cannot be verified by techniques based on
communicating finite-state machines~\cite{LangeY19, BouajjaniEJQ18,
  GenestKM06, GenestKM07}.

\begin{figure}[t]
  \centering
  \begin{tabular}{c@{\qquad\quad}c@{\qquad\quad}c}
    \begin{tikzpicture}[mycfsm, node distance = 0.5cm and 0.9cm
      ,scale=1.2, every node/.style={transform shape}]
      \node[state, initial, initial where=left] (s0) {$0$};
      \node[state, right =of s0] (s1) {$1$};
      \node[state, right=of s1] (s2) {$2$};
\path 
      (s0) edge [loop above] node {$\snd{\exTC}$} (s0)
      (s0) edge node {$\snd{\exDONE}$} (s1)
      (s1) edge [loop above] node [above] {$\rcv{\exTM}$} (s1)
      (s1) edge node [below] {$\rcv{\exOVER}$} (s2)
      ; 
    \end{tikzpicture}
    &
    \begin{tikzpicture}[mycfsm, node distance = 0.5cm and 0.9cm
      ,scale=1.2, every node/.style={transform shape}]
      \node[state, initial, initial where=left] (s0) {$0$};
      \node[state, right =of s0] (s1) {$1$};
      \node[state, right=of s1] (s2) {$2$};
\path 
      (s0) edge [loop above] node [above] {$\rcv{\exTM}$} (s0)
      (s0) edge node [below] {$\rcv{\exOVER}$} (s1)
      (s1) edge [loop above] node {$\snd{\exTC}$} (s1)
      (s1) edge node {$\snd{\exDONE}$} (s2)
      ; 
    \end{tikzpicture}
    &
    \begin{tikzpicture}[mycfsm, node distance = 0.5cm and 0.9cm
      ,scale=1.2, every node/.style={transform shape}]
      \node[state, initial, initial where=left] (s0) {$0$};
      \node[state, right =of s0] (s1) {$1$};
      \node[state, right=of s1] (s2) {$2$};
\path 
      (s0) edge [loop above] node [above] {$\snd{\exTM}$} (s0)
      (s0) edge node [below] {$\snd{\exOVER}$} (s1)
      (s1) edge [loop above] node {$\rcv{\exTC}$} (s1)
      (s1) edge node {$\rcv{\exDONE}$} (s2)
      ; 
    \end{tikzpicture}
    \\ $\exGR$  & $\exG = \dual{\exS}$ & $\exS $  
  \end{tabular}
  \begin{tabular}{lcl}
    $\exGR$ & = & $\Trec  t .\Tsel{ \exTC : \Tvar t  , \exDONE :
                  \Trec{t'} . 
                  ~\Tbra{ \exTM : \Tvar{t'} ,  \exOVER : \Tend } } $
    \\ 
    $\exG$ & = & $
                 \Trec{t}.~\Tbra{ \exTM : \Tvar{t} ,  \exOVER : \Trec{t'}
                 . \Tsel{ \exTC : \Tvar{t'}  , \exDONE : \Tend } } $
    \\
    $\exS$ & = &  $\Trec{t} . 
                 \Tsel{ \exTM : \Tvar{t} ,  \exOVER : \Trec{t'} .
                ~\Tbra{ \exTC : \Tvar{t'}  , \exDONE : \Tend } } $ 
  \end{tabular}
  \caption{Satellite protocols. $\exGR$ is the refined session type of the ground station, $\exG$ is the session type of ground station, and $\exS$ is the session type of the spacecraft.}\label{fig:runex-types}
\end{figure}

Technically speaking, previous asynchronous session subtyping relations
do not capture our spacecraft example due to the notion of correct
composition that they consider.
For instance, the notion of correct composition considered in~\cite{MariangiolaPreciness} 
imposes that all sent messages are guaranteed to be consumed
along \emph{all} possible computations of the receiver.
Following this approach the above type $\exGR$ is not a correct 
refinement of $\exG$
because
$\exGR$ can start by performing 
infinitely many outputs without consuming any incoming message.

The alternative notion of correct composition that we consider 
is weaker in that we do not impose a sent message to be consumed along all 
possible paths of the receiver, but we only require that, for all 
possible computation of the receiver either the message has been 
already consumed or there exists a continuation of the computation in 
which the message will be consumed.
More precisely, our notion of correctness is as follows:
given the composition of two session types, for every computation
there always exists a continuation of such computation reaching successful
termination (with empty queues).
This is a 
reasonable assumption, e.g., for programs that can conceptually run
indefinitely but must account for graceful termination (e.g., to
release acquired resources).

According to this notion of correct composition, 
$\exGR$ and $\exS$ are correct partners in that for every reachable state,
we can always find a way to terminate successfully the interaction.
This way to termination can be selected by exiting from the initial loops 
of outputs of both $\exGR$ and $\exS$.
The theory that we will develop will allow us to conclude that $\exGR$ 
is a correct refinement of $\exG$ for every possible partner, not only 
for the partner $\exS$.

The use of this notion of correct composition is new in the context
of asynchronous session types, but it has been already considered
in several related contexts. First of all, we observe that according to the terminology in \cite{GlabbeekH19}, our notion of correctness coincides with 
imposing that successful termination is a liveness-property
which holds under the assumption of {\em full fairness}.
For this reason, we name {\em fair compliance} our notion of correct composition.
Fair compliance has been already considered in 
the context of {\em synchronous} session types~\cite{Padovani16, CicconeP22}, 
in the definition of should testing \cite{RV07} where 
``every reachable state is required to be on a path to success'', and 
applied also to behavioural contracts~\cite{BravettiZ08,wsfm08}.

Given our notion of fair compliance defined on an operational
model for asynchronous session types, we define \emph{fair refinement}
the refinement relation that preserves it.
Then, we propose a novel variant of session subtyping called
\emph{fair asynchronous session subtyping}, that leverages the notion
of controllability 
from service contract theory, and
which is a sound characterisation of fair refinement.
We show that both fair refinement and fair asynchronous session
subtyping are undecidable, but give a sound algorithm for the
latter.
Our algorithm covers session types that exhibit complex behaviours
(including the spacecraft example and variants).
Our algorithm has been implemented in a tool available
online~\cite{tool}.

\paragraph{Structure of the paper}
The rest of this paper is structured as follows.
In \S~\ref{sec:refinement} we recall syntax and semantics
of asynchronous session types, we define 
\emph{fair compliance} and the corresponding
\emph{fair refinement}.
In \S~\ref{sec:subtyping} we introduce \emph{fair asynchronous
  subtyping}, the first relation of its kind to deal with examples
such as those in Figure~\ref{fig:runex-types}.
In \S~\ref{sec:algorithm} we propose a sound algorithm for subtyping
that supports examples with unbounded accumulations, including the
ones discussed in this paper.  In \S~\ref{sec:tool} we discuss the
implementation of this algorithm.
In \S~\ref{sec:eval} we present an evaluation of our implementation on generated session types.
Finally, in \S~\ref{sec:ending} we
discuss related and future work.
The paper includes also an~\appendixref\ containing details
of proofs that are not necessary in order to understand the 
main results that we have proved and the corresponding proof 
techniques.

This paper is based on the conference publication~\cite{BravettiLZ21}.
The main novelties w.r.t.~\cite{BravettiLZ21} are:
the inclusion of
all the proofs of our results, 
a completely new  
empirical evaluation of the implementation of our algorithm for checking 
fair asynchronous session subtyping (see\ \S~\ref{sec:eval}), 
an enriched and more comprehensive related work section.

 \section{Fair Refinement for Asynchronous Session Types}\label{sec:refinement}

In this section we first recall the syntax of two-party session types,
their reduction semantics, and a notion of compliance centred on the
successful termination of interactions.
We define our notion of refinement based on this compliance and show
that it is generally undecidable whether a type is a refinement of another.

\subsection{Preliminaries: Binary Session Types} \label{sub:prelm}

\paragraph{Syntax}
The formal syntax of two-party session types is given below. We follow
the simplified notation used in, e.g., \cite{BravettiCZ17,BCZ18}, without
dedicated constructs for sending an output/receiving an input.
Additionally we abstract away from message payloads since they are
orthogonal to the results of this paper.

\begin{defi}[Session Types]\label{def:sessiontypes}
  Given a set of labels $\mathcal{L}$, ranged over by $l$, the syntax
  of two-party session types is given by the following grammar:
\begin{displaymath}
    \begin{array}{lrl}
      T\ \ &::=&\ \   \Tselect{l}{T} 
                 \quad \mid\quad  \Tbranch{l}{T} 
                 \quad \mid\quad  \Trec t.T
                 \quad \mid\quad \Tvar t
                 \quad \mid\quad \Tend
\end{array}
  \end{displaymath}
\end{defi}

Output selection $\Tselect{l}{T}$ represents a guarded internal
choice, specifying that a label $l_i$ is sent over a channel, then
continuation $T_i$  is executed.
Input branching $\Tbranch{l}{T}$ represents a guarded external choice,
specifying a protocol that waits for messages. If message $l_i$ is
received, continuation $T_i$ takes place.
In selections and branchings each branch is tagged by a label $l_i$,
taken from a global set of labels $\mathcal{L}$. In each
selection/branching, these labels are assumed to be pairwise distinct.
In what follows, we leave implicit the index set $i \in I$ in input
branchings and output selections when it is clear from the
context.
Types $\Trec t.T$ and $\Tvar t$ denote standard recursion
constructs.
We assume recursion to be guarded in session types,
i.e., in $\Trec t.T$, the recursion variable $\Tvar t$ 
occurs within the scope of a selection or branching.
Session types are closed, i.e., all recursion variables $\Tvar t$ occur under the scope of a
corresponding binder $\Trec t.T$.  
Terms of the session syntax that are not closed
are dubbed (session) terms.   
Type $\Tend$ denotes the end of the interactions.

The dual of session type $T$, written $\dual{T}$, is inductively
defined as follows:
$\dual{\Tselect{l}{T}} = \Tbranch{l}{\dual{T}}$,
$\dual{\Tbranch{l}{T}} = \Tselect{l}{\dual{T}}$,
$\dual{\Tend} = \Tend$, $\dual{\Tvar t} = \Tvar t$, and
$\dual{\Trec t.T} = \Trec t.\dual{T}$.

\subsection{Asynchronous Fair Refinement}
We now define our notion of fair refinement.
We first define a reduction semantics formalizing the
interaction between two binary session types assuming asynchronous
communication via FIFO buffers. Then we formalize the notion of successful
final configuration; intuitively a configuration is successful
if both communicating types have completed their
send/receive operations and the buffers are empty.
Compliance is then defined as follows: two session types are compliant if, 
for every reachable configuration (according to the reduction semantics),
the interaction can continue to reach a successful configuration.
Finally, we say that a type $T$ refines another type $S$ if
it can safely replace $S$, i.e., if $S$ is compliant with a type $S'$
then also $T$ is compliant with $S'$.

In the definition of the reduction semantics for types we need some auxiliary
notation. Hereafter, we let $\word$ range over words in $\mathcal{L}^\ast$,
write $\lempty$ for the empty word, and write
$\word_1 \append \word_2$ for the concatenation of words $\word_1$ and
$\word_2$, where each word may contain zero or more labels.
Also, we write $T\subs{T'}{\Tvar t}$ for $T$ where
every free occurrence of $\Tvar{t}$ is replaced by $T'$.

We give an asynchronous semantics of session types via transition
systems whose states are configurations of the form:
$\cnfg{T_1}{\word_1}{T_2}{\word_2}$ where
$T_1$ and $T_2$ are session types equipped with two sequences
$\word_1$ and $\word_2$ of incoming messages (representing
unbounded buffers). We use $\conf$, $\conf'$, etc.\ to range over
configurations.

In this paper, we use explicit unfoldings of session types, as defined below.
\begin{defi}[Unfolding]\label{def:unfolding}
Given session type $T$, we define $\unf T$:
\[
\unf T = 
\begin{cases}
  \unf{T'\subs{T}{\Tvar{t}}} & \text{if $T=\Trec t.T'$}
  \\
  T & \text{otherwise}
\end{cases}
\]
\end{defi}
Definition~\ref{def:unfolding} is standard --- an equivalent
function is used in the first session subtyping~\cite{GH05}.
Notice that $\unf{T}$ unfolds all the recursive definitions in front
of $T$, and it is well defined for session types with guarded
recursion (c.f.\ assumptions in Section~\ref{sub:prelm}).

\begin{defi}[Transition Relation]
\label{def:transrel}
The transition relation $\rightarrow$ over configurations 
is the minimal relation satisfying the rules below (plus symmetric
ones):
\begin{enumerate}
\item \label{it:trans-send}
if $j \in I$ then
$\cnfg{\Tselect{l}{T}}{\word_1}{T_2}{\word_2} \rightarrow
\cnfg{T_j}{\word_1}{T_2}{\word_2\append l_j}$;
\item \label{it:trans-rcv}
if $j \in I$ then
$\cnfg{\Tbranch{l}{T}}{l_j \append \word_1}{T_2}{\word_2} \rightarrow
\cnfg{T_j}{\word_1}{T_2}{\word_2}$;
\item \label{it:trans-unfold} if
  $\cnfg{\unf {T_1}}{\word_1}{T_2}{\word_2} \rightarrow
  \conf$ then 
$\cnfg{T_1}{\word_1}{T_2}{\word_2} \rightarrow \conf$.
\end{enumerate}
We write $\rightarrow^*$ for the reflexive and transitive closure of
the $\rightarrow$ relation.
\end{defi}
Intuitively a configuration $\conf$ reduces to configuration $\conf'$
when either
\eqref{it:trans-send} a type outputs a message $l_j$, which is added
at the end of its partner's queue; 
\eqref{it:trans-rcv} a type consumes an expected message $l_j$ from
the head of its queue; or
\eqref{it:trans-unfold} the unfolding of a type can execute one of the
transitions above.

Next, we define successful configurations as those configurations
where both types have terminated (reaching $\Tend$) and both queues
are empty.
We use this to give  our definition of compliance which holds when
it is possible to reach a successful configuration from 
all reachable configurations.
\begin{defi}[Successful Configuration]\label{succomp}
The notion of \textit{successful configuration} is formalised
by a predicate $\conf\surd$ defined as follows:
\[
\cnfg{T}{\word_T}{S}{\word_S}\surd \;\; \mbox{iff} \;\; \unf
{T}\!=\!\unf{S}\!=\!\Tend \ \text{ and } \  \word_T\!=\!\word_S\!=\!\lempty
\]
\end{defi}

\begin{defi}[Compliance]
\label{def:compliance} \label{def:compatibility}
Given a configuration $\conf$ we say that it is a correct composition
if, whenever $\conf\rightarrow^*\conf'$, there exists a configuration
$\conf''$ such that
$ \conf' \rightarrow^\ast \conf''$ and $\conf'' \surd$.

\noindent
Two session types $T$ and $S$ are \emph{compliant} if $\cnfg{T}{\lempty}{S}{\lempty}$
is a correct composition.
\end{defi}

Observe that our definition of compliance is stronger than what is
generally considered in the literature on session types,
e.g.,~\cite{LangeY19, LY17, DY13}, where two types are deemed
compliant if all messages that are sent are eventually received, and
each non-terminated type can always eventually make a move.
Compliance is analogous to the notion of \emph{correct session}
in~\cite{Padovani16} but in an asynchronous setting.

A consequence of Definition~\ref{def:compatibility} is that it is
generally \emph{not} the case that a session type $T$ is compliant with its
dual $\dual{T}$, as we show in the example below.
\begin{exa}
  The session type 
  $ T = \Tbra{l_1 : \Tend, \  l_2 : \Trec{t} . \Tsel{l_3 : \Tvar{t}}}$
  and its dual $\dual{T} = \Tsel{l_1 : \Tend, \  l_2 : \Trec{t}
    . \Tbra{l_3 : \Tvar{t}}}$
  are not compliant.
Indeed, when $\dual{T}$ sends label $l_2$, the configuration
  $\cnfg{\Tend}{\lempty}{\Tend}{\lempty}$ is no longer reachable.
\end{exa}

We introduce a notion of refinement that preserves
compliance. This follows previous work done in the context of
behavioural contracts \cite{BravettiZ08} and \emph{synchronous}
multi-party session types \cite{Padovani16}.  
The key difference with these works is that we are considering
asynchronous communication based on (unbounded) \textsc{fifo} queues.
Asynchrony makes fair refinement undecidable, as we show below.

\begin{defi}[Refinement]\label{def:refine}
  A session type $T$ refines $S$, written $T \refine S$, if for every
  $S'$ s.t. $S$ and $S'$ are compliant then $T$ and $S'$ are also
  compliant.
\end{defi}
\noindent 
In contrast to traditional (synchronous and asynchronous) subtyping
for session types~\cite{GH05, MariangiolaPreciness, ESOP09},
this refinement is not covariant on outputs, i.e.,
it does not always allow a refined type to have output selections
with less labels.\footnote{The synchronous
subtyping in~\cite{GH05} follows a channel-oriented approach; 
hence it has the opposite direction and is contravariant on outputs.} 

\begin{exa}\label{ex:dual-not-refine}
  Let $T = \Trec{t} . \Tsel{l_1 : \Tvar{t}}$ and $S = \Trec{t}
  . \Tsel{l_1 : \Tvar{t}, \ l_2 : \Tend}$.
We have that $T$ is a synchronous (and asynchronous) subtype of
  $S$. However $T$ is \emph{not} a refinement of $S$.
In particular, the type $\dual{S} = \Trec{t}
  . ~\Tbra{l_1 : \Tvar{t}, \ l_2 : \Tend}$ is compliant with $S$
  but not with $T$, since $T$ does not terminate.
\end{exa}

\subsection{Undecidability of Fair Refinement}\label{subsec:undecidabilityRefinement}

Next, we show that the refinement relation $\refine$ is generally
undecidable.
The proof of undecidability exploits results from the tradition
of computability theory, i.e., Turing completeness of queue machines.
The crux of the proof is to reduce the problem of checking the
reachability of a given state in a queue machine to the problem of
checking the refinement between two session types.

\paragraph{Preliminaries}
Below we consider only state reachability in queue machines, and not
the typical notion of the language recognised by a queue machine
(see, e.g.,~\cite{BravettiCZ17} for a formalisation of queue machines).
Hence, we use a simplified formalisation, where no input string is considered.

\begin{defi}[Queue Machine]\label{def:queuemachines}
  A queue machine $M$ is defined by a five-tuple
  $(Q , \Gamma , \$ , s , \delta )$ where:
  \begin{itemize}
  \item $Q$ is a finite set of states;
\item $\Gamma$ is a finite set denoting the queue alphabet (ranged
    over by $A,B,C,X$);
  \item $\$ \in \Gamma$ is the initial queue symbol;
  \item $s \in Q$ is the start state;
  \item $\delta : Q \times \Gamma \rightarrow Q\times \Gamma ^{*}$ is
    the transition function ($\Gamma ^{*}$ is the set of sequences of
    symbols in $\Gamma$).
  \end{itemize}
\end{defi}

\noindent 
Considering a queue machine
$M=(Q , \Gamma , \$ , s , \delta )$,
a {\em configuration} of $M$ is an ordered pair
$(q,\gamma)$ where $q\in Q$ is its {\em current state} and
$\gamma\in\Gamma ^{*}$ is the {\em queue}.  The
starting configuration is $(s , \$)$, consisting of
the start state $s$ and the initial queue symbol $\$$.

Next, we define the transition relation ($\rightarrow
_{M}$), leading a configuration to another, and the related notion of
state reachability.
\begin{defi}[State Reachability]\label{queue_computation}  
  Given a machine $M\!\!=\!\!(Q , \Gamma , \$ , s , \delta )$,
the
  transition relation $\rightarrow _{M}$ over configurations
  $Q \times \Gamma ^{*}$ is defined as follows. For $p,q \in Q$, $A \in \Gamma$,
  and $\alpha,\gamma \in \Gamma ^{*}$, we have
  $(p,A\alpha )\rightarrow _{M}(q,\alpha \gamma)$ whenever
  $\delta (p,A)=(q,\gamma)$. Let $\rightarrow _{M}^{*}$ be the reflexive and
  transitive closure of $\rightarrow _{M}$.

  \noindent
  A target state $q_f \in Q$ is \emph{reachable} in $M$ if there
  is $\gamma \in\Gamma ^{*}$ s.t.
  $(s,\$)\rightarrow _{M}^{*}(q_f,\gamma)$.
\end{defi}

Since queue machines can deterministically encode Turing machines
(see, e.g., \cite{BravettiCZ17}),
checking state reachability for queue machines is undecidable.

\medskip

To prove the undecidability of fair refinement, we consider an
arbitrary queue machine $M$, and a target state $q_f$ for which we
define two session types $T$ and $S$ such that $T \refine S$ if and
only if state $q_f$ is reachable in $M$.
Hereafter, we use convenient notations for denoting output
selections and input branchings.
Instead of using labels indexed on an indexing set $I$, as in the
input branching syntax $\Tbranch{l}{T}$, we also use explicitly distinct
labels, as in $\&\{l:T_l, m: T_m\}$ (we use the same notation for
output selections).
We also use the union operator to combine disjoint sets of labels, for
instance, instead of writing $\oplus\{l_k:T_k\}_{k\in I\cup J}$, we
use the notation
$\oplus\{l_i:T_i\}_{i\in I} \cup \{l_j:T_j\}_{j\in J}$
(we use the same notation for
input branchings).

We start by defining the type $T=\sem{M,q_f,E}$.\footnote{\label{foot:labels}In
the definition of the type $T=\sem{M,q_f,E}$,
as well as in the definition $S=\sem{M,E}$,
we make the non restrictive assumption that the set of 
labels $\mathcal{L}$ of the Definition \ref{def:sessiontypes}
of the syntax of session types includes the symbols in the 
considered queue machine alphabet $\Gamma$ plus the additional 
symbol $E$.}
This type reproduces the 
finite control of the queue machine $M$, with a couple of differences:
($i$) it initialises the queue with symbol $\$$, and
($ii$) the state $q_f$ produces the additional ending symbol $E$ to communicate the 
end of the computation, then it consumes all symbols in the queue
and successfully terminates when $E$ is read from the queue.
In this way, the queue is empty when the type $T$ successfully terminates.

\begin{defi}[Finite Control Encoding]\label{def:controlEncoding}
  Let $M= (Q , \Gamma , \$ , s , \delta )$ be a queue machine,
  $q_f \in Q$, and $E \not\in \Gamma$ be the additional ending symbol;
we define $\sem{M,q_f,E}$ as follows:
  \[
  \sem{M,q_f,E}\ = \oplus \{\$: \semT{s}{\emptyset}\}
  \]
  where, given $q\in Q\setminus \{q_f\}$ and $\mathcal S\subseteq Q$,  
  $\semT{q}{\mathcal S}$ is defined as follows:
  \[
    \begin{array}{l}
      \semT{q}{\mathcal S} = 
      \left \{
      \begin{array}{l}
          \Trec{q}. 
          \Tbranchset{A}{
          \Tselectsingle{B^A_1}{\cdots
\Tselectsingle{B^A_{n_A}}{\semT{q'}{\mathcal S \cup q}
          }
}
          }                  
          {\Gamma} 
        \\[1mm]
        \hspace{0.9cm}\text{if }q\not\in {\mathcal S} \text{ and } \delta(q,A)=(q',B^A_1\cdots B^A_{n_A})
        \\
        \\
        \Tvar{q}\qquad \mbox{if $q \in {\mathcal S}$}
      \end{array}
      \right.
    \end{array}
  \]
  while $\semT{q_f}{\mathcal S} = \oplus \big\{E: \big(
  \Trec{\Tvar t}.\Tbranchset{A}{\Tvar t}{\Gamma} \cup \{E:\Tend\}
  \big)\ \big\}$
\end{defi}

We now define the type $S=\sem{M,E}$, that repeatedly behaves like 
a producer/consumer for all the symbols of the queue alphabet plus the
ending symbol $E$, with the difference that after producing and
consuming the ending symbol $E$, the type becomes $\Tend$.

\begin{defi}[Producer/consumer]\label{def:queueenc}
Let $M=
  (Q , \Gamma , \$ , s , \delta)$ be a queue machine and $E \not\in \Gamma$ be 
  the ending symbol.
  We define $\sem{M,E}$ as
  \[
  \sem{M,E} =
  \Trec{\Tvar t}.\Tselectset{A}{\&\{A:\Tvar t\}}{\Gamma} \cup \{E:\&\{E:\Tend\}\}
  \]
\end{defi} 

While $T=\sem{M,q_f,E}$ and $S=\sem{M,E}$ may appear unrelated, we
have that under some conditions $T \refine S$ holds.
Namely, $T \refine S$ if and only if $q_f$ is reachable in $M$. 
To prove this, we first characterize the set of types that are compliant with $S$.
This set consists of types that have the same behaviour (according to 
type bisimilarity) of $\dual S$, i.e., the dual of $S$.
The type $\dual S$, instead of being a producer/consumer, is a consumer/producer
which sends  the messages it receives back to the partner. This simulates a
FIFO queue that receives messages and sends messages in the same order
of reception.
Hence, the finite control encoding $T$, when combined with such consumer/producer
(i.e. any type having the same behaviour of $\dual S$), faithfully reproduces
the same behaviour of the encoded queue machine. A successful configuration
can be reached only if the type modeling the finite control terminates, 
and this is possible only if the final state $q_f$ is reached.

As mentioned above, the proof relies on the notion of type bisimilarity.

\begin{defi}[Type bisimilarity]\label{def:bisim}

  A relation 
$\,\mathcal R\!\!\;$ on session types is a bisimulation
  whenever
$(T,S)\in\mathcal R$ implies:
\begin{enumerate}
  
   \item \label{item:end}
  
    if $T=\Tend$ then 
    $\unf S = \Tend$;

   \item \label{item:internal}
  
    if $T=\Tselect{l}{T}$ then ${\unf S} = \Tselect{l}{S}$
     with
      $\forall i\in I.\,
      (T_i,S_i)\in\mathcal R$;
    
   \item \label{item:external}
  
    if $T=\Tbranch{l}{T}$ then ${\unf S} = \Tbranch{l}{S}$
     with
      $\forall i\in I.\,
      (T_i,S_i)\in\mathcal R$;
  \item \label{item:rec}
  
  if $T= \Trec t.{T'} $ then $(T'\{T/\Tvar{t}\}, S)\in\mathcal R$.
\end{enumerate}
   \noindent 
   $T$ is bisimilar to $S$, written
   $T \sim S$, if there is a bisimulation $\mathcal R$ such that $(T,S) \in \mathcal R$.
\end{defi}

Session type bisimilarity will be used only in the proof of undecidability 
of refinement and will not be involved in further developments in the 
remainder of the paper. Namely, we need bisimilarity in Lemma \ref{lem:dualQueue}
to characterise the session types that are compliant with $S=\sem{M,E}$.
Notice also that the relation $\sim$ is symmetric, i.e., if $(S,T) \in\ \sim$
then also $(T,S) \in\ \sim$.
In fact, the first three items of the above Definition 
simply check whether the l.h.s. and the r.h.s. terms
are either both $\Tend$ or have the same branching structure (i.e., the same 
set of labels) up-to unfolding of the r.h.s. But the
same effect of unfolding on the r.h.s. can be obtained on the l.h.s.
by (possibly repeated) application of the fourth item of the above definition.

In the proof of undecidability of refinement we need a result about 
bisimilar session types,
i.e., bisimilarity preserves compliance. Namely, we have that $T$ is compliant with $S$ if and only if $T'$ is compliant with $S'$ assuming $T \sim T'$ and $S \sim S'$.
This is an immediate corollary of
the following Lemma (which directly follows from the
bisimilarity of the considered types $T$ and $R$).

\begin{lem}
Consider the configuration $\cnfg{T}{\word_T}{S}{\word_S}$ and the session type $R$ s.t. $T \sim R$.
We have that:
\begin{itemize}
\item
$\cnfg{T}{\word_T}{S}{\word_S}\surd$ if and only if $\cnfg{R}{\word_T}{S}{\word_S}\surd$;
\item
$\cnfg{T}{\word_T}{S}{\word_S} \trans{} \cnfg{T'}{\word_T'}{S'}{\word_S'}$ if and only if
there exists $R' \sim T'$ s.t. 
$\cnfg{R}{\word_T}{S}{\word_S} \trans{} \cnfg{R'}{\word_T'}{S'}{\word_S'}$.
\end{itemize}
\end{lem}

\begin{cor}\label{cor:bisimilar}
Consider two pairs of bisimilar session types: $T \sim T'$ and $S \sim S'$. We have
that $T$ is compliant with $S$ if and only if $T'$ is compliant with $S'$. Moreover,
we have that $T \refine S$ if and only if $T' \refine S'$.
\end{cor}
 
\noindent 
As informally mentioned above, type bisimilarity allows us to characterize the
set of types that are compliant with a producer/consumer type $S=\sem{M,E}$,
for some queue machine $M$ and additional ending symbol $E$.
This result is formalized by the following Lemma (proof in Appendix \ref{subsec:queuemachines}).

\begin{restatable}{lem}{bisimilarityAndDual}\label{lem:dualQueue}
Let $M= (Q , \Gamma , \$ , s , \delta )$ be a queue machine 
and $E \not\in \Gamma$ the additional ending symbol.
Posing $S=\sem{M,E}$, for every session type $S'$ 
with input/output labels in $\Gamma \cup \{E\}$
we have that $S'$ 
is compliant with $S$ if and only if $S' \sim \dual{S}$.
\end{restatable}

\noindent 
The type $\dual S$ behaves like a FIFO queue,
which simply returns the messages it has received from 
the partner (in the same order).
Hence a type simulating the finite control $T=\sem{M,q_f,E}$, 
for the same queue machine $M$ and additional ending symbol $E$ as above,
turns out to be compliant with $\dual S$
if and only if the final state $q_f$ is reachable in $M$
(remember that only the encoding of $q_f$ allows to reach $\Tend$).
This result is formalized in the next theorem (proof in Appendix \ref{subsec:queuemachines}).

\begin{restatable}{thm}{encodingQueue}\label{th:encodingQueue}
Let $M= (Q , \Gamma , \$ , s , \delta )$ be a queue machine,
  $q_f \in Q$, $E \not\in \Gamma$ the additional ending symbol.
Posing $T=\sem{M,q_f,E}$ and $S=\sem{M,E}$, we have that $T$ is compliant with $\dual S$
if and only if $q_f$ is reachable in $M$.
\end{restatable}
\noindent 
Notice that the above theorem formalizes a reduction from the reachability
problem in queue machines to the verification of compliance between
session types. Hence, we can already conclude that the 
compliance relation is undecidable.

We now combine Corollary \ref{cor:bisimilar}, Lemma \ref{lem:dualQueue} and
Theorem \ref{th:encodingQueue} to prove the undecidability of refinement.
Consider the two above types $T=\sem{M,q_f,E}$ and $S=\sem{M,E}$.
By Lemma \ref{lem:dualQueue} we have that $S$ is compliant only with $\dual S$ 
and its bisimilar types.
Given that bisimulation preserves compliance (Corollary \ref{cor:bisimilar}) 
we have that $T$ refines $S$ if and only if it is compliant with $\dual S$.
But the latter holds if and only if 
$q_f$ is reachable in $M$ (Theorem \ref{th:encodingQueue}).
In this way we reduce the reachability
problem in queue machines to the verification of refinement between
session types.
We formally state this result in the theorem below (proof in Appendix \ref{subsec:queuemachines}).

\begin{restatable}{thm}{correfinementundec}\label{thm:undec}
Let $M= (Q , \Gamma , \$ , s , \delta )$ be a queue machine,
  $q_f \in Q$, $E \not\in \Gamma$ the additional ending symbol.
Posing $T=\sem{M,q_f,E}$ and $S=\sem{M,E}$,  we have that $T \refine S$ if
and only if
$q_f$ is reachable in $M$.
\end{restatable}

\noindent 
As a direct consequence of the above theorem and the undecidability of reachability 
in queue machines, we can conclude that refinement (Definition~\ref{def:refine}) 
is also undecidable.
\begin{restatable}{cor}{subtypingUndecidable}\label{cor:subtype-undec}
Given two session types $T$ and $S$, it is in general undecidable to 
check whether $T \refine S$ holds.
\end{restatable}

\subsection{Controllability and its Decidability} 

Given a notion of compliance, controllability amounts to checking the
existence of a compliant partner (see, e.g., \cite{Loh08,Wei08,BZ09a}).
In our setting, a session type is \emph{controllable} if there exists
another session type with which it is compliant.

Checking for controllability algorithmically is not trivial as it
requires to consider infinitely many potential partners. For the
synchronous case, an algorithmic characterisation was studied
in~\cite{Padovani16}.
In the asynchronous case, the problem is even harder
because each of the infinitely many potential partners may generate an
infinite state computation (due to unbounded buffers): specifically this reflects in the proof of its algorithmic characterisation.
The main contribution of this subsection is, thus, to give an algorithmic
characterisation of controllability in the asynchronous setting that is proven to be sound and complete.
Doing this is important because 
controllability is an essential ingredient for defining 
fair asynchronous subtyping, see Section~\ref{sec:subtyping}.

\iflong
\begin{figure}[t]\centering
  \begin{tikzpicture}[mycfsm, node distance = 0.5cm and 0.9cm
    ,scale=1, every node/.style={transform shape}]
    \node[state, initial, initial where=left] (s0) {$0$};
    \node[state, right =of s0] (s1) {$1$};
    \node[state, right=of s1] (s2) {$2$};
    \node[state, right=of s2] (s3) {$3$};
    \node[state, below right=of s2] (s4) {$4$};
\path 
    (s0) edge [bend left]  node [above] {$\rcv{l_1}$} (s1)
    (s1) edge [bend left] node [below] {$\rcv{l_3}$} (s0)
    (s1) edge node {$\rcv{l_2}$} (s2)
    (s2) edge node {$\snd{l_4}$} (s3)
    (s2) edge node [below] {$\snd{l_5}$} (s4)
    (s4) edge [loop right] node [right]  {$\snd{l_6}$} (s4)
    ; 
  \end{tikzpicture}
  \caption{Example of an uncontrollable session type, see Example~\ref{ex:uncontrol-ex}.}\label{fig:uncontrol-ex}
\end{figure}  
\fi

\begin{defi}[Characterisation of Controllability, $T \contro$]\label{def:controllability}
  We preliminarly define judgement $T\unicontro$ for session types $T$ having single input choices, i.e.\ such that all their input branches include just one possible choice. $T\unicontro$ is defined
  inductively as follows:
  \begin{mathpar}
    \inferrule
    {\,}
    {\Tend \unicontro}
    
    \inferrule
    {
      \Tend \in T
      \and 
      T\subs{\Tend}{\Tvar{t}} \unicontro
    }
    {\Trec t.T \unicontro}

    \inferrule
    {
      T\unicontro
    }
    { \Tbranchsingle{l}{T}  \unicontro}

        \inferrule
    {
      \forall i \in I \dst T_i \unicontro
    }
    {  \Tselect{l}{T}   \unicontro}
  \end{mathpar}
  where $ \Tend \in T$ holds if $\Tend$ occurs in $T$.

We now define predicate $T \contro$ over arbitrary session types $T$ as follows. $T \contro$ holds true if and only if there exists $T'$ such that:

\begin{enumerate}[($i$)]
\item $T'$ is obtained from $T$ by syntactically replacing every
  input choice $\Tbranch{l}{T}$ occurring in $T$ with a term
  $\Tbranchsingle{l_j}{T'_j}$ (with $j \in I$). Formally this is denoted by $T \; \mathsf{sin} \; T'$, where $\mathsf{sin}$ (standing for ``single input choices'') is defined as the smallest relation over session types such that:
  \begin{mathpar}
    \inferrule
    {\,}
    {\Tend \; \mathsf{sin} \; \Tend}

    \inferrule
    {\,}
    {\Tvar{t} \; \mathsf{sin} \; \Tvar{t}}
    
    \inferrule
    {
      T \; \mathsf{sin} \; T'
    }
    {\Trec t.T \; \mathsf{sin} \; \Trec t.T'}

    \inferrule
    {
      T_j \; \mathsf{sin} \; T'_j
      \and
      j \in I
    }
    { \Tbranch{l}{T} \; \mathsf{sin} \; \Tbranchsingle{l_j}{T'_j}}

        \inferrule
    {
      \forall i \in I \dst T_i \; \mathsf{sin} \; T'_i
    }
    {  \Tselect{l}{T} \; \mathsf{sin} \; \Tselect{l}{T'\!\!} }
  \end{mathpar}
In the following we use $\mathsf{sin}(T)$ to denote the set of single input choice types $T'$ such that $T \; \mathsf{sin} \; T'$.
\item $T' \unicontro$ holds true.
\end{enumerate}
\end{defi}
\noindent 
A type $T$ such that $T \contro$ is indeed controllable, in that
$\dual{T'}$, the dual of type $T'$ considered above, is compliant with $T$ (the predicate $\Tend\!\in\! T$ in the premise of 
the rule for recursion guarantees that a successful configuration is always reachable while looping).
Moreover the above definition
naturally yields a simple algorithm that
decides whether or not $T \contro$ holds for a type $T$, i.e., we
first pick a single branch for each input prefix syntactically
occurring in $T$ (there are finitely many of them) and then we
inductively check if $T'\unicontro$ holds.

\begin{exa}\label{ex:uncontrol-ex}
Consider the session type $T$ (see
  Figure~\ref{fig:uncontrol-ex} for a graphical representation):
  $$T = \Trec{t} .~\Tbra{l_1 : \Tbra{l_2 : \Tsel{l_4: \Tend, \ l_5 :
        \Trec{t'} . \Tsel{l_6 : \Tvar{t'}}}, \ l_3 : \Tvar{t} } } $$

  $T \contro$ does \emph{not} hold because it is not possible to
  construct a $T'$ as specified in
  Definition~\ref{def:controllability} for which $T' \unicontro$ holds.
In this case we have just two possible types $T'$ that can be obtained by input choice replacement: 
  $T' = \Trec{t} .~\Tbra{l_1 : \Tbra{l_3 : \Tvar{t} } } $
  and
  $T' = \Trec{t} .~\Tbra{l_1 : \Tbra{l_2 : \Tsel{l_4: \Tend, \ l_5 :
        \Trec{t'} . \Tsel{l_6 : \Tvar{t'}}} } } $. For the former $T' \unicontro$ does not hold because there is no 
$\Tend$ in the body of $\Trec{t}$; for the latter, instead, $T' \unicontro$ does not hold because there is no 
$\Tend$ in the body of $\Trec{t'}$.

As a result of Theorem~\ref{th:controlcharacter} (below), there is no
session type $S$ that is compliant with $T$. Hence $T$ is not
controllable.
\end{exa}

The following theorem shows that the judgement $T \contro$, as defined
above, precisely characterises controllability (i.e., the existence of
a compliant type). Its proof is rather complex (it requires introducing significant auxiliary technical machinery) and can be found in
Appendix \ref{subsec:controlcharacter}.

\begin{restatable}{thm}{thcontrolcharacter}\label{th:controlcharacter}
  $T \contro$ holds if and only if there exists a session type $S$ such that
  $T$ and $S$ are compliant.
\end{restatable}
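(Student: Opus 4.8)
The plan is to prove the two implications separately. For the ($\Rightarrow$) direction I exhibit an explicit compliant partner, namely $\dual{T'}$ where $T'$ is the single-choice resolution witnessing $T\contro$. For the ($\Leftarrow$) direction I assume a compliant $S$ and extract from the successful runs of $\cnfg{T}{\lempty}{S}{\lempty}$ a resolution $T'$ of $T$ for which I establish $T'\unicontro$.

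For the ($\Rightarrow$) direction, fix $T'$ with $T'\unicontro$. The first observation is that $T$ and $T'$ share \emph{exactly} the same output selections, since the resolution only shrinks each input branching $\Tbranch{l}{T}$ to a single branch $\Tbranchsingle{l_j}{T_j}$; hence in $\cnfg{T}{\word_1}{\dual{T'}}{\word_2}$ the two partners always agree on communications, as every label $T$ emits is accepted by the corresponding input branching of $\dual{T'}$, and the unique label emitted by $\dual{T'}$ (dual of a resolved input of $T'$) is always among the branches of the matching input of $T$. I would then characterise the reachable configurations by an invariant stating that the current pair of types corresponds to a subderivation of $T'\unicontro$ and that any pending queue holds exactly the labels the lagging side is scheduled to consume. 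The core is to show, by induction on the derivation of $T'\unicontro$, that from every such configuration a successful configuration is reachable: the output rule (premise $\forall i\in I$) guarantees that whichever branch $T$ internally selects leads to a $\unicontro$ continuation; the single-input rule lets the fixed message flow and drains the queue by one; and the recursion rule is precisely what tames unbounded buffering, since its side condition $\Tend\in T$ supplies, at every re-entry of the loop, an escape to $\Tend$, so that after draining any accumulated copies the computation follows $T\subs{\Tend}{\Tvar t}$ to a successful, empty-queue configuration. This yields that $\cnfg{T}{\lempty}{\dual{T'}}{\lempty}$ is a correct composition, so $T$ is controllable.

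For the ($\Leftarrow$) direction, assume $\cnfg{T}{\lempty}{S}{\lempty}$ is a correct composition and build the resolution $T'$ together with its derivation of $T'\unicontro$ by recursion on the syntax of $T$, matching each construct to the corresponding $\unicontro$ rule and using $S$ and the operational semantics to discharge the premises. When $\unf T=\Tend$ the base rule applies, since compliance forces $S$ to also reach $\Tend$ with empty queues. When $\unf T=\Tselect{l}{T}$, each output is an internal choice of $T$ and, as correct compositions are closed under reachability, every continuation $T_i$ remains correctly composed with a suitable (empty-queue) evolution of $S$ reachable on the way to success; the induction hypothesis then gives the $\forall i\in I$ premise of the output rule. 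When $\unf T=\Tbranch{l}{T}$, correctness guarantees that along some path to success $S$ emits a label $l_j$ that $T$ consumes; I fix this $j$ as the single branch for that input occurrence, and the continuation $T_j$ is again correctly composed, so the induction hypothesis applies. Recursion is absorbed by the loop-cutting built into the $\unicontro$ rule: when the term is a binder $\Trec t.U$, I first use reachability of success to locate an occurrence of $\Tend$ inside $U$, establishing $\Tend\in U$, and then recurse on $U\subs{\Tend}{\Tvar t}$; because crossing a binder eliminates its variable, no branch of the derivation ever re-enters that binder, which guarantees the constructed derivation is finite.

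The main obstacle is the ($\Leftarrow$) direction, and in particular reconciling the \emph{global} single-choice resolution (one branch fixed per syntactic input occurrence of $T$) with the fact that, operationally, the same input prefix may be revisited along a loop and consume \emph{different} labels on different iterations. The device that resolves this tension is the substitution $\subs{\Tend}{\Tvar t}$ in the recursion rule: since the check cuts every loop at its escape to $\Tend$, the derivation inspects only a single traversal of each loop body, so it suffices to fix the branch that $S$ drives toward that escape, and cross-iteration consistency is never demanded. Making this rigorous is the delicate technical core, and I expect it to require a strengthened inductive statement that threads the partner and the two queues: specifically, I must prove that compliance does yield the escape $\Tend\in U$ at every binder, and that each continuation I descend into is again \emph{controllable with empty queues} — which I obtain by selecting, from the configurations reachable toward a successful one, a point where the relevant continuation is paired with a genuine session type and empty buffers.
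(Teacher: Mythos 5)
Your $(\Rightarrow)$ direction takes essentially the paper's route (the compliant partner is the dual of the single-choice resolution, with success-reachability driven by the $\Tend$-escapes in the $\unicontro$ derivation), and is sound in approach.

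The $(\Leftarrow)$ direction, however, has a genuine gap, located exactly at the point you identify as the delicate core. Your construction fixes, at each syntactic input occurrence, the branch that $S$ drives at the \emph{first} visit of that occurrence along a path to success, and it discharges the recursion premise by establishing $\Tend \in U$ for the \emph{unresolved} body $U$. Neither step suffices. Consider
\[
T_0 \;=\; \Trec t.\Tbra{a : \Trec {t'}.\Tbra{b : \Tvar t,\ c : \Tvar {t'}},\ d : \Tend}
\qquad
S_0 \;=\; \Tsel{a : \Tsel{b : \Tsel{d : \Tend}}}.
\]
$T_0$ and $S_0$ are compliant: $S_0$ emits $a$, then $b$ (which returns $T_0$ to its outer loop), then $d$, and every configuration reachable from $\cnfg{T_0}{\lempty}{S_0}{\lempty}$ lies on this path to success. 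Your recursion notes that $\Tend$ occurs in the outer body (in branch $d$), then picks branch $a$ at the outer input (the label $S_0$ emits first) and branch $b$ at the inner input, yielding the resolution $T' = \Trec t.\Tbra{a : \Trec {t'}.\Tbra{b : \Tvar t}}$. But $T' \unicontro$ fails: the recursion rule of Definition~\ref{def:controllability} requires $\Tend$ to occur in the \emph{resolved} body, and your resolution has discarded precisely the branch ($d$) containing it. The unique resolution witnessing $T_0 \contro$ is $\Trec t.\Tbra{d : \Tend}$, which selects a label that $S_0$ drives only at its \emph{second} visit of that same syntactic input occurrence. So the substitution $\subs{\Tend}{\Tvar t}$ does not, as you claim, make cross-iteration consistency irrelevant: the single traversal inspected by the $\unicontro$ derivation must be the \emph{escaping} traversal, whereas your branch choice is pinned to the partner's first traversal, and the two can disagree.

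This mismatch is exactly why the paper does not argue by recursion on the syntax of $T$ paired with the partner's run. It reformulates types as equation sets, proves the predicate equivalent to an indexing condition in which each loop may escape either to $\Tend$ or to an \emph{earlier-indexed} variable, and then, given a compliant $S$, builds the resolution by repeatedly selecting, along paths to success in an enriched (decorated) transition system, the \emph{last} configuration exposing a recursion variable rather than the first, resolving inputs along the suffix of the path from that point. This ``last exposure'' device is what guarantees that the chosen branches are the ones driving toward the escape, and the variable indexing is what makes the resulting derivation well-founded. Any repair of your argument needs an analogue of this machinery: the branch selection must be coordinated with the eventual escape of each loop, not with the partner's first pass through it.
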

\noindent{\em Sketch of the proof.}
 The proof relies on expressing session types via a set of equations, where
 each of the variables $\vt$ is mapped to an equation.
In essence, from $T$ \emph{controllable} we show that there exists a
 compliant type by considering the type $\dual{T'}$ (in equation set
 notation), where $T'$ is the type with single input branches obtained
 from $T$ by input choice replacement. The more difficult part of the proof
 is the opposite implication, where from the existence of any
 compliant $S$ we show that $T$ is controllable. 
This amounts to show that it is possible to build $T'$ from the
 transition system of the correct composition
 $\cnfg{T}{\lempty}{S}{\lempty}$ (in equation set notation), which is,
 in general, infinite state.
\qed

\section{Fair Asynchronous Session Subtyping}\label{sec:subtyping}
In this section, we present our novel variant of asynchronous
subtyping which we call \emph{fair asynchronous subtyping}.

First, we need to define a distinctive notion of unfolding.
As anticipated in the introduction (see the discussion about Figure \ref{fig:runex-types}), our subtyping will
identify the type $\exGR$ as a subtype of $\exG$, with
$$   \exG \ = \ 
                 \Trec{t}.~\Tbra{ \exTM : \Tvar{t} ,  \exOVER : \Trec{t'}
                 . \Tsel{ \exTC : \Tvar{t'}  , \exDONE : \Tend } } 
$$
Following the approach taken in other definitions of asynchronous
subtyping~\cite{MY15, MariangiolaPreciness, CDY2014}, our definition
will require to decompose the candidate supertype 
($\exG$ in our case) as an input context, with holes filled with subtypes starting with output selections. Notice that the subterm 
$\Tsel{ \exTC : \Tvar{t'}  , \exDONE : \Tend }$ of $\exG$ which 
starts with an output selection is not a correct subtype because it 
contains the free occurrence of the recursive variable $\Tvar{t'}$. 
Our distinctive notion of unfolding, will replace such free variable
with its definition. More precisely, we define the
function $\selunf{T}$ to unfold type $T$ by replacing recursion
variables with their corresponding definitions only if they 
are guarded by an output selection.
In the definition, we  use the predicate $\oplus\mathit{g}(\Tvar t, T)$
which holds if all instances of variable $\Tvar t$ are output selection guarded,
i.e., $\Tvar t$ occurs free in $T$
only inside subterms ${\Tselect{l}{T}}$. 

\begin{defi}[Selective Unfolding]\label{def:selectiveUnf}
Given a term $T$, we define $\selunf T=$ 
\\
\[
  \begin{cases}
    \Tselect{l}{T} & \text{if } T = {\Tselect{l}{T}}
    \\
    \&\{l_i : \selunf{T_i}\}_{i\in I}   & \text{if } T = {\Tbranch{l}{T}}
    \\
    T'\subs{\Trec t.T'}{\Tvar{t}}
& \text{if }
    T = {\Trec t.T'}
    \text{, $\oplus\mathit{g}(\Tvar t, T')$}
    \\
    \Trec t.\selunf{\dec{T'}\subs{\Trec t.T'}{\Tvar{ \hat{t}}}} \
    \mathit{with}\ \Tvar{ \hat{t}}\ \mathit{fresh}   & \text{if }
    T = {\Trec t.T'}
    \text{,  $\lnot\oplus\mathit{g}(\Tvar t, T')$}
    \\
    \Tvar{t} & \text{if } T = {\Tvar t}
    \\
    \Tend & \text{if } T = {\Tend}
  \end{cases}
\]
where, $\dec{T'}$ is obtained from $T'$ by replacing the
free occurrences of $\Tvar t$ that are inside a subterm
$\Tselect{l}{S}$ of $T'$ by $\Tvar{\hat{t}}$.

\end{defi}

\begin{exa}
  Consider the type $T = \Trec t . \Tbra{ l_1 : \Tvar{t}, \, l_2 :
    \Tsel{l_3 : \Tvar{t}}  }$, then we have
\[ 
    \selunf{T} = 
    \Trec t . \Tbra{ l_1 : \Tvar{t}, \, l_2 :  \Tsel{l_3 : \Trec t .~\Tbra{ l_1 : \Tvar{t}, \, l_2 :
          \Tsel{l_3 : \Tvar{t}}  }  }}
  \]
  i.e., the type is only unfolded within output selection sub-terms.
Note that $\Tvar{\hat{t}}$ is used to identify where unfolding must take
  place, e.g.,  \\
$\dec{\Tbra{ l_1 : \Tvar{t}, \, l_2 : \Tsel{l_3 : \Tvar{t}} }} =
  {\Tbra{ l_1 : \Tvar{t}, \, l_2 : \Tsel{l_3 : \Tvar{\hat t}} }}$.
\end{exa}

The last auxiliary notation required to define our notion of
subtyping is that of \emph{input contexts}, which are used to record
inputs that may be delayed in a candidate super-type.
In contrast to previous works on asynchronous subtyping, these input
contexts may include recursive constructs.
\begin{defi}[Input Context]\label{def:context}
An input context $\mathcal A$ is a session type with several holes
  defined by the syntax: 
\[
    \mathcal A\ ::=\ \quad[\,]^k
    \quad\mid\qquad \Tbranch{l}{\mathcal A}
    \quad\mid\qquad
    \Trec t.{\mathcal A}
    \quad\mid\qquad
    \Tvar t 
  \]
where the holes $[\,]^k$, with $k \in K$, of an input context
  $\mathcal A$ are assumed to be pairwise distinct.
We assume that recursion is guarded, i.e., in an input context
  $\Trec t.{\mathcal A}$, the recursion variable $\Tvar t$ must occur
  within a subterm $\Tbranch{l}{\mathcal A}$.
  
  We write $\holes(\calA)$ for the set of hole indices in $\calA$.
Given a type $T_k$ for each $k \in K$, we write
  $\context {A} {T_k} {k\in K}$ for the type obtained by filling each
  hole $k$ in $\mathcal A$ with the corresponding $T_k$. 
\end{defi}

In contrast to previous works~\cite{MariangiolaPreciness,ESOP09,
  CDY2014, BravettiCZ17, sefm19, BCLYZ19}, these input contexts may
contain recursive constructs.
This is crucial to deal with examples such as
Figure~\ref{fig:runex-types}.

We are now ready to define the {\it fair asynchronous subtyping}
relation, written $\subtype$.
The rationale behind asynchronous session subtyping is that under
asynchronous communication it is unobservable whether or not an output is
anticipated before an input, as long as this output is executed along all
branches of the candidate super-type. Besides the usage of our new recursive input contexts
the definition of fair asynchronous subtyping differs from those in~\cite{MariangiolaPreciness, ESOP09, CDY2014, BravettiCZ17,
sefm19, BCLYZ19} in that controllability plays a fundamental role: the subtype is not required to mimic supertype inputs leading to uncontrollable behaviours.

\begin{defi}[Fair Asynchronous Subtyping,
  $\subtype$]\label{def:subtyping}

  A relation $\,\mathcal R\!\!\;$ on session types is a controllable subtyping relation
  whenever

$(T,S)\in\mathcal R$ implies: \begin{enumerate}
   \item \label{fs:end}
  
    if $T=\Tend$ then 
    $\unf S = \Tend$;

  \item \label{fs:rec} 
  
  if $T= \Trec t.{T'} $ then $(T'\subs{T}{\Tvar{t}}, S)\in\mathcal R$;

   \item \label{fs:external}
  
  if $T=\Tbranch{l}{T}$ then 
    $\unfold {}S = \Tbranchindex lSjJ$, $I \supseteq K$, and
    $\forall k\in K  \ldotp (T_k, S_k)\in\mathcal R$,
where $K = \{ k \in J \; | \; S_k \text{ is controllable} \}$;


\newcommand{\TselectindexCOMPACTED}[4]{\oplus\{{#1}_{#3}\!:\!{#2}_{#3}\}_{#3\in #4}}

   \item \label{fs:internal}
if $T=\Tselect{l}{T}$ then 
$\selunf {S} \! = \! \context {A} {\TselectindexCOMPACTED l{S_k}{i}{I}} {k\in K}$ and
      $\forall i\!\in \!I.\,
      (T_i,\context {A} {{S_{ki}}} {k\in K})\!\in\!\mathcal R$.
    \end{enumerate}
    $T$ is a controllable subtype of $S$ if there is a controllable
    subtyping relation $\mathcal R$ s.t.\ 
    $(T,S) \, \in \, \mathcal R$.
    
    \noindent
    $T$ is a \emph{fair asynchronous subtype} of $S$, written
    $T \, \subtype \, S$, 
    whenever: $S$ controllable implies that $T$  is a controllable subtype of $S$.
\end{defi}  
Notice that the top-level check for controllability in the above definition is consistent
with the inner controllability checks performed in Case $(3)$.

\paragraph{Subtyping simulation game}
Session type $T$ is a fair asynchronous subtype of $S$ if $S$ is not
controllable or if $T$  is a controllable subtype of $S$.
Intuitively, the above co-inductive definition says that it is
possible to play a simulation game between a subtype $T$ and its
supertype $S$ as follows.
Case~\eqref{fs:end} says that if $T$ is the $\Tend$ type,
then $S$ must also be $\Tend$.
Case~\eqref{fs:rec} says that if $T$ is recursively defined, then
$T$ is replaced by the unfolding of its definition, $S$ is left unchanged and the simulation game continues.
Case~\eqref{fs:external} says that if $T$ is an input branching, then
the sub-terms in $S$ that are controllable can reply by inputting at
most 
some of the labels $l_i$ in the branching (contravariance of
inputs), and the simulation game continues (see
Example~\ref{ex:contra-variance}).
Case~\eqref{fs:internal} says that if $T$ is an output selection,
then $S$ can reply by outputting \emph{all} the labels $l_i$ in the
selection, possibly after executing some inputs, after which the
simulation game continues.
We comment further on Case~\eqref{fs:internal} with
Example~\ref{ex:subtyping-finite}.

\begin{exa}\label{ex:contra-variance}
  Consider 
  $T = \Tbra{l_1 : \Tend, \ l_2 : \Tend}$
  and
  $S = \Tbra{l_1 : \Tend, \ l_3: \Trec{t} . \Tsel{l_4: \Tvar{t}}}$.
We have $T \subtype S$. 
Once branch $l_3$, that is
  uncontrollable, is removed from $S$, we can apply contravariance for input branching.  We have
  $ I = \{1, 2\} \supseteq \{1\} = K$ in
  Definition~\ref{def:subtyping}.
\end{exa}

\begin{exa}\label{ex:subtyping-finite}
  Consider $\exG$ and $\exGR$ from Figure~\ref{fig:runex-types}.
For the pair $(\exGR, \exG)$, we apply
  Case~\eqref{fs:internal} of Definition~\ref{def:subtyping} for which we
  compute
\[ \selunf {\exG}= \calA [
    \Tsel{ \exTC : \Trec{t'}
      . \Tsel{ \exTC : \Tvar{t'}  , \exDONE : \Tend }  , \exDONE : \Tend
    }]
  \]
  with $\calA =
  \Trec{t}.\Tbra{ \exTM : \Tvar{t} ,  \exOVER : [\,]^1} $.
Observe that $\calA$ contains a recursive sub-term, such contexts
  are not allowed in previous works~\cite{MariangiolaPreciness,ESOP09,
    CDY2014}.
  
  The use of selective unfolding makes it possible to express $\exG$
  in terms of a \emph{recursive} input context $\calA$ with holes
  filled by types (i.e., closed terms) that start with an output
  prefix.
Indeed selective unfolding does not unfold the recursion variable
  $\vt$ (\emph{not} guarded by an output selection), which
  becomes part of the input context $\calA$.
Instead it unfolds the recursion variable $\vt'$ (which is guarded
  by an output selection) so that the term that fills the hole, which
  is required to start with an output prefix, is a closed term.

  Case~\eqref{fs:internal} of Definition~\ref{def:subtyping} requires
  us to check that the following pairs are in the relation:
  ($i$)     $( \exGR , \calA[ \Trec{t'}  . \Tsel{ \exTC : \Tvar{t'} , \exDONE :
    \Tend } ])$ and
  ($ii$)  $( \Trec{t'} . 
  ~\Tbra{ \exTM : \Tvar{t'} ,  \exOVER : \Tend } , \calA[ \Tend ])$.
Observe that $\exG  = \calA[ \Trec{t'}  . \Tsel{ \exTC : \Tvar{t'} , \exDONE :
    \Tend } ]$.
Hence, we have $\exGR \leq \exG$ with 
\[
    \mathcal{R} \!  = \!
    \left\{
      (\exGR, \exG), 
      (\Tend, \Tend),  
      (\Trec{t'} . 
      \Tbra{ \exTM \! : \Tvar{t'} ,  \exOVER \! : \Tend }
      , \Trec{t} .
      \Tbra{ \exTM \! : \Tvar{t} ,  \exOVER \! : \Tend }
      )
    \right\}
  \]
  and $\mathcal{R}$ is a controllable subtyping relation.

\end{exa}

We show that fair asynchronous subtyping is sound w.r.t.\
fair refinement.
In fact, fair asynchronous subtyping can be seen as a sound
coinductive characterisation of fair refinement.
Namely this result gives an operational justification to the 
syntactical definition of fair asynchronous session subtyping.
Note that $\subtype$ is not complete w.r.t.\ $\refine$, see
Example~\ref{ex:sub-not-complete}.

The proof of soundness of fair asynchronous subtyping w.r.t.\
fair refinement is rather complex and can be found in Appendix \ref{app:soundnessProof},
here we report the two main results and a sketch of their proofs.

\begin{restatable}{prop}{subtypingPropSoundness}\label{prop:pathTuSuccess}
Given two session types $T$ and $S$, 
if $T \subtype S$ then,
for every $\word$, $R$, and $\word_R$ such that 
$\cnfg{S}{\word}{R}{\word_R}$ is a correct composition, 
there exist $T'$, $\word'$, $R'$, and $\word_R'$
such that $\cnfg{T}{\word}{R}{\word_R} \trans{}^* \cnfg{T'}{\word'}{R'}{\word_R'}$
and $\cnfg{T'}{\word'}{R'}{\word_R'}\surd$.
\end{restatable}
\noindent{\em Sketch of the proof.}
Given that $\cnfg{S}{\word}{R}{\word_R}$ is a correct composition,
there exist $S'$, $\word''$, $R''$, and $\word_R''$ such that 
$\cnfg{S}{\word}{R}{\word_R} \trans{}^* \cnfg{S'}{\word''}{R''}{\word_R''}$
and $\cnfg{S'}{\word'}{R''}{\word_R''}\surd$.
The thesis is proved by induction on the length of this sequence
of transitions.

If the length is 0, then $\cnfg{S}{\word}{R}{\word_R}\surd$,
that implies $\unf{S}=\Tend$, that also implies $\unf{T}=\Tend$
(because $T \subtype S$), from which we have $\cnfg{T}{\word}{R}{\word_R}\surd$.

If the length is greater than 0, we proceed by case analysis on the
 first possible transition $\cnfg{S}{\word}{R}{\word_R} \trans{} \cnfg{S''}{\word'''}{R'''}{\word_R'''}$.

If the transition is inferred by $R$ it is sufficient to observe that
$S''=S$ and $\cnfg{T}{\word}{R}{\word_R} \trans{} \cnfg{T}{\word'''}{R'''}{\word_R'''}$,
and then apply the inductive hypothesis because $\cnfg{S''}{\word'''}{R'''}{\word_R'''}$
is a correct composition in that it is reachable from a correct composition.

We now consider that the transition is inferred by $S$.\\
There are three possible cases:
\begin{enumerate}
\item $\unf{S}=\Tselect{l}{S}$,
\item $\unf{S}=\Tbranch{l}{S}$ and $T$ starts with
an input branching (i.e., $\unf{T}=\&\{l_j:T_j\}_{j\in J}$), 
\item $\unf{S}=\Tbranch{l}{S}$ and $T$ starts with
an output branching (i.e., $\unf{T}=\oplus\{l_j:T_j\}_{j\in J}$).
\end{enumerate}
\noindent 
In the first two cases we have that
the above initial transition is 
$\cnfg{S}{\word}{R}{\word_R} \trans{} \linebreak \cnfg{S_i}{\word'''}{R'''}{\word_R'''}$, for
some $i \in I$. 
Given that $T\subtype S$, 
it is possible to show that $i \in J$, that $T_i \subtype S_i$, and also
$\cnfg{T}{\word}{R}{\word_R} \trans{} \cnfg{T_i}{\word'''}{R'''}{\word_R'''}$.
Then we can apply the inductive hypothesis because $T_i \subtype S_i$
and $\cnfg{S_i}{\word'''}{R'''}{\word_R'''}$
is a correct composition.

In the third case, given that $T \subtype S$,  and $S$ is controllable,
we have that 
$\selunf {S} = \context {A} {\Tselectindex l{S_k}{i}{J}} {k\in K}$,
and $\unf{T}=\oplus\{l_j:T_j\}_{j\in J}$
with $T_j \subtype \context {A} {S_{kj}} {k\in K}$, for every $j \in J$.
We first observe that the sequence of transitions 
$\cnfg{S}{\word}{R}{\word_R} \trans{}^* \cnfg{S'}{\word''}{R''}{\word_R''}$,
with $\cnfg{S'}{\word''}{R''}{\word_R''}\surd$,
includes at least one output selection $l_j$ executed
by one of the output selections filling the holes in $\ctx{A}$.
This label $l_j$ is the first one emitted by the l.h.s. type
after it has executed input branchings in $\ctx{A}$.
We have that the same sequence of transitions, excluding the output 
of $l_j$, can be executed from the configuration 
$\cnfg{\context {A} {S_{kj}} {k\in K}}{\word}{R}{\word_R \append l_j}$.
Such a sequence is $\cnfg{\context {A} {S_{kj}} {k\in K}}{\word}{R}{\word_R \append l_j} 
\trans{}^* \cnfg{S'}{\word''}{R''}{\word_R''}$,
with $\cnfg{S'}{\word''}{R''}{\word_R''}\surd$; notice that it is shorter than the 
above one.
We now consider $\cnfg{T}{\word}{R}{\word_R} \trans{} \cnfg{T_i}{\word}{R}{\word_R \append{l_j}}$.
We can now apply the inductive hypothesis on the shorter sequence 
$\cnfg{\context {A} {S_{kj}} {k\in K}}{\word}{R}{\word_R \append l_j} 
\trans{}^* \cnfg{S'}{\word''}{R''}{\word_R''}$, because $T_j \subtype \context {A} {S_{kj}} {k\in K}$
(and because it is possible to prove that  
$\cnfg{\context {A} {S_{kj}} {k\in K}}{\word}{R}{\word_R \append l_j}$
is also a correct composition, see Proposition \ref{prop:anticipation} in Appendix \ref{app:soundnessProof}).
\qed

\begin{restatable}{thm}{subtypingSoundness}\label{thm:soundness-refine-sub}
Given two session types $T$ and $S$, 
if $T \subtype S$ then $T \refine S$.
\end{restatable}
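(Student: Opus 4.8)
The plan is to prove the soundness statement $T \subtype S \implies T \refine S$ by unfolding the definition of refinement. Fix session types $T$ and $S$ with $T \subtype S$, and take an arbitrary $S'$ that is compliant with $S$; I must show $T$ and $S'$ are compliant. First I would dispose of the trivial case: if $S$ is not controllable, then since $S$ is compliant with $S'$ (and compliance requires the existence of a compliant partner), $S$ is in fact controllable, so this case cannot arise; thus I may assume $S$ is controllable, whence $T \subtype S$ gives that $T$ is a \emph{controllable subtype} of $S$, i.e.\ there is a controllable subtyping relation $\mathcal R$ with $(T,S)\in\mathcal R$. The substance of the proof is then to transfer compliance of $\cnfg{S}{\lempty}{S'}{\lempty}$ to compliance of $\cnfg{T}{\lempty}{S'}{\lempty}$ using the fact that $T$ simulates $S$ in the sense encoded by $\mathcal R$.

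The technical heart is a \emph{relation between configurations} that lifts $\mathcal R$ to the level of the asynchronous semantics with queues. The key phenomenon is that Case~\eqref{fs:internal} allows $T$ to anticipate outputs that $S$ only emits after some delayed inputs recorded in an input context $\calA$. So a single pair $(T,S)\in\mathcal R$ does not correspond to $T$ and $S$ having identical queue contents; rather, $T$ may be ``ahead'' on its output queue (toward $S'$) and ``behind'' on pending inputs sitting inside an input context. I would therefore define a coinductive relation $\mathcal{S}$ on configurations, parametrised by $\mathcal R$, that pairs $\cnfg{T_1}{\word_1}{S'}{\word_2}$ with $\cnfg{S_1}{\word_1'}{S'}{\word_2'}$ whenever $(T_1, \calA[\dots])$-style bookkeeping relates the two, tracking (i) the extra outputs $T$ has already pushed onto $S'$'s queue and (ii) the inputs still buffered for $S$ inside an input context. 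The goal is to prove two simulation-style invariants: any reduction of the $T$-configuration is matched by zero-or-more reductions of the $S$-configuration staying inside $\mathcal S$, and — crucially for the liveness content of compliance — from any $T$-configuration in $\mathcal S$ one can reach a successful configuration whenever the corresponding $S$-configuration can.

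The main obstacle, and where I expect the real work to lie, is precisely this \emph{preservation of reachability of success} (the liveness half of compliance) rather than mere safety. Compliance in Definition~\ref{def:compliance} is a strong should-testing/full-fairness property: from \emph{every} reachable configuration a successful one must be reachable. I would argue that given any reachable $T$-side configuration, I can (a) map it back to a reachable $S$-side configuration via $\mathcal S$, (b) use compliance of $S$ and $S'$ to obtain a computation to success on the $S$-side, and (c) \emph{replay} that computation on the $T$-side, using the structure of $\mathcal R$ — in particular Case~\eqref{fs:internal}'s requirement that the anticipated output occur along \emph{all} branches, and Case~\eqref{fs:external}'s controllability filter $K$ — to show the anticipated outputs and the filtered inputs do not block reaching $\Tend$ with empty queues. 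Here the role of controllability is essential: the inputs that $T$ declines to mimic (those outside $K$) lead to uncontrollable $S$-subterms, which therefore can never lie on a path to success in the $S$-computation, so ignoring them loses no reachable-success witnesses. Managing the alignment of the anticipated outputs with the delayed inputs recorded in $\calA$ — i.e.\ showing that the pending inputs buffered inside the context can always eventually be consumed so that $T$'s queue drains to empty — is the delicate bookkeeping I would need to carry out carefully, most likely by induction on the structure of the input context together with the guardedness and $\Tend\in T$ conditions that Definition~\ref{def:controllability} guarantees for controllable types.
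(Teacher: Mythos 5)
Your strategy is sound and it shares the three pillars of the paper's proof: the vacuous treatment of an uncontrollable $S$; the controllability argument for ignoring the supertype's inputs outside $K$ (the paper's Lemma~\ref{lemma:controllable}: any correct composition has controllable components, so a compliant partner never drives $S$ into an uncontrollable continuation); and a liveness transfer obtained by replaying the $S$-side path to success on the $T$-side (the paper's Proposition~\ref{prop:pathTuSuccess}, an induction on the length of that path). Where you genuinely diverge is the middle layer. You lift $\mathcal R$ to a coinductive relation on configurations whose queues \emph{differ}, with bookkeeping for outputs that $T$ has pushed ahead and inputs still pending for $S$. The paper defines no such relation: its per-step invariant keeps the queues \emph{identical} on both sides and instead \emph{updates the supertype} --- when $T$ anticipates an output $l_m$, the matching supertype becomes $\context{A}{S_{km}}{k\in K}$, i.e.\ the input context with its selection already resolved --- so the whole argument stays at the level of the type judgment $T'\subtype S''$. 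The lemma that makes this possible is Proposition~\ref{prop:anticipation} (anticipating an output preserves correctness of a composition), proved via transition-reordering arguments (Lemmas~\ref{lemma1}--\ref{lemma3}), and the same proposition is the engine inside Proposition~\ref{prop:pathTuSuccess}. Your route is the standard simulation-with-asynchrony-gap argument from the asynchronous-subtyping literature and can be made to work, but note two things. First, your second ``invariant'' (reachability of success from related configurations) is a liveness property and cannot be discharged coinductively; you would end up performing the same induction on the length of the $S$-side success path that the paper does. Second, your ``delicate bookkeeping'' step is exactly a reconstruction of Proposition~\ref{prop:anticipation}, and for it induction ``on the structure of the input context'' is not the right measure: the pending inputs and the anticipated output must be commuted within a \emph{computation}, which the paper handles by induction on computation length, reordering transitions so that the anticipated output moves to the front of the $S$-side run. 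So the paper's decomposition buys a single reusable commutation lemma and no new configuration-level relation, at the price of re-establishing the subtyping judgment with a freshly constructed supertype after every step; yours buys conceptual familiarity at the price of defining and maintaining the bookkeeping relation, with the same hard kernel inside.
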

\noindent{\em Sketch of the proof.}
If $S$ is not controllable, then the thesis trivially holds
because $T \refine S$ for every $T$.

Consider now $S$ controllable.
The thesis is proved by showing that if $T \subtype S$ then,
for every $\word$, $R$, and $\word_R$ such that 
$\cnfg{S}{\word}{R}{\word_R}$ is a correct composition, we have that
the following holds:

  \noindent
if $\cnfg{T}{\word}{R}{\word_R} \rightarrow \cnfg{T'}{\word'}{R'}{\word_R'}$
then there exists $S'$ such that $T' \subtype S'$ and
$\cnfg{S'}{\word'}{R'}{\word_R'}$ is a correct composition.

The above implies the thesis because, given $T \subtype S$ and the 
correct composition $\cnfg{S}{\lempty}{R}{\lempty}$,
if there exists a computation 
$\cnfg{T}{\lempty}{R}{\lempty} \trans{}^* 
\cnfg{T'}{\word'}{R'}{\word_R'}$,
we can apply the above result
on each step of the computation 
to prove that there exists $S'$ such that $T' \subtype S'$
and $\cnfg{S'}{\word'}{R'}{\word_R'}$ is a correct composition.
Then, by Proposition \ref{prop:pathTuSuccess}, we have that 
there exist $T''$, $\word''$, $R''$, and $\word_R''$
such that $\cnfg{T'}{\word'}{R'}{\word_R'} \trans{}^* \cnfg{T''}{\word''}{R''}{\word_R''}$
and $\cnfg{T''}{\word''}{R''}{\word_R''}\surd$.
\qed

\begin{exa}\label{ex:sub-not-complete}
  Let $T = \Tsel{l_1 : \Tbra {l_3: \Tend}}$ and $S = \Tbra{l_3: \!
    \Tsel{l_1 : \Tend, \ l_2 : \Tend}}$.
We have $T \refine S$, but $T$ is not a fair asynchronous subtype of
  $S$ since $\{l_1\} \neq \{l_1,l_2\}$, i.e., covariance of outputs is
  not allowed.
\end{exa}

\subsection{Undecidability of fair asynchronous session subtyping}

\newcommand{\semTcont}[1]{{\{\!\!\{{#1}\}\!\!\}}}
\newcommand{\BsemT}[2]{{[\![\![{#1}]\!]\!]}^{#2}}

In this section we address the problem of checking fair asynchronous
session subtyping, and we show that it is actually undecidable. 
We have already proved that the fair refinement relation $\refine$
is undecidable (Corollary \ref{cor:subtype-undec}) and 
that the fair asynchronous subtyping relation $\subtype$ is a subset of the refinement relation 
$\refine$ (Theorem \ref{thm:soundness-refine-sub}). From these results we cannot 
immediately conclude that fair asynchronous subtyping is also undecidable;
hence we need a specific proof for this additional undecidability result. 
The approach we take has some commonalities with the
one adopted in Section~\ref{subsec:undecidabilityRefinement}, as 
we also proceed by reduction from undecidability properties 
in queue machines.
Nevertheless, there are several 
relevant differences. First, we consider termination in queue machines
instead of state reachability. Then we need to slightly modify the encodings 
of both the finite control and of the queue of the considered machine.
And finally, the proof of correctness of the encoding is significantly 
different as subtyping is defined on the syntax of types,
while refinement is defined on the operational semantics of 
(the parallel composition of) session types.

As anticipated above, we reduce the problem of checking
the (non)termination of a queue machine to the problem of checking
subtyping between two session types.
In Definition~\ref{queue_computation} we have defined 
$(q,\gamma) \rightarrow _{M} (q',\gamma')$ denoting
computation steps of a queue machine. We have that one queue machine $M$
terminates if and only if there exists a configuration with
empty queue that is reachable from the initial configuration, i.e., 
$(s,\$) \rightarrow _{M}^* (q',\lempty)$. This holds because the 
transition function is total in queue machines, hence if the queue
is not empty there is always a possible transition. In case
the queue machine does not terminate, we have that 
$(q,\$) \rightarrow _{M}^* (q',\gamma')$ implies
the existence of an additional computation step 
$(q',\gamma') \rightarrow _{M} (q'',\gamma'')$.

Given a queue machine $M=(Q , \Gamma , \$ , s , \delta )$ 
and an additional ending symbol $E \not\in \Gamma$,
we now define the types $T = \semthree{M,\_,E}$ and $S = \semthree{M,E}$
in such a way that $M$ 
does not terminate if and only if $T \subtype S$.
The encodings $\semthree{M,\_,E}$ and $\semthree{M,E}$ 
are similar to the corresponding encodings $\sem{M,q_f,E}$ and $\sem{M,E}$ 
defined in Definitions \ref{def:controlEncoding} and  \ref{def:queueenc},
but with the following differences:
\begin{itemize}
\item
there is no specific target state $q_f$;
\item
the encoding $\semthree{M,E}$
starts with an input branching with only one branch
labeled with the initial queue symbol $\$$ and continuation
corresponding to the producer/consumer $\sem{M,E}$
as defined in Definition \ref{def:queueenc};
\item
in order to be a potential subtype of $S = \semthree{M,E}$, all of
the output selections in $T = \semthree{M,\_,E}$ must have branchings 
for all of the symbols in $\Gamma \cup \{E\}$ (because these are the labels in the
output selection in the potential supertype); among all of these branchings only one
will be consistent with the encoding of the finite control, while 
the continuations in the other branchings are guaranteed to be always
good subtypes (this is guaranteed by
a type that nondeterministically produces symbols,
and that after producing the ending symbol $E$ it is able to recursively consume 
all possible symbols in $\Gamma$, and then become $\Tend$ after consuming the ending symbol $E$).
\end{itemize}

\begin{defi}[New Finite Control Encoding]\label{def:controlEncoding2}
  Let $M= (Q , \Gamma , \$ , s , \delta )$ be a queue machine
  and let $E \not\in \Gamma$ be the additional ending symbol.
  We define $\semthree{M,\_,E}$ as follows:
  $$
  \semthree{M,\_,E}\ = \semTt{s}{\emptyset}
  $$
  with, given $q\in Q$ and $\mathcal S\subseteq Q$,  
  $\semTt{q}{\mathcal S}$ is defined as follows:
  $$
    \begin{array}{l}
      \semTt{q}{\mathcal S} = 
      \left \{
      \begin{array}{l}
        \Trec{q}.\Tbranchset{A}{\semTcont{B^A_1\cdots B^A_{n_A}}_{q'}^{\mathcal S \cup \{q\}}}{\Gamma}
        \\[1mm]
        \hspace{0.9cm}\text{if }q\not\in {\mathcal S} \text{ and } \delta(q,A)=(q',B^A_1\cdots B^A_{n_A})
        \\
        \\
        \Tvar{q}\qquad \mbox{if $q \in {\mathcal S}$}
      \end{array}
      \right.
    \end{array}
  $$
  where
  $$
  \begin{array}{l}
           \semTcont{B_1\cdots B_{m}}_{r}^{\mathcal T} \!=\! \left\{\!\!
          \begin{array}{ll}
            \!\BsemT{r}{\mathcal T} 
            & \text{if }m=0\\
\begin{array}{ll}
            \!\!\!\!\oplus & 
            \!\!\!\!\big( \big\{B_1:  \semTcont{B_2\ldots B_m}_{r}^{\mathcal
            T}\big\} \cup
            \\
            & \! \big\{{A:V}\big\}_{A\in\Gamma\setminus\{B_1\}} \cup
            \{E: V'\}
             \big) 
            \end{array} 
            & \text{otherwise}
          \end{array}
              \right.
  \end{array}
  $$
  with $V=\Trec{\Tvar t}. \big( \Tselectset{A}{\Tvar t}{\Gamma} \cup \{E:V'\} \big)$
  and $V'=\Trec{\Tvar t}. \big( \Tbranchset{A}{\Tvar t}{\Gamma} \cup \{E:\Tend\} \big)$.
\end{defi}

\begin{defi}[New Producer/consumer]\label{def:queueenc2}
Let $M=
  (Q , \Gamma , \$ , s , \delta)$ be a queue machine and $E \not\in \Gamma$ be 
  the ending symbol.
  We define $\semthree{M,E}$ as
  $$
  \semthree{M,E} = \&\{\$ : \sem{M,E}\}
  $$
  with $\sem{M,E}$ as defined in Definition \ref{def:queueenc}.
\end{defi} 

We now prove that the above two types $T=\semthree{M,\_,E}$ and $S=\semthree{M,E}$
are such that $T \subtype S$ if and only if the machine $M$ does not terminate.
We report a sketch of the proof, the details are in Appendix \ref{app:undecidabilitySubtyping}.

\begin{restatable}{thm}{undecidabilitySubtyping}\label{th:undecidabilitySubtyping}
Given a queue machine $M$ and the ending symbol $E$,
consider $T=\semthree{M,\_,E}$ and $S=\semthree{M,E}$. We have that $T \subtype S$ if and only if 
$M$ does not terminate.
\end{restatable}

\noindent{\em Sketch of the proof.}
The only-if part is proved by considering the contrapositive statement,
that is, if the queue machine $M$ terminates then $T \not\!\!\!\,\subtype S$.
If the queue machine terminates, we have that $(s,\$) \rightarrow _{M}^* (q',\lempty)$.
Consider now the pair of types $(T,S)$ with $T=\semthree{M,\_,E}$ and $S=\semthree{M,E}$.
If, by contradiction, $T \subtype S$, since $S$ is controllable
(it is compliant, e.g., with its dual) we have that 
by Definition \ref{def:subtyping} there exists a fair asynchronous
subtyping relation $\mathcal R$ such that $(T,S) \in \mathcal R$.
By applying the definition of fair asynchronous
subtyping relation we have that $\mathcal R$ will have to include other pairs of
types $(T'',S'')$ corresponding with configurations $(q'',\gamma'')$ 
reachable in the queue machine $M$.
The types $T''$ represent the corresponding state $q''$,
while the types $S''$ represent the corresponding queue $\gamma''$.
Consider now the pair of types $(T_f,S_f)$ corresponding with the final configuration $(q',\lempty)$:
$T_f$ starts with an input branching (representing the
willingness to consume one symbol from the queue)
while $S_f$ starts with an output selection
(in fact, the representation of the queue starts with
a sequence of input branchings, one for each symbol in the queue, 
followed by an output selection and, given that it represents the empty queue,
the initial sequence of input branching is absent). 
Summarising, we have that $(T_f,S_f) \in \mathcal R$, $T_f$ starts with an input 
branching, and $S_f$ with an output selection: hence there is a pair 
in $\mathcal R$ which does not satisfy the item for input selection 
in Definition~\ref{def:subtyping}, thus contradicting the initial
assumption about $\mathcal R$ being a fair asynchronous subtyping
relation.

The if part is proved by showing that if the queue machine $M$ does not terminate
then there exists a fair asynchronous subtyping relation $\mathcal R$
that contains the pair $(T,S)$, hence $T \subtype S$.
There are two kinds of pairs in $\mathcal R$: (i) the pairs discussed in the above
only-if part of the proof that corresponds to the path in the subtyping
simulation game that reproduces the computation of the queue machine $M$,
and (ii) other pairs corresponding to alternative paths.
Here, we only comment the new pairs of kind (ii).
The l.h.s. types in these pairs are generated by
considering the alternative branches in the types  
$\semTcont{B_1\cdots B_{m}}_{r}^{\mathcal T}$ in Definition~\ref{def:controlEncoding2},
namely those involving the types denoted with $V$ and $V'$.
These types are of two kinds: (a) they are able to recursively perform all possible
outputs until the label $E$ is selected (type $V$),
or (b) they are able to recursively perform all possible inputs until the label 
$E$ is selected (type $V'$). 
All of these
pairs satisfy the constraints in Definition \ref{def:subtyping}
(under the assumption that also a final pair $(\Tend,\Tend)$ belongs
to $\mathcal R$). 
Summarising, there exists a fair asynchronous subtyping relation  $\mathcal R$
such that $(T,S) \in \mathcal R$ in that this is 
the first pair of the kind (i) above. Hence we can conclude that $T \subtype S$.
\qed

\medskip
\noindent 
As a direct consequence of the above theorem and the undecidability of termination 
in queue machines, we can conclude that fair asynchronous
subtyping (Definition~\ref{def:subtyping}) 
is also undecidable.

\begin{restatable}{cor}{subtypingUndecidable}\label{thm:subtype-undec}
Given two session types $T$ and $S$, it is in general undecidable to 
check whether $T \subtype S$.
\end{restatable}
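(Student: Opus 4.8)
The plan is to prove undecidability of fair asynchronous subtyping by reduction from state reachability (equivalently, the termination/halting problem) in queue machines, exactly as was done for fair refinement in Theorem~\ref{cor:undec}. Given a queue machine $M = (Q, \Sigma, \Gamma, \$, s, \delta)$ and a target state $q_f$, I would reuse (or lightly adapt) the encoding of queue-machine computations into session types already constructed for the refinement undecidability result. The key idea is that a queue machine's FIFO queue is naturally modelled by the asynchronous FIFO buffer of a session type configuration: outputs append symbols to the partner's queue and inputs consume them from the head, which is precisely the behaviour of $\delta(p,A) = (q,\gamma)$ reading $A$ from the front and appending $\gamma$ to the rear. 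So I would encode the ``system'' type $S$ so that its controllable partner drives a faithful simulation of $M$'s run from $(s,\$)$, and encode a candidate subtype $T$ so that $T \subtype S$ holds if and only if the simulated computation does \emph{not} reach $q_f$ (or, dually, does not terminate, depending on which formulation Theorem~\ref{cor:undec} uses).

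First I would recall the reduction underlying Theorem~\ref{cor:undec}, which reduces reachability of $q_f$ to a refinement check $T \refine S$. Since Theorem~\ref{thm:soundness-refine-sub} only gives soundness ($T \subtype S \Rightarrow T \refine S$) and not completeness, I cannot simply transport that reduction through soundness; instead I must show that on the \emph{specific} pair of types $(T,S)$ produced by the encoding, the two relations coincide, i.e.\ $T \subtype S \iff T \refine S$. The cleanest route is to design the encoding so that the candidate subtype's simulation game is forced to track the machine configuration step-by-step: Case~\eqref{fs:internal} of Definition~\ref{def:subtyping}, via selective unfolding and recursive input contexts, lets $T$'s outputs be anticipated past a block of inputs, and this anticipation is exactly what encodes reading the front of the queue while the accumulated inputs play the role of queue contents. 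I would argue that for these encoded types the subtyping simulation game succeeds precisely when $M$ never reaches $q_f$, matching the refinement characterisation.

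The key steps, in order, are: (1) fix the queue-machine-to-session-type encoding from the proof of Theorem~\ref{cor:undec}, making explicit how the queue content is represented as an input context (a delayed block of inputs) and how each transition $\delta(p,A)=(q,\gamma)$ corresponds to one round of Case~\eqref{fs:internal}; (2) show that controllability, which gates both the top-level subtyping check and the inner check of Case~\eqref{fs:external}, is satisfied by the encoded supertype $S$ and its relevant sub-terms, so that the controllability side-conditions never vacuously trivialise the relation; (3) establish that the controllable-subtyping simulation game on $(T,S)$ proceeds deterministically and in lock-step with $M$'s computation, so that the game can be continued indefinitely (building an infinite controllable subtyping relation) if and only if $q_f$ is never reached; and (4) conclude that deciding $T \subtype S$ would decide reachability of $q_f$ in $M$, which is impossible.

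The main obstacle I anticipate is step~(2) together with the interaction between the new controllability side-conditions and the encoding. Because fair asynchronous subtyping differs from earlier asynchronous subtypings precisely in that the subtype need not mimic supertype inputs leading to \emph{uncontrollable} behaviour, I must ensure the encoding does not inadvertently make the ``error'' branch (the one reached when $M$ enters $q_f$) uncontrollable in a way that lets it be silently dropped by Case~\eqref{fs:external}, which would break the correspondence. Concretely, I have to arrange the target-state branch so that it \emph{is} controllable yet cannot be matched by $T$ — e.g.\ by giving it a shape that forces a covariance-of-outputs violation (as in Example~\ref{ex:sub-not-complete}) or a failed $\Tend$ check — so that reaching $q_f$ genuinely falsifies $T \subtype S$. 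Verifying that selective unfolding and the recursive input contexts correctly represent an \emph{unbounded} and evolving queue across arbitrarily many simulation rounds, without the simulation game getting ``stuck'' for reasons unrelated to the machine's behaviour, is the delicate technical heart of the argument; the rest is bookkeeping that mirrors the already-established Theorem~\ref{cor:undec}.
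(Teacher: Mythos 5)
Your high-level strategy --- a reduction from queue machines in which the supertype side's accumulated input contexts play the role of the machine's queue, outputs of the subtype drive the simulation via Case~\eqref{fs:internal}, and inputs consume the head of the accumulation --- is indeed the paper's strategy. But two of your concrete design choices contain genuine gaps. First, your failure mechanism and reduction target are wrong as stated. You reduce from reachability of $q_f$ by planting a branch at $q_f$ that violates subtyping, and claim in step~(3) that the simulation game can be continued indefinitely if and only if $q_f$ is never reached. That ``iff'' is false: the encoded subtype begins every simulated machine step with an input branching, and Case~\eqref{fs:external} requires $\unf{S}$ to be an input branching as well; whenever the simulated queue becomes empty, the supertype side has no accumulated inputs left, its unfolding starts with an output selection, and the game gets stuck --- regardless of whether $q_f$ was ever reached. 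So your encoding characterises the conjunction ``$q_f$ is unreachable \emph{and} the machine never empties its queue'', not reachability, and your step~(4) conclusion does not follow (the conjunction is still undecidable, but by a different argument than the one you give). The paper avoids this entirely: it reduces from \emph{non-termination} of queue machines (termination being reachability of an empty-queue configuration, the right notion since $\delta$ is total), drops $q_f$ altogether, and uses exactly this empty-queue stuckness of Case~\eqref{fs:external} as the intended failure mode, proving $T \subtype S$ iff $M$ does not terminate (Theorem~\ref{th:undecidabilitySubtyping}).

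Second, ``reuse or lightly adapt the encoding of Theorem~\ref{cor:undec}'' underestimates the essential adaptation. In the refinement encoding (Definition~\ref{def:controlEncoding}) the finite control emits each queue symbol through a \emph{single-label} output selection, while the producer/consumer supertype offers an output selection over all of $\Gamma \cup \{E\}$. Since fair asynchronous subtyping forbids output covariance --- Case~\eqref{fs:internal} forces the subtype's and supertype's selection label sets to coincide, cf.~Example~\ref{ex:sub-not-complete} --- the simulation game would fail at the very first output, for every machine. The heart of the paper's new encoding (Definition~\ref{def:controlEncoding2}) is precisely to pad \emph{every} output selection of the subtype to the full alphabet $\Gamma \cup \{E\}$, and, because Case~\eqref{fs:internal} then forks over all of these branches, to route every spurious branch into a type (nondeterministic producer, then $E$, then a universal consumer, then $\Tend$) that remains a subtype of the correspondingly evolved supertype no matter what input context has accumulated. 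You invoke label mismatches only as a local gadget for your $q_f$ branch, not as a global constraint that your encoding must survive everywhere; without the padding-plus-junk-branches idea the reduction collapses. Your step~(2) worry about controllability is legitimate but is the easy part: the paper's supertype is controllable because the producer can always select $E$ and reach $\Tend$, and the same holds for the sub-terms encountered during the game.
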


\section{A Sound Algorithm for Fair Asynchronous Subtyping}\label{sec:algorithm}
We propose an algorithm which soundly verifies whether a session type is a
fair asynchronous subtype of another.
The algorithm relies on building a tree whose nodes are labelled by
configurations of the simulation game induced by
Definition~\ref{def:subtyping}.
The algorithm analyses the tree to identify \emph{witness} subtrees
which contain input contexts that are growing following a recognisable
pattern.

\begin{exa}\label{ex:super-spacecraft}
Recall the satellite communication example (Figure~\ref{fig:runex-types}).
The spacecraft with protocol $\exS$ may be a replacement for an
  older generation of spacecraft which follows the more complicated
  protocol $\exSR$, see Figure~\ref{fig:runex-big}.
Type $\exSR$ notably allows the reception of telecommands to be
  interleaved with the emission of telemetries.
The new spacecraft may safely replace the old one because
  $\exS \subtype \exSR$.
  
  However, checking $\exS \subtype \exSR$ leads to an infinite
  accumulation of input contexts, hence it requires to consider infinitely many pairs of session types. E.g.,
after $\exS$ selects the output label $\exTM$
  twice, the subtyping simulation game considers the pair $(\exS, \exSRR)$, where  $\exSRR$ is given in
Figure~\ref{fig:runex-big}.
The pairs generated for this example illustrate a common
  recognisable pattern where some branches grow infinitely (the
  $\exTC$-branch), while others stay stable throughout the derivation
  (the $\exDONE$-branch).
The crux of our algorithm is to use a finite parametric
  characterisation of the infinitely many pairs occurring in the check
of $\exS \subtype \exSR$.
\end{exa}

The \emph{simulation tree} for $T \subtype S$, written $\simtree{T}{S}$, is
the labelled tree representing the simulation game for $T \subtype S$,
i.e.,
$\simtree{T}{S}$ is a tuple $(N, n_0, \treetrans, \tlab)$
where $N$ is its set of nodes, $n_0 \in N$ is its root, $\treetrans$
is its transition relation, and $\tlab$ is its labelling function, such that
$\tlab(n_0) = (S,T)$. We omit the formal definition of $\treetrans$, as it
is straightforward from Definition~\ref{def:subtyping} following the
subtyping simulation game discussed after that definition.
We give an example below.

Notice that the simulation tree $\simtree{T}{S}$ is defined only when
$S$ is controllable, since $T \subtype S$ holds without needing to
play the subtyping simulation game if $S$ is not controllable.
We say that a branch of $\simtree{T}{S}$ is \emph{successful} if it is
infinite or if it finishes in a leaf labelled by $(\Tend, \Tend)$. All
other branches are \emph{unsuccessful}.
Under the assumption that $S$ is controllable, 
we have that all branches of $\simtree{T}{S}$  are
successful if and only if $T \subtype S$. 
As a consequence checking whether all branches of $\simtree{T}{S}$ are
successful is generally undecidable.
It is possible to identify a branch as successful if it visits
finitely many pairs (or node labels), see
Example~\ref{ex:subtyping-finite}; but in general a branch may
generate infinitely many pairs, see Examples~\ref{ex:super-spacecraft}
and~\ref{ex:space-infinite}.

\begin{figure}[t]
  \centering
  \begin{tabular}{c}
\begin{tikzpicture}[mycfsm, node distance = 0.5cm and 0.9cm
      ,scale=0.85, every node/.style={transform shape}]
      \node[state, initial, initial where=above] (s0) {$0$};
      \node[state, left =of s0] (s1) {$1$};
      \node[state, left=of s1] (s2) {$2$};
      \node[state, left=of s2] (s3) {$3$};
\node[state, right =of s0] (s4) {$4$};
      \node[state, right =of s4] (s5) {$5$};
\path
      (s0) edge [bend right] node [above] {$\rcv{\exTC}$} (s1)
      (s0) edge node {$\rcv{\exDONE}$} (s4)
      (s1) edge [bend right]  node {$\snd{\exTM}$} (s0)
      (s1) edge node [above] {$\snd{\exOVER}$} (s2)
      (s2) edge [loop above] node [above] {$\rcv{\exTC}$} (s2)
      (s2) edge node [below] {$\rcv{\exDONE}$} (s3)
      (s4) edge [loop above] node {$\snd{\exTM}$} (s4)
      (s4) edge node {$\snd{\exOVER}$} (s5)   
      ; 
    \end{tikzpicture}
    \\
    \begin{tabular}{l}
      \begin{tabular}{lcllll}
        $\exSR$ & = & $\Trec  t$ & $.\& \big\{$ &  
                                                  $\exTC :$
        & $\Tsel{\exTM : \Tvar{t} 
          ,
          \exOVER:
          \Trec{t'} . 
          ~\Tbra{ \exTC : \Tvar{t'} ,  \exDONE : \Tend }
          } , $
        \\
                &&& &
                      $
                      \exDONE :$ & $\Trec{t''} . \Tsel{
                                   \exTM : \Tvar{t''} 
                                   , \exOVER : \Tend
                                   }
                                   \big\} $
      \end{tabular}
      \\
      \begin{tabular}{lclllllllllllllll}
        $\exSRR$ & = & \phantom{$\Trec  t$} & $\phantom{.}\& \big\{$
        & $\exTC :$ 
        &   $\& \{$ & $\exTC :$ &

                                  $\exSR$, &
        \\
                &&&&&& $ \exDONE : $ & $\Trec{t''} . \Tsel{\exTM : \Tvar{t''}, \exOVER : \Tend}$ & $\}$,
        \\
                &&&& $ \exDONE : $  & \multicolumn{3}{l}{$\Trec{t''} . \Tsel{\exTM : \Tvar{t''}, \exOVER : \Tend}$ }& $\big\}$    
      \end{tabular}
    \end{tabular}
  \end{tabular}
\caption{$\exSR$ is an alternative session type for
    $\exS$, see Example~\ref{ex:super-spacecraft}.}\label{fig:runex-big}
\end{figure}

In order to support types that generate unbounded accumulation, we
characterise finite subtrees --- called witness subtrees, see
Definition~\ref{def:witness-tree} --- such that all the branches that
traverse these finite subtrees are guaranteed to be successful.

\paragraph{Notation}
We give a few auxiliary definitions and notations.
Hereafter $\calA$ and $\calA'$ range over \emph{extended} input
contexts, i.e., input contexts that may contain distinct holes
with the same index. These are needed to deal with
unfoldings of input contexts, see Example~\ref{ex:extended-ctxt}.

The set of \emph{reductions} of an input context $\calA$ is the minimal set
$\mathcal S$ s.t.
\begin{enumerate*}[label={($\roman*$)}]
\item $\calA \in \mathcal S$;
\item if $\&\{l_i : \calA_i\}_{i\in I} \in \mathcal S$ then
$\forall i\in I. \calA_i \in \mathcal S$ and \item\label{ICunfold} if $\Trec t.\calA' \in \mathcal S$ 
then $\calA'\subs{\Trec t.\calA'}{\Tvar t} \in \mathcal S$.
\end{enumerate*}
Notice that due to unfolding (item \ref{ICunfold}), the reductions of an input context may
contain extended input contexts.
Moreover, given a reduction $\calA'$ of $\calA$, we have that
$\holes(\calA') \subseteq \holes(\calA)$.

\begin{exa}\label{ex:extended-ctxt}
  Consider the following extended input contexts:
  \begin{mathpar}
    \calA_1 = \Trec{t} .~\Tbra{
      l_1 : [\,]^1, \
      l_2 : \Tbra{
        l_3 : \Tvar{t}
      }
    }

   \calA_2 =
    \Tbra{
      l_3 : \Trec{t} .~\Tbra{
        l_1 : [\,]^1, \
        l_2 : \Tbra{
          l_3 : \Tvar{t}
        }
      }
    }

    \unf{\calA_1} = \Tbra{
      l_1 : [\,]^1, \
      l_2 : \Tbra{
        l_3 : \Trec{t} .~\Tbra{
          l_1 : [\,]^1, \
          l_2 : \Tbra{
            l_3 : \Tvar{t}
          }
        }
      } 
    } 
  \end{mathpar}  
    Context $\calA_2$ is a reduction of $\calA_1$, i.e., one can reach
    $\calA_2$ from $\calA_1$, by unfolding $\calA_1$ and executing the input $l_2$. 
Context $\unf{\calA_1}$ is also a reduction of $\calA_1$.  Observe
    that $\unf{\calA_1}$ contains two distinct holes indexed by $1$.
\end{exa}

Given an extended context $\calA$ and a set of hole indices $K$ such
that $K \subseteq \holes(\calA)$, we use the following shorthands.
Given a type $T_k$ for each $k \in K$, we write
$\crepl{\calA}{T_k}^{k\in K}$ for the extended context obtained by replacing
each hole $k \in K$ in $\calA$ by $T_k$.
Also, given an extended context $\calA'$ we write
$\grepl{\calA}{\calA'}{K}$ for the extended context obtained by replacing each
hole $k \in K$ in $\calA$ by $\calA'$.
When $K = \{ k\}$, we often omit $K$ and write, e.g.,
$\grepl{\calA}{\calA'}{k}$ and $\crepl{\calA}{T_k}^{k}$.

\begin{figure}[t]\centering
  \begin{tikzpicture}    [node distance=1.2cm and 1.2cm
    ,scale=0.88, every node/.style={transform shape}
    ]
    \node[cnode] (n0) {$\treepair{\exS}{   \crepl{\calA}{\exSR,
          T'_1}^{ \{1,2\} }}$};
\node[cnode, below= of  n0] (n1) {$\treepair{  \Trec{t'} .  \Tbra{ \exTC : \Tvar{t'}  , \exDONE : \Tend }   }{  
        \crepl{
          \grepl{\calA}{
            \crepl{\calA}{T''_1}^{1}
          }{1}
        }
        {\Tend}^{2}}$};
    \node[cnode, below= of  n1] (n2) {$\treepair{  \Trec{t'} . \Tbra{ \exTC : \Tvar{t'}  , \exDONE : \Tend } }{   \crepl{
          \crepl{\calA}{T''_1}^{1}}{\Tend}^{2}}$};
    \node[cnode, below right = of  n1,xshift=-1cm] (n3) {$\treepair{\Tend}{\Tend}$};
\node[cnode, right = of  n0] (n20) {$\treepair{\exS }{  
        \crepl{
          \grepl{\calA}{
            \crepl{\calA}{\exSR}^{1}
          }{1}
        }
        {T'_1}^{2}}$};
\draw[silentedge] (n0) edge node [left] {$\snd{\exOVER}$} (n1);
    \draw[silentedge] (n1) edge node [right] {$\rcv{\exTC}$} (n2);
    \draw[ancestor] (n2) edge [bend left] node {} (n1);
    \draw[silentedge] (n1) edge node [above right] {$\rcv{\exDONE}$} (n3);
\draw[silentedge] (n0) edge node [above] {$\snd{\exTM}$} (n20);
      \draw[ancestor] (n20) edge [bend left] node {} (n0);
\node[draw=black,dashed, rounded corners=2pt,
      fit=(n0) (n1) (n2) (n20) (n3) ,inner sep=6pt] (t1) {};
\node[cnode, above= of n0,very thick,xshift=-3cm] (root) {$\treepair{\exS}{ \exSR}$};
      \draw[silentedge] (root) |- node [left] {$\snd{\exTM}$} (n0);
\node[cnode, right = of  root,xshift=2.5cm] (na) {$\treepair{  \Trec{t'} . \Tbra{ \exTC : \Tvar{t'}  , \exDONE : \Tend } }{   \crepl{
            \crepl{\calA}{T''_1}^{1}}{\Tend}^{2}}$};
\node[cnode, above left = of  na] (nb)
      {$\treepair{\Tend}{\Tend}$};
\node[cnode, above = of  na] (nc) {$\treepair{  \Trec{t'}
          . \Tbra{ \exTC : \Tvar{t'}  , \exDONE : \Tend } }{T''_1}$};
\node[cnode, above left = of  nc] (nd) {$\treepair{  \Tend  }{ \Tend
        }$};
\node[cnode, above = of  nc] (ne) {$\treepair{  \Trec{t'}
          . \Tbra{ \exTC : \Tvar{t'}  , \exDONE : \Tend } }{T''_1}$};

      \draw[silentedge] (root) edge node [above] {$\snd{\exOVER}$} (na);
      \draw[silentedge] (na) edge node [above right] {$\rcv{\exDONE}$} (nb);
      \draw[silentedge] (na) edge node [right] {$\rcv{\exTC}$} (nc);
      \draw[silentedge] (nc) edge node [above right] {$\rcv{\exDONE}$} (nd);
      \draw[silentedge] (nc) edge node [right] {$\rcv{\exTC}$} (ne);
      
      \draw[ancestor,sloped] (ne) edge [bend right] node [below] {=} (nc);
      
\node[draw=black, densely dotted, rounded corners=2pt,
    fit=(na) (nb) (nc) (ne) (nd) ,inner sep=6pt] (t2) {};    
\node[draw=black,right=of n1,xshift=-0.1cm,yshift=-0.3cm,fill=white,fill opacity=1,drop
    shadow] (txt) {
      {
        \begin{tabular}{ll@{ $ = \; $}l}
          & 
                  $\calA$ & $\Tbra{ \exTC : [\, ]^1 ,  \exDONE : [\,]^2 }$
          \\
                & $T'_1$ &  $\Trec{t''} . \Tsel{
                                 \exTM : \Tvar{t''} 
                                 , \exOVER : \Tend
                                 }$
          \\
                & $T''_1$ &  $   \Trec{t'} . ~\Tbra{ \exTC : \Tvar{t'} ,  \exDONE : \Tend }$
        \end{tabular}
      }
    };
  \end{tikzpicture}
    \caption{Simulation tree for $\exS \leq \exSR$
      (Figures~\ref{fig:runex-types} and~\ref{fig:runex-big}), the
      root of the tree is in bold.}
    \label{fig:simtree}
  \end{figure}

\begin{exa}\label{ex:algo-notation-space}
  Using the above notation and 
  posing $\calA = \Tbra{ \exTC : [\, ]^1 , \exDONE : [\,]^2 }$, we can
  rewrite $\exSRR$ (Figure~\ref{fig:runex-big}) as
  $ \crepl{ \grepl{\calA}{ \crepl{\calA}{\exSR}^{1} }{1} }
  {
    \Trec{t''} . \Tsel{
      \exTM : \Tvar{t''} 
      , \exOVER : \Tend
    }
  }^{2}$.
\end{exa}

\begin{exa}\label{ex:algo-notation}
  Consider the session type below
 \[
  S= \Tbra{ 
    l_1 : \Tbra{ l_1 : T_1 ,\  l_2 : T_2 ,\ l_3: T_3}  , \;
    l_2 : \Tbra{ l_1 : T_1 ,\  l_2 : T_2 ,\ l_3: T_3} , \;
    l_3 :T_3 }.
  \]
  Posing  $\calA = \Tbra{ l_1 : [\, ]^1 ,  l_2 : [\,]^2 ,  l_3 : [\,]^3 }$ we
  have $\holes(\calA) =  \{1,2,3\}$.
Assuming $J = \{1,2\}$ and $K=\{3\}$, we can rewrite $S$ as
$\crepl{\grepl{\calA}{ \crepl{\calA}{T_j}^{j \in J} }{J}}{T_k}^{k
    \in K}$.
\end{exa}

\begin{exa}\label{ex:space-infinite}
Figure~\ref{fig:simtree} shows the partial simulation tree for
$\exS \leq \exSR$, from Figures~\ref{fig:runex-types}
and~\ref{fig:runex-big} (ignore the dashed edges for now).
Notice how the branch leading to the top part of the tree visits only
finitely many node labels (see dotted box), however the bottom part of
the tree generates infinitely many labels, see the path along the
$\snd{\exTM}$ transitions in the dashed box.
\end{exa}

\paragraph{Witness subtrees}
Next, we define witness trees which are finite subtrees of a
simulation tree which we prove to be successful.
The role of the witness subtree is to identify branches that satisfy a
certain accumulation pattern.
It detects an input context $\calA$ whose holes fall in two
categories:
($i$) growing holes (indexed by indices in $J$ below) which lead to an
infinite growth and
($ii$) constant holes (indexed by indices in $K$ below) which stay
stable throughout the simulation game.
The definition of witness trees relies on the notion of
\emph{ancestor} of a node $n$, which is a node $n'$ (different
from $n$) on the path from the root $n_0$ to $n$.
We illustrate witness trees with Figure~\ref{fig:simtree} and
Example~\ref{ex:full-algo-ex}.
\begin{defi}[Witness Tree]\label{def:witness-tree}
  A finite tree $(N, n_0, \treetrans, \tlab)$ is a \emph{witness tree} for
  $\calA$, such that $\holes(\calA) = I$, with
  $\emptyset \subseteq K \subset I$ and $J = I \setminus K$, if all the
  following conditions are satisfied:
\begin{enumerate}
\item \label{cond:nodes}
  for all $n \in N$ either
$\tlab(n) =   (T,    \crepl{\grepl{\calA'}{\crepl{\calA}{S_j}^{j \in J}}{J}}{S_k}^{k
    \in K})$ or \newline
$\tlab(n) = 
(T,    \crepl{   \grepl{\calA'}{   \grepl{\calA}{
        \crepl{\calA}{S_j}^{j\in J} }{J}}{J}  }{S_k}^{k\in K} ) $, 
  where
$\calA'$ is a reduction of $\calA$, and it holds that
  \begin{itemize}
  \item 
  $\holes(\calA')\subseteq K$ implies that $n$ is a leaf and 
  \item if $\tlab(n) = (T,    \calA[S_i]^{i\in I})$ and $n$ is not a leaf
  then $\unf{T}$ starts with an output selection;
  \end{itemize}
\item \label{algo:leaves}
 each leaf $n$ of the tree satisfies one of the following conditions:
  \begin{enumerate}
\item  \label{algo:loop}
    $ \tlab(n) = (T, S)$  and  $n$
    has an ancestor $n'$ s.t.\ $\tlab(n') = (T,S)$
\item  \label{algo:increase}
    $ \tlab(n) = (T,    \crepl{\grepl{\calA}{\crepl{\calA}{S_j}^{j \in J}}{J}}{S_k}^{k
      \in K})$  
    and 
    $n$ has an ancestor $n'$ s.t.\ $\tlab(n')\! = \!(T,
    \calA[S_i]^{i\in I})$
\item \label{algo:decrease}
$ \tlab(n) = (T,    \calA[S_i]^{i\in I})$ 
and
$n$ has an ancestor $n'$ s.t.\
$\tlab(n')\! = \!(T, \crepl{\grepl{\calA}{\crepl{\calA}{S_j}^{j \in
      J}}{J}}{S_k}^{k \in K})$
\item \label{algo:const}
    $ \tlab(n) = (T, \calA'[S_k]^{k \in K'})$ where $K' \subseteq K$
  \end{enumerate}
  and for all leaves $(T,S)$ of type~\eqref{algo:decrease}
  or~\eqref{algo:const} $T \subtype S$ holds.
\end{enumerate}
\end{defi}

\noindent 
Intuitively Condition~\eqref{cond:nodes} says that a witness subtree
consists of nodes that are labelled by pairs $(T,S)$ where $S$
contains a fixed context $\calA$ (or a reduction/repetition thereof)
whose holes are partitioned in growing holes ($J$) and constant holes
($K$).
Whenever all growing holes have been removed from a pair (by reduction
of the context) then this means that the pair is labelling a leaf of
the tree. In addition, if the initial input is limited
to only one instance of $\calA$, the l.h.s.\ type starts with an output
selection so that this input cannot be consumed in the subtyping simulation
game.

Condition~\ref{algo:leaves} says that all leaves of the tree must
validate certain conditions from which we can infer that their
continuations in the full simulation tree lead to successful branches.
Leaves satisfying Condition~\eqref{algo:loop} straightforwardly lead
to successful branches as the subtyping simulation game, starting from 
the corresponding pair, has been already checked starting from its
ancestor having the same label.
Leaves satisfying Condition~\eqref{algo:increase} lead to an infinite
but regular ``increase'' of the types in $J$-indexed holes ---
following the same pattern of accumulation from their ancestor.
The next two kinds of leaves must additionally satisfy the subtyping
relation --- using witness trees inductively or based on the fact they
generate finitely many labels.
Leaves satisfying Condition~\eqref{algo:decrease} lead to regular
``decrease'' of the types in $J$-indexed holes --- following the same
pattern of reduction from their ancestor.
Leaves satisfying Condition~\eqref{algo:const} 
use only constant $K$-indexed holes because, by reduction
of the context $\calA'$, the growing holes containing 
the accumulation $\calA$ have been removed.

\begin{rem}
  Definition~\ref{def:witness-tree} is parameterised
  by an input context $\calA$.
We explain how such contexts can be identified while building a
  simulation tree in Section~\ref{sec:tool}.
\end{rem}

\begin{exa}\label{ex:full-algo-ex}
  In the tree of Figure~\ref{fig:simtree} we highlight two subtrees.
The subtree in the dotted box is not a witness subtree
  because it does not validate Condition~\eqref{cond:nodes} of
  Definition~\ref{def:witness-tree}, i.e., there is an intermediary
  node with a label in which the r.h.s\ type does not contain $\calA$.

The subtree in the dashed box is a witness subtree with 3 leaves,
  where the dashed edges represent the ancestor relation,
  $\calA =\Tbra{ \exTC : [\, ]^1 , \exDONE : [\,]^2 }$, $J = \{1\}$
  and $K = \{2\}$.
We comment on the leaves clockwise, starting from $(\Tend, \Tend)$,
  which satisfies Condition~\eqref{algo:const}.
The next leaf satisfies condition~\eqref{algo:decrease}, while the
  final leaf satisfies Condition~\eqref{algo:increase}.
\end{exa}

\paragraph{Algorithm}
Given two session types $T$ and $S$ we first check whether $S$ is uncontrollable.
If this is the case we immediately conclude that $T \subtype S$. Otherwise, we 
proceed in four steps.
\begin{enumerate}[label=\textbf{S\arabic*}, leftmargin=2.5em,wide, labelwidth=!, labelindent=0pt]\item \label{step:build}
We compute a finite fragment of $\simtree{T}{S}$, stopping
  whenever
($i$) we encounter a leaf (successful or not),
($ii$) we encounter a node that has an ancestor as defined in
  Definition~\ref{def:witness-tree} (Conditions~\eqref{algo:loop},
  \eqref{algo:increase}, and~\eqref{algo:decrease}),
($iii$) or the length of the path from the root of $\simtree{T}{S}$
  to the current node exceeds a bound set to two times the depth of
  the AST of $S$.
This bound allows the algorithm to
  explore paths that will traverse the super-type at least twice.
We have empirically confirmed that it is sufficient for all
  examples mentioned in Section~\ref{sec:tool}.

\item \label{step:prune}
We remove subtrees from the tree produced in~\ref{step:build}
  corresponding to successful branches of the simulation game which
  contain finitely many labels.
Concretely, we remove each subtree whose each leaf $n$ is either
  successful or has an ancestor $n'$ such that $n'$ is in the same
  subtree and $\tlab(n) = \tlab(n')$.
\item \label{step:divide}
We extract subtrees from the tree produced in~\ref{step:prune} that
  are potential \emph{candidates} to be subsequently checked.
The extraction of these finite candidate subtrees is done by
  identifying the forest of subtrees rooted in ancestor nodes which do
  not have ancestors themselves.
\item \label{step:check} 
We check that each of the candidate subtrees from~\ref{step:divide} is a witness tree.
\end{enumerate}
If an unsuccessful leaf is found in~\ref{step:build}, then the
considered session types are not related.
In~\ref{step:build}, if the generation of the subtree reached the
bound before reaching an ancestor or a leaf, then the algorithm is
unable to give a decisive verdict, i.e., the result is \emph{unknown}.
Otherwise, if all checks in~\ref{step:check} succeed then the session
types are in the fair asynchronous subtyping relation.
In all other cases, the result is \emph{unknown} because a candidate
subtree is not a witness.

\begin{exa}
  We illustrate the algorithm above with the tree in
  Figure~\ref{fig:simtree}.
After~\ref{step:build}, we obtain the whole tree in the figure (11
  nodes).
After~\ref{step:prune}, all nodes in the dotted boxed are removed.
After~\ref{step:divide} we obtain the (unique) candidate subtree
  contained in the dashed box.
This subtree is identified as a witness subtree in~\ref{step:check},
  hence we have $\exS \subtype \exSR$.
\end{exa}

\paragraph{Soundness of the algorithm}

The soundness of our algorithm w.r.t. fair asynchronous session subtyping relies on proving that 
 given a \emph{witness tree} $(N, n_0, \treetrans, \tlab)$
such that $\tlab(n_0)=(T,S)$, then $T \subtype S$. 
We formalize this in  Theorem \ref{thm:algo-soundess} further down below.

The definition of witness tree consider nestings of input contexts $\calA$.
In the proof of Theorem \ref{thm:algo-soundess} we need the notation $\crepl{\calA^h}{S_j}^{j \in J}$,
to generalize to nestings of input contexts with parametric depth, defined as follows:
\begin{itemize} 
\item $\crepl{\calA^1}{S_j}^{j \in J}$ is $\crepl{\calA}{S_j}^{j \in J}$
\item
$\crepl{\calA^h}{S_j}^{j \in J}$ is $\grepl{\calA}{
\crepl{\calA^{h-1}}{S_j}^{j \in J}
}{J}$, when $h>1$.
\end{itemize}
\noindent 
Given a witness tree for $\calA$, we define a family of isomorphic trees
with labels in which the r.h.s. type has incrementally increased 
nestings of the input context $\calA$ in the growing holes.

\begin{defi}[$h$-th Witness Tree]Given a witness tree $\mathcal T=(N, n_0, \treetrans, \tlab)$ for $\calA$,
  and $h \geq 1$, we inductively define $\mathcal T^h$ as follows:
  \begin{itemize}
  \item
  $\mathcal T^1 = \mathcal T$;
  \item
  for $h>1$,
  given $\mathcal T^{h-1}=(N^{h-1}, n_0^{h-1}, \treetrans^{h-1}, \tlab^{h-1})$
  we define $\mathcal T^{h}=(N^{h}, n_0^{h}, \treetrans^{h}, \tlab^{h})$
  with $N^{h}=N^{h-1}$, $n_0^{h}=n_0^{h-1}$, $\treetrans^{h}=\treetrans^{h-1}$,
  and\\ $\tlab^{h}(n)=\crepl{\grepl{\calA'}{\crepl{\calA^h}{S_j}^{j \in J}}{J}}{S_k}^{k\in K}$
  if $\tlab^{h-1}(n)=\crepl{\grepl{\calA'}{\crepl{\calA^{h-1}}{S_j}^{j \in J}}{J}}{S_k}^{k\in K}$.
  \end{itemize}
\end{defi}
\noindent 
We now present a preliminary Lemma stating that, 
given a witness subtree $\mathcal T$ of a simulation tree,
all the trees in the family  $\mathcal T^h$ faithfully represent the subtyping
simulation game (proof in Appendix \ref{subsec:algorithmsoundness}).
\begin{restatable}{lem}{lemalgosoundness}\label{lem:algo}
  Consider a witness tree $\mathcal T^1=(N^1, n_0^1, \treetrans^1, \tlab^1)$
  contained in a simulation tree.
  For every $h \geq 1$, we have that $\treetrans^{h}$ in 
  $\mathcal T^{h}=(N^{h}, n_0^{h}, \treetrans^{h}, \tlab^{h})$ 
  is compatible with the subtyping simulation game, i.e.,
  $n \treetrans^{h} n'$ is present in $\mathcal T^{h}$
  if and only if there exists a simulation tree 
  $(M, m_0, \treetrans, \tlab)$
  including $m \treetrans^{h} m'$ with $\tlab(m)=\tlab^{h}(n)$
  and $\tlab(m')=\tlab^{h}(n')$.
\end{restatable}

\noindent 
We now move to a proposition stating that, given a witness subtree $\mathcal T$ 
of a simulation tree, we have that all branches in the simulation tree that traverse
$\mathcal T$ follows paths also present in the family of trees $\mathcal T^h$
or in simulation trees $\simtree{T'}{S'}$ where $(T',S')$ is a leaf of $\mathcal T$
for which we know that $T' \subtype S'$ (proof in Appendix \ref{subsec:algorithmsoundness}).
In the statement of this proposition we use $\treetrans{}\!\!^*$ to denote
the reflexive and transitive closure of $\treetrans$.

\begin{restatable}{prop}{propalgosoundness}\label{prop:algo}
  Let $T$ and $S$ be two session types
with $\simtree{T}{S}=(N, n_0, \treetrans, \tlab)$.
If $\simtree{T}{S}$ contains a witness tree $\mathcal T$ with root $n$, then for
  every node $n' \in N$ such that $n \treetrans{}\!\!^*\, n'$ 
  we have that $\tlab(n')$ is a label present either  
  in $\mathcal T^h$, for some $h$,
  or in $\simtree{T'}{S'}=(N', n_0', \treetrans, \tlab')$
  with $T' \subtype S'$.
\end{restatable}
\noindent 
We can now present the main result needed to prove the soundness of our
algorithm.
\begin{restatable}{thm}{thmalgosoundness}\label{thm:algo-soundess}
  Let $T$ and $S$ be session types
s.t.\ $\simtree{T}{S}=(N, n_0, \treetrans, \tlab)$.
If $\simtree{T}{S}$ contains a witness subtree with root $n$ then for
  every node $n' \in N$ s.t.\ $n \treetrans{}\!\!^*\, n'$, either $n'$ is
  a successful leaf, or there exists $n''$ s.t.\
  $n' \treetrans{} n''$.
\end{restatable}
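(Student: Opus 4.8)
The plan is to prove the logically equivalent statement that \emph{no unsuccessful leaf is reachable from $n$}: asserting that every $n'$ with $n \treetrans{}\!\!^* n'$ is either a successful leaf or has a successor is the same as asserting that every reachable leaf is labelled $(\Tend,\Tend)$, i.e.\ that every branch issuing from $n$ is successful. I would fix the witness subtree $W$ rooted at $n$, with its context $\calA$, $\holes(\calA)=I$, and the partition into growing holes $J=I\setminus K$ and constant holes $K$, and then argue by strong induction on the length $d$ of a path $n=m_0\treetrans m_1\treetrans\cdots\treetrans m_d=n'$. Throughout I rely on two standing facts: first, that the subtree of $\simtree{T}{S}$ rooted at a node depends only on that node's label (since $\treetrans$ is read off Definition~\ref{def:subtyping}), so nodes with equal labels have isomorphic subtrees; and second, that controllability is preserved along $\treetrans$ (cf.\ Theorem~\ref{th:controlcharacter}), so the characterisation stated after the definition of simulation trees---that all branches of $\simtree{T'}{S'}$ are successful iff $T'\subtype S'$, valid when $S'$ is controllable---may be invoked at any node.

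For the induction, let $m_j$ be the last node of the path lying in $W$; if $m_{j+1}$ exists then $m_j$ is necessarily a leaf of $W$. If $j=d$, then $n'$ is a node of $W$: if $n'$ is internal it has a child in $W$, and otherwise I dispatch it by its leaf type in Definition~\ref{def:witness-tree}. A loop leaf~\eqref{algo:loop} shares its label with a strict ancestor, which is internal, hence has a successor, so $n'$ does too. A decrease or constant leaf~\eqref{algo:decrease},~\eqref{algo:const} satisfies $T'\subtype S'$ by definition of $W$; then either $(T',S')=(\Tend,\Tend)$ is a successful leaf, or the characterisation shows that $n'$, being the root of an all-successful tree, is not a dead end. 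An increase leaf~\eqref{algo:increase} has the same left component as its base ancestor, whose unfolding starts with an output selection by Condition~\eqref{cond:nodes}; since output moves are always enabled, $n'$ has a successor. If instead $j<d$, the path leaves $W$ through the leaf $m_j$. For a decrease or constant leaf, $T'\subtype S'$ makes the whole subtree rooted at $m_j$ all-successful, so $n'$, lying inside it, is a successful leaf or has a successor. For a loop leaf, the subtree at $m_j$ is isomorphic to the one at its strict ancestor $m$; transporting the suffix $m_j\treetrans\cdots\treetrans n'$ through this isomorphism yields a node $\tilde n'$ reachable from $n$ in $\mathrm{depth}(m)+(d-j)<d$ steps with the same ``leaf / has-successor'' status, so the induction hypothesis applies.

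The increase case with $j<d$ is the crux and the main obstacle. Here I need a \emph{self-similarity lemma}: the full subtree of $\simtree{T}{S}$ rooted at an increase leaf is isomorphic, \emph{modulo inserting one extra copy of $\calA$ into the $J$-holes of every supertype component}, to the full subtree rooted at its base ancestor. Granting this, the increase case closes exactly like the loop case, since the base ancestor is again a strict ancestor at depth below $j$, so the transported target is reachable in fewer than $d$ steps; the isomorphism preserves edges and preserves the label $(\Tend,\Tend)$ (the constant component $\Tend$ has no $J$-hole to grow), hence preserves both ``successful leaf'' and ``has a successor''. Proving the self-similarity lemma is where the real work lies: one must show that adding an $\calA$-layer in the growing holes commutes with every move of the simulation game. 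The delicate points are (i) that whenever the subtype performs an output (Case~\eqref{fs:internal}) the supertype reproduces all of its labels by descending through a \emph{reduction} of the accumulated context, so the extra layer merely makes this reduction one level deeper while following the identical pattern; (ii) that the output-guardedness guaranteed by Condition~\eqref{cond:nodes} prevents the inputs recorded in $\calA$ from being consumed out of step, so the context, extra layer included, is carried along faithfully; and (iii) that the constant holes indexed by $K$ are untouched by the growth, so constant leaves keep their labels and their validity $T'\subtype S'$ verbatim, while controllability is preserved under the added layer so that the characterisation remains applicable. Once these are in place the induction goes through and the theorem follows.
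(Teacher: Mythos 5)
Your proposal takes essentially the same route as the paper's proof: your ``self-similarity lemma'' is precisely the paper's construction of the $h$-th witness trees $\mathcal{T}^h$ (one extra $\calA$-layer inserted in the growing holes per level) together with the paper's lemma that each $\mathcal{T}^h$ is compatible with the subtyping simulation game, and your strong induction on path length plays the role of the paper's Proposition~\ref{prop:algo}, which shows that every label reachable from the witness root occurs either in some $\mathcal{T}^h$ or in $\simtree{T'}{S'}$ for a leaf pair with $T'\subtype S'$. One justification as written would fail, however: for an increase leaf reached at the \emph{end} of the path (your case $j=d$) you claim it has a successor ``since output moves are always enabled''. In this simulation game output moves are \emph{not} always enabled --- Case~\eqref{fs:internal} of Definition~\ref{def:subtyping} fires only if $\selunf{S}$ decomposes as an input context whose holes are filled by output selections carrying all the labels of the subtype's selection, and a stuck output is exactly how unsuccessful leaves arise. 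The gap is closed by your own lemma rather than by that slogan: the subtree rooted at the increase leaf is isomorphic (modulo the extra $\calA$-layer) to the subtree rooted at its base ancestor, which is internal in the witness tree and hence has a child, so the increase leaf has one as well; this is exactly how the paper argues, via the non-leaf occurrence of the same label in $\mathcal{T}^{h+1}$.
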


\noindent 
In the light of this last theorem, we can finally conclude that if the candidate subtrees of
$\simtree{T}{S}$ identified with the steps \textbf{S1-3} explained above are
also witness subtrees (check done in the step \textbf{S4}), then we have $T \subtype S$.

\section{Implementation}\label{sec:tool}
To evaluate our algorithm, we have produced a Haskell implementation
of it, which is available on GitHub~\cite{tool}.  It implements a
version of the algorithm presented in Section \ref{sec:algorithm},
which internally represents session types as automata (LTS) (see,
e.g., \cite{BravettiZ21}).
In this context it is also natural to use bisimulation in place of the
syntactic equality for session types.
These design choices helped us to concretise an implementation of the
algorithm in Section ~\ref{sec:algorithm} and allowed us to implement
an optimisation which minimises the input types. We comment on this
below.

Using automata internally makes it easier to identify candidate input contexts
as we can keep track of states that correspond to the input context
computed when applying Case~\eqref{fs:internal} of
Definition~\ref{def:subtyping}.
In particular, we augment each local state in the automata
representation of the candidate supertype with two counters:
the $c$-counter keeps track of how many times a 
state has been used in an input \textit{c}ontext;
the $h$-counter keeps track of how many times a state has occurred
within a \textit{h}ole of an input context.
We illustrate this with Figure~\ref{fig:ctxta} which depicts the
internal data structures our tool manipulates when checking
$\exS \leq \exSR$ from Figures~\ref{fig:runex-types}
and~\ref{fig:runex-big}.
The state indices of the automata in Figure~\ref{fig:ctxta} correspond
to the ones in Figure~\ref{fig:runex-types} (2\textsuperscript{nd}
column) and Figure~\ref{fig:runex-big} (3\textsuperscript{rd} column).

The first row of Figure~\ref{fig:ctxta} represents the root of the
simulation tree, where both session types are in their respective
initial state and no transition has been executed.
We use state labels of the form $\ivar{n}{c}{h}$ 
where $n$ is the original identity of the state, $c$ is the value of
the $c$-counter, and $h$ is the value of the $h$-counter.
The second row depicts the configuration after firing transition
$\snd{\exTM}$, via Case~\eqref{fs:internal} of
Definition~\ref{def:subtyping}.
While the candidate subtype remains in state $0$ (due to a self-loop)
the candidate supertype is unfolded with $\selunf{\exSR}$
(Definition~\ref{def:selectiveUnf}).
The resulting automaton contains an additional state and two
transitions.
All previously existing states have their $h$-counter incremented,
while the new state has its $c$-counter incremented.
The third row of the figure shows the configuration after firing
transition $\snd{\exOVER}$, using Case~\eqref{fs:internal} of
Definition~\ref{def:subtyping} again.
In this step, another copy of state $0$ is added. Its $c$-counter is
set to $2$ since this state has been used in a context twice; and the
$h$-counters of all other states are incremented.

Using this representation, we construct a candidate input context by
building a tree whose root is a state $\ivar{q}{c}{h}$ such that
$c > 1$.
The nodes of the tree are taken from the states reachable from
$\ivar{q}{c}{h}$, stopping when a state $\ivar{q'}{c'}{h'}$ such that
$c'< c$ is found.
A leaf $\ivar{q'}{c'}{h'}$ becomes a hole of the input context.
The hole is a constant ($K$) hole when $h'=c$, and growing ($J$)
otherwise.
Given this strategy and the configurations in Figure~\ref{fig:ctxta},
we successfully identify the context
$\calA = \Tbra{ \exTC : [\, ]^1 , \exDONE : [\,]^2 }$ with $J = \{1\}$
and $K = \{2\}$.

Thanks to our automata representation, it is also possible to minimise
(up-to bisimulation) each session-type automaton \emph{before}
performing Steps~\ref{step:build}-\ref{step:check}.
Concretely our tool accepts an optional command-line flag that turns
on the minimisation of each session type after it has been transformed
into an automaton.
We discuss the benefits of this optimisation in the next section.

\begin{figure}[t]\centering
  \begin{tabular}{c|c|c}
    \toprule
    Last  transition \; & \; State of  $\exS$ \; & Representation of $\exSR$ \\
    \midrule
    $\epsilon$
               &
                 $0$
                         & \;
                           \begin{tikzpicture}[mycfsm, node distance = 0.5cm and 0.9cm
                             ,scale=0.75, every node/.style={transform shape}]
                             \node[state, initial, initial where=above] (s0)
                             {$\ivar{0}{0}{0}$};
                             \node[state, left =of s0] (s1) {$\ivar{1}{0}{0}$};
                             \node[state, left=of s1] (s2)  {$\ivar{2}{0}{0}$};
                             \node[state, left=of s2] (s3)  {$\ivar{3}{0}{0}$};
\node[state, right =of s0] (s4)  {$\ivar{4}{0}{0}$};
                             \node[state, right =of s4] (s5)  {$\ivar{5}{0}{0}$};
\path
                             (s0) edge [bend right] node [above] {$\rcv{\exTC}$} (s1)
                             (s0) edge node {$\rcv{\exDONE}$} (s4)
                             (s1) edge [bend right]  node {$\snd{\exTM}$} (s0)
                             (s1) edge node [above] {$\snd{\exOVER}$} (s2)
                             (s2) edge [loop above] node [left] {$\rcv{\exTC}$} (s2)
                             (s2) edge node [below] {$\rcv{\exDONE}$} (s3)
                             (s4) edge [loop above] node [left] {$\snd{\exTM}$} (s4)
                             (s4) edge node {$\snd{\exOVER}$} (s5)   
                             ; 
                           \end{tikzpicture}  
    \\
    \midrule
    $\snd{\exTM}$
               &
                 $0$
                         & 
                           \begin{tikzpicture}[mycfsm, node distance = 0.5cm and 0.9cm
                             ,scale=0.75, every node/.style={transform shape}]
                             \node[state] (s0) {$\ivar{0}{0}{1}$};
                             \node[state, left =of s0] (s1)  {$\ivar{1}{0}{1}$};
                             \node[state, left=of s1] (s2)  {$\ivar{2}{0}{1}$};
                             \node[state, left=of s2] (s3) {$\ivar{3}{0}{1}$};
\node[state, right =of s0] (s4) {$\ivar{4}{0}{1}$};
                             \node[state, right =of s4] (s5) {$\ivar{5}{0}{1}$};
\node[state, below =of s4,initial] (c0)
                             {$\ivar{0}{1}{0}$};
\path
                             (c0) edge node {$\rcv{\exTC}$} (s0)
                             (c0) edge [right] node {$\rcv{\exDONE}$} (s4)
(s0) edge [bend right] node [above] {$\rcv{\exTC}$} (s1)
                             (s0) edge node {$\rcv{\exDONE}$} (s4)
                             (s1) edge [bend right]  node {$\snd{\exTM}$} (s0)
                             (s1) edge node [above] {$\snd{\exOVER}$} (s2)
                             (s2) edge [loop above] node [left] {$\rcv{\exTC}$} (s2)
                             (s2) edge node [below] {$\rcv{\exDONE}$} (s3)
                             (s4) edge [loop above] node [left] {$\snd{\exTM}$} (s4)
                             (s4) edge node {$\snd{\exOVER}$} (s5)   
                             ; 
                           \end{tikzpicture}  
    \\
    \midrule
    $\snd{\exOVER}$
               &
                 $1$
                         &
                           \begin{tikzpicture}[mycfsm, node distance = 0.5cm and 0.9cm
                             ,scale=0.75, every node/.style={transform shape}]
                             \node[state,white] (s0) {$\ivar{0}{0}{2}$};
                             \node[state, left =of s0,white] (s1)  {$\ivar{1}{0}{2}$};
                             \node[state, left=of s1] (s2)  {$\ivar{2}{0}{2}$};
                             \node[state, left=of s2] (s3) {$\ivar{3}{0}{2}$};
\node[state, right =of s0,white] (s4) {$\ivar{4}{0}{2}$};
                             \node[state, right =of s4] (s5) {$\ivar{5}{0}{2}$};
\node[state, below =of s4] (c0)
                             {$\ivar{0}{1}{1}$};
\node[state, below =of c0,initial] (cc0) {$\ivar{0}{2}{0}$};
\path
                             (cc0) edge node {$\rcv{\exTC}$} (c0)
                             (cc0) edge [bend right] node [left] {$\rcv{\exDONE}$} (s5)
(c0) edge node {$\rcv{\exTC}$} (s2)
                             (c0) edge node {$\rcv{\exDONE}$} (s5)
(s2) edge [loop above] node [left] {$\rcv{\exTC}$} (s2)
                             (s2) edge node [below] {$\rcv{\exDONE}$} (s3)
                             ; 
                           \end{tikzpicture}
    \\
    \bottomrule
  \end{tabular}
  \caption{Internal representation of the simulation tree for $\exS \leq
    \exSR$ (fragment).}\label{fig:ctxta}
\end{figure}

We have run our tool on a dozen of examples handcrafted to test the
limits of our algorithm (inc.\ the examples discussed in this paper),
as well as on the 174 tests taken from~\cite{BCLYZ19}. All of these
tests terminate under a second.

Additionally, for debugging and illustration purposes, the tool can optionally
generate graphical representations of the subtyping simulation game and of
witness trees.

\section{Empirical Evaluation on Synthetic Benchmarks}\label{sec:eval}

\newcommand{\TTop}[3]{T_{\textit{R}}(#1,#2,#3)}
\newcommand{\TBot}[2]{T_{\textit{L}}({#1,#2)}}

\newcommand{\TR}[3]{\mathit{TBran}({#1,#2,#3})}
\newcommand{\TSL}[1]{\mathit{TSelL}({#1})}
\newcommand{\TRL}[1]{\mathit{TBranL}({#1})}
\newcommand{\TS}[2]{\mathit{TSel}({#1,#2})}
\newcommand{\TEST}[3]{\mathit{Test}(#1,#2,#3)}

To evaluate the cost of our algorithm and its implementation, wrt.\
runtime and memory usage, we have performed an empirical evaluation
based on a family of pairs of sub/supertype of increasing sizes.
We perform our evaluation with and without our minimisation-based
optimisation and discuss the results.

\paragraph{Experimental setup}
The family of types we consider is based on variants from our spacecraft example:
the subtype is based on variants of $\exS$ in
Figure~\ref{fig:runex-types}, while the supertype is based on variants of
$\exSR$ in Figure~\ref{fig:runex-big}.
The shape and size of each variant is determined by three parameters
which respectively affect the number of choices in branches (branching
width), the number of inputs that can be accumulated in the supertype
(input depth), and the number of choices in selections (selection
width).

\begin{figure}
\[
    \begin{array}{lcl}
      \TEST{n}{m}{k} & = &  \TBot{n}{k}  \leq       \TTop{n}{m}{k}
      \\[0.2cm]
\TBot{n}{k} & = & \Trec{t} . \TselectindexNoidx{\exTM}{  \Tvar t }{i}{ 1 \leq i \leq k}{\exOVER: \TRL{n}}
      \\[0.2cm]
\TTop{n}{m}{k} & = & \Trec{t} . \TR{n}{m}{k}
      \\[0.2cm]
\TR{n}{m}{k} & = &
                         \begin{cases}
                           \TbranchindexNoidx{\exTC}{\TR{n}{m{-}1}{k}}{i}{ 1 \leq i \leq n }{\exDONE: \TSL{k}} & \text{if } m > 0
                           \\
                           \TbranchindexNoidx{\exTC}{\TS{n}{k}}{i}{ 1 \leq i \leq n }{\exDONE: \TSL{k}} & \text{otherwise}
                         \end{cases}
\\[0.2cm]
\TS{n}{k} & = &
                      \TselectindexNoidx{\exTM}{  \Tvar t }{i}{ 1 \leq i \leq k }{\exOVER: \TRL{n}}
      \\[0.2cm]
\TRL{n} & = &
                    \Trec{t'} . \TbranchindexNoidx{\exTC}{  \Tvar t' }{i}{ 1 \leq i \leq n }{\exDONE: \Tend}
      \\[0.2cm]
\TSL{k} & = &
                    \Trec{t''} . \TselectindexNoidx{\exTM}{  \Tvar t'' }{i}{ 1 \leq i \leq k }{\exOVER: \Tend} 
    \end{array}
  \]
  \caption{Generation of parameterised sub-type/super-type pairs.
    Function $\TTop{n}{m}{k}$ is the super-type and $\TBot{n}{k}$ is the sub-type, where
$n$ is the branching width (the number of messages the type can receive at a given point),
$m$ is the branching depth (the number of messages the type can receive consecutively),
and $k$ is the selection width (the number of messages the type can send at a given point).
  }\label{fig:param-defs}
\end{figure}

Given values $n$, $m$, and $k$ for each of these parameters, we
generate a \emph{subtyping problem} $ \TEST{n}{m}{k}$ as described in
Figure~\ref{fig:param-defs}. We assume that $n \geq 1$, $m \geq 0$,
and $k \geq 1$ --- the branching/selection parameters need to provide
at least one branch, while input depth could be zero (no
anticipation).
Each test applies our algorithm to verify that $\TBot{n}{k}$ is a fair
asynchronous subtype of $\TTop{n}{m}{k}$ (by construction the test
always succeeds).

We describe Figure~\ref{fig:param-defs} in more details.
The subtype $\TBot{n}{k}$ only depends on two parameters: branching
width ($n$) and selection width ($k$).
It is similar to $\exS$ in Figure~\ref{fig:runex-types} except that it
can send (resp.\ receive) different telemetry (resp.\ telecommand)
messages. 
It is a recursive type that immediately chooses between sending one of the $k$
telemetries ($\exTM_i$) then recurse, or send a termination signal ($\exOVER$).
In the latter case, the behaviour continues with $\TRL{n}$, i.e.,
another recursive definition followed by a branching construct where
the type expects to receive either one of the $n$ telecommands
($\exTC_i$) then recurse, or receive the termination signal $\exDONE$.

The supertype $\TTop{n}{m}{k}$ depends on three parameters: branching
width ($n$), input depth ($m$), and selection width ($k$).
This type is similar to $\exSR$ in Figure~\ref{fig:runex-big} but can
send (resp.\ receive) different telemetry (resp.\ telecommand)
messages and allows the reception of $m$ telecommands to
precede the emission of a telemetry message.
$\TTop{n}{m}{k}$ relies on four additional definitions.
$\TR{n}{m}{k}$ encodes the sequence of $m+1$ inputs that can
precede the emission of telemetries.
$\TS{n}{k}$ performs the selections that precede the final series of
inputs in $\TRL{n}$.
$\TSL{k}$ performs the final series of outputs.

\begin{figure}[t]
  \centering
  \begin{minipage}{.25\textwidth}
     \begin{tikzpicture}[mycfsm, node distance = 0.9cm and 1.5cm
      ,scale=0.95, every node/.style={transform shape}]
      \node[state, initial, initial where=above] (s1) {$0$};
      \node[state, below =of s1] (s2) {$1$};
      \node[state, below =of s2] (s3) {$2$};
\path
      (s1) edge [loop left,looseness=50] node [above,yshift=0.2cm] {$\{\snd{\exTM}_1, \snd{\exTM}_2, \snd{\exTM}_3, \snd{\exTM}_4 \}$} (s1)
      (s1) edge  node [right] {$\snd{\exOVER}$} (s2)
      (s2) edge  node [right] {$\rcv{\exDONE}$} (s3)
      (s2) edge [loop left,looseness=50] node [above,yshift=0.2cm] {$\{\rcv{\exTC}_1,  \rcv{\exTC}_2 \}$} (s2)
      ; 
    \end{tikzpicture}
  \end{minipage}
  \begin{minipage}{.6\textwidth}
  \begin{tikzpicture}[mycfsm, node distance = 0.9cm and 2.3cm
      ,scale=0.95, every node/.style={transform shape}]
      \node[state, initial, initial where=above] (s1) {$0_0$};
      \node[state, below =of s1] (s2) {$0_1$};
      \node[state, below =of s2] (s3) {$0_2$};
      \node[state, below =of s3] (s4) {$0_3$};
      \node[ state, left =of s4] (s8) {$4$};
      \node[ state, right =of s4] (s5) {$1$};
      \node[ state, below =of s5] (s6) {$2$};
      \node[ state, below =of s6] (s7) {$3, 5$};
\path
      (s1) edge node [right] {$\{ \rcv{\exTC}_1,\rcv{\exTC}_2\} $} (s2)
      (s2) edge node [right] {$\{ \rcv{\exTC}_1,\rcv{\exTC}_2\} $} (s3)
      (s3) edge node [right] {$\{ \rcv{\exTC}_1,\rcv{\exTC}_2\} $} (s4)
      (s4) edge [] node (rlab) [below, xshift=-0.1cm] {$\{ \rcv{\exTC}_1,\rcv{\exTC}_2\} $} (s5)
(s1) edge [ bend right=90] node [left, near start,yshift=2] {$\rcv{\exDONE}$} (s8)
      (s2) edge [ bend right=70] node [left, near start,yshift=3] {$\rcv{\exDONE}$} (s8)
      (s3) edge [ bend right=45] node [left, near start,yshift=4] {$\rcv{\exDONE}$} (s8)
      (s4) edge [] node [above] {$\rcv{\exDONE}$} (s8)
(s8) edge [ loop below,looseness=50] node [below] {$\{\snd{\exTM}_1, \snd{\exTM}_2, \snd{\exTM}_3, \snd{\exTM}_4 \}$} (s8)
      (s5) edge [ bend right=90] node [right] {$\{\snd{\exTM}_1, \snd{\exTM}_2, \snd{\exTM}_3, \snd{\exTM}_4 \}$} (s1)
(s8) edge  [ bend right=20] node [above,yshift=0.2cm] {$\snd{\exOVER}$} (s7)
      (s5) edge  [] node [thick, right] {$\snd{\exOVER}$} (s6)
      (s6) edge  [] node [thick, right] {$\rcv{\exDONE}$} (s7)
(s6) edge [ loop right,looseness=50] node [right] {$\{\rcv{\exTC}_1,  \rcv{\exTC}_2 \}$} (s6)
      ;
      \begin{pgfonlayer}{background}        
        \node[draw,densely dotted,fit=(s1) (s4) (rlab), inner sep = 3pt,fill=black!8] {};
      \end{pgfonlayer}
\end{tikzpicture}
  \end{minipage}
  \caption{Minimised versions of $\TBot{2}{4}$ (subtype, left) and
    $\TTop{2}{3}{4}$ (supertype, right).}\label{fig:param-sup-min}
\end{figure}

Figure~\ref{fig:param-sup-min} gives a graphical representation of the
session-type automata generated by the definitions in
Figure~\ref{fig:param-defs} \emph{after} minimisation up to
bisimulation.
The figure shows a subtype (left) that can send four different $\exTM_i$
messages ($k=4)$, then can receive two different $\exTC_i$ messages
($n=2$). The state labels correspond to the ones of $\exS$ in
Figure~\ref{fig:runex-types}. 

The supertype (right) is more complex. It can also send four different
$\exTM_i$ messages ($k=4)$, and receive two different $\exTC_i$
messages ($n=2$). Additionally, it may postpone the emission of
telemetries and receive up to $4$ telecommands first ($m+1=4$).
The state labels correspond to the ones of $\exSR$ in
Figure~\ref{fig:runex-big}. Note that because of minimisation the two
final states of $\exSR$ are merged into their $3,5$ counterpart in
Figure~\ref{fig:param-sup-min}.
Since the emission of $\exTM_i$ in $\TTop{2}{3}{4}$ is further
postponed compared to $\exSR$, we also obtain several variants of
state $0$, labelled by $0_i$ and highlighted in gray in
Figure~\ref{fig:param-sup-min}.

\begin{figure}
  \begin{minipage}{.48\textwidth}
  \includegraphics[width=\textwidth]{./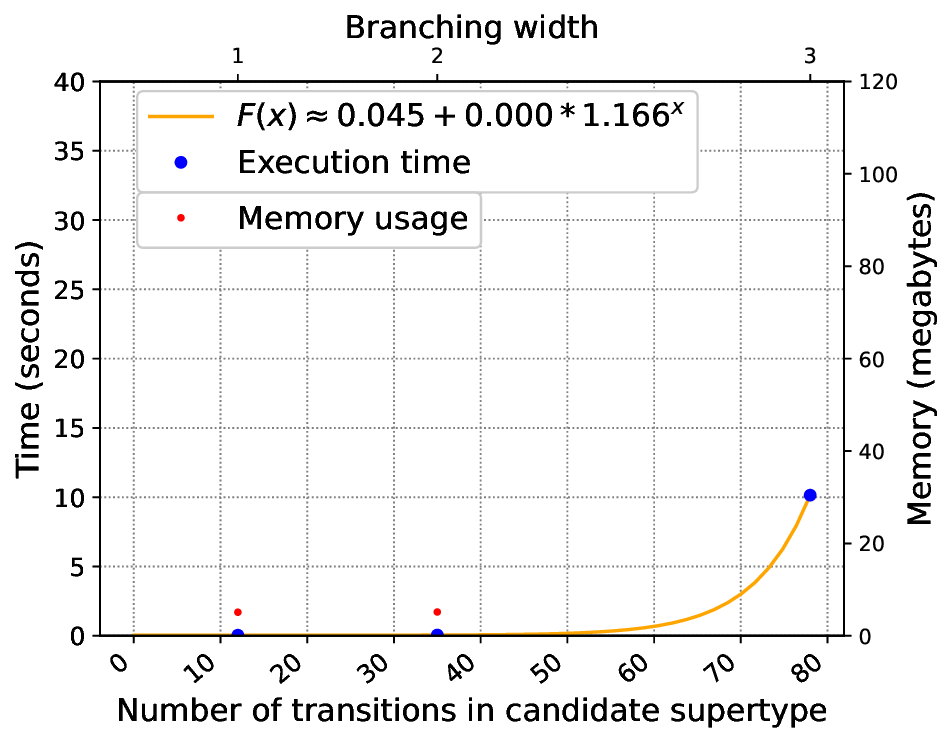}
\end{minipage}
\begin{minipage}{.48\textwidth}
  \includegraphics[width=\textwidth]{./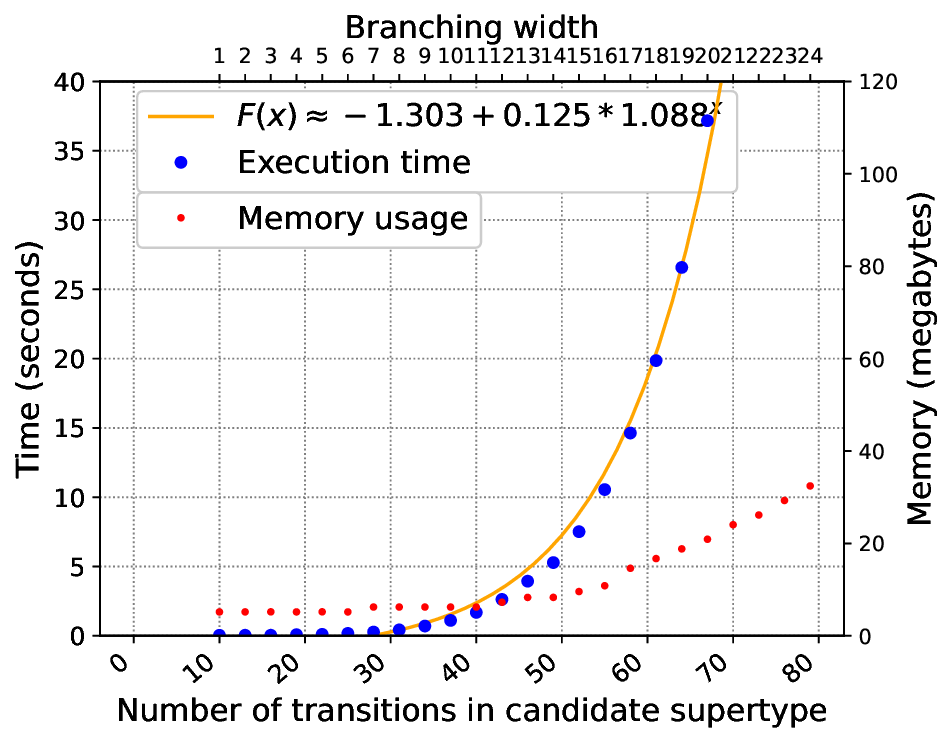}
\end{minipage}
\caption{Increasing branching width, without (left) and with minimisation (right)}\label{fig:exp-bra-w}
\begin{minipage}{.48\textwidth}
  \includegraphics[width=\textwidth]{./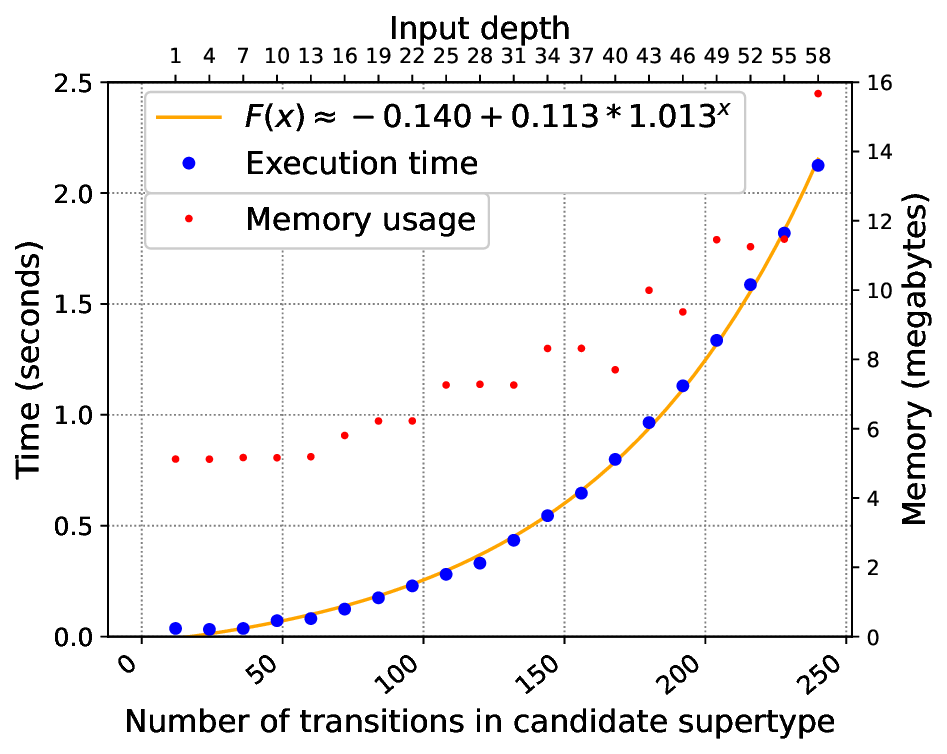}
\end{minipage}
\begin{minipage}{.48\textwidth}
  \includegraphics[width=\textwidth]{./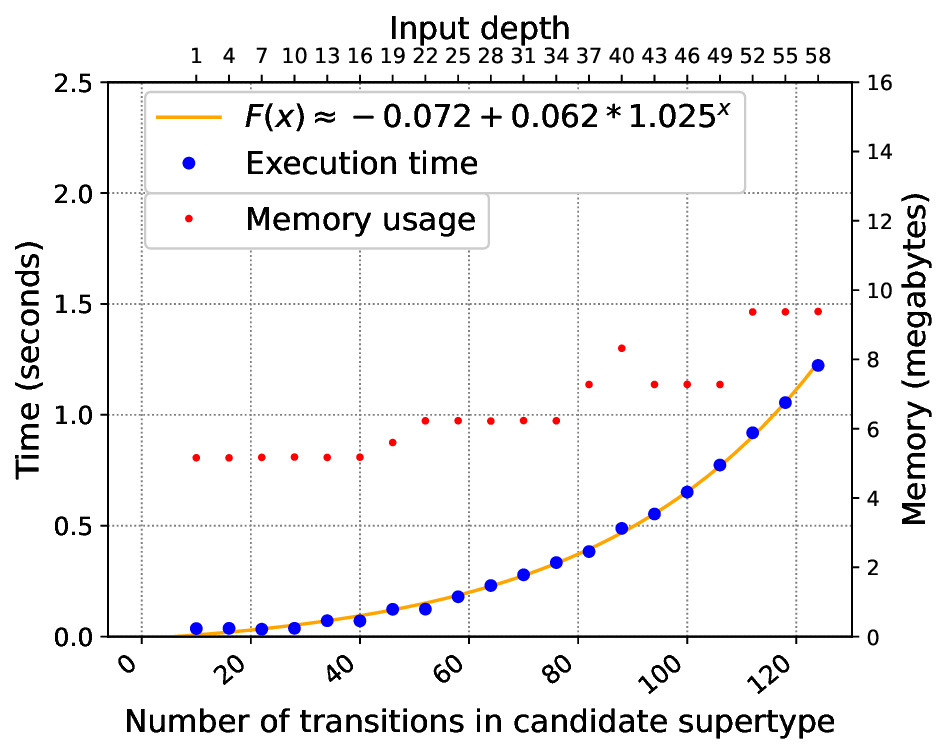}
\end{minipage}
\caption{Increasing input depth, without (left) and with minimisation (right)}\label{fig:exp-in-d}
\begin{minipage}{.48\textwidth}
  \includegraphics[width=\textwidth]{./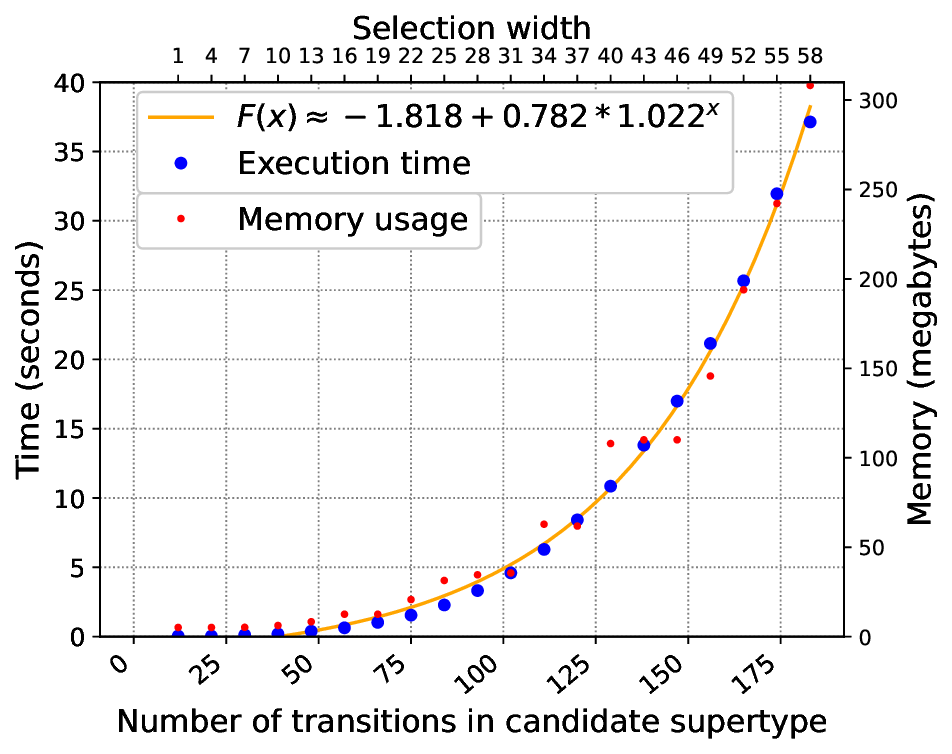}
\end{minipage}
\begin{minipage}{.48\textwidth}
  \includegraphics[width=\textwidth]{./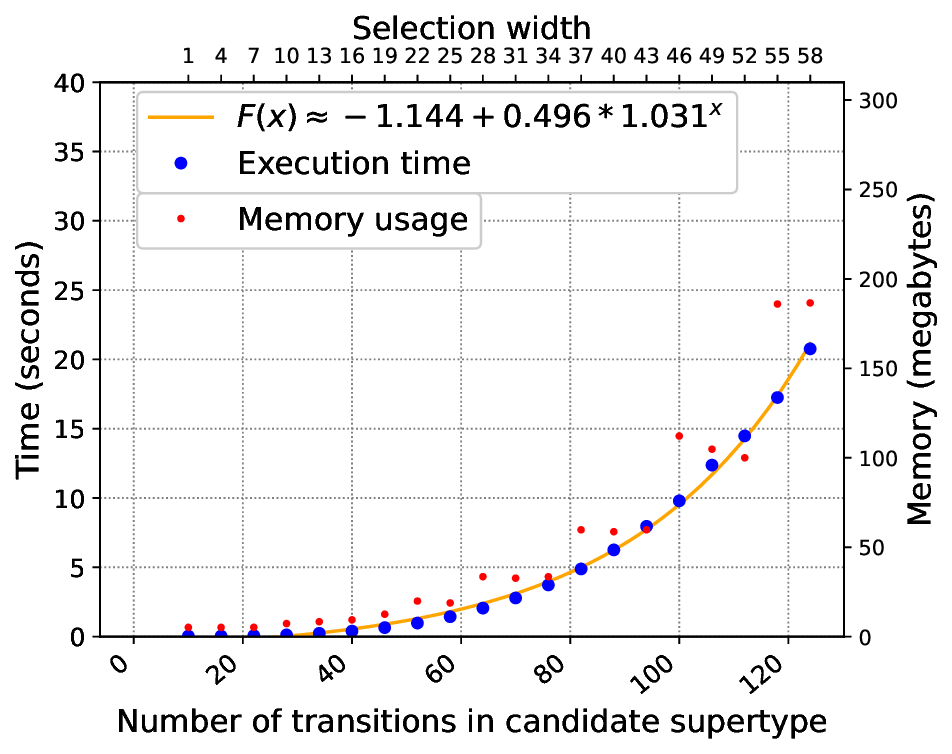}
\end{minipage}
\caption{Increasing selection width, without (left) and with minimisation (right)}\label{fig:exp-sel-w}
\end{figure}

\paragraph{Experimental results}
Figures~\ref{fig:exp-bra-w},~\ref{fig:exp-in-d},
and~\ref{fig:exp-sel-w} give the results of running the implementation
of our algorithm on increasingly large instances of the subtyping
problem $\TEST{n}{m}{k}$.
Each figure shows the runtime (larger data points in blue, left
y-axis) and peak memory usage (smaller data points in red, right
y-axis) for each instance of the problem.
Each figure includes two x-axes: the bottom one represents the number
of transitions in the automata representation of the candidate
supertype (which we consider a good measure of the size of the subtyping
problem); the top one represents the value of the variable parameter
for each experiment (e.g., branching with).
Plots on the left show the result without minimisation, plots on the
right show results using minimisation up to bisimulation.
Each figure depicts 20 data points unless our implementation timed out
(more than 300 seconds).
The yellow curve highlights the runtime trend. It is
computed using SciPy's \texttt{curve\_fit} function.

All the benchmarks in this paper were run on a MacBook Pro with an
Intel i5 CPU with 16GB RAM running macOS 13.4. The time was measured
by taking the difference between the system clock before and after our
tool was invoked. The memory usage refers to the maximum resident set
size as reported by the \texttt{/usr/bin/time -l} command. Each test
was ran 3 times, the plots report the average time (resp. memory)
measurements. All our test data and infrastructure are available on
our GitHub repository~\cite{tool}.

Figure~\ref{fig:exp-bra-w} shows the result of checking
$\TEST{n}{1}{1}$, with $n$ (branching width) increasing by step of
$1$, from $1$ to $20$.
The left-hand side plot shows that the tool quickly runs out of
resource without optimisation: only $n \in \{1,2,3\}$ terminate in
reasonable time.
While the asymptotic cost of the algorithm with minimised automata is
still exponential, the tool can deal with much larger input using this
optimisation as show on the right.

Figure~\ref{fig:exp-in-d} shows the result of checking
$\TEST{1}{m}{1}$, with $m$ (input depth) increasing by step of $3$,
from $1$ to $58$ ($20$ data points).
Observe that minimisation nearly halves the number of transitions in
the candidate supertypes. As a consequence, the version of the tool
that minimises its input before applying the subtyping algorithm runs
much faster and uses much less memory than its non-optimised
counterpart.

Figure~\ref{fig:exp-sel-w} shows the result of checking
$\TEST{1}{1}{k}$, with $k$ (selection width) increasing by step of
$3$, from $1$ to $58$ ($20$ data points).
In this case minimisation has a lesser effect on the number of
transitions in the candidate supertypes, but it has still a
significant effect on runtime, e.g., the largest problem takes 20s on
the minimised automata and 37s on the non-minimised ones.

\section{Related and Future Work}\label{sec:ending}
\label{sec:concl}
\paragraph{Related work}
The relationship between refinement and subtyping in the context of
\emph{synchronous} session types has been thoroughly investigated both 
for binary and multiparty session types.
For instance, Bernardi and Hennessy~\cite{BernardiH16} establish a
correspondence between binary session subtyping and an observational
preorder on session types interpreted as contracts. A similar result
has been obtained in the context of multiparty session types by Severi
and Dezani-Ciancaglini~\cite{SeveriD19}, where the subtyping is dubbed
structural preorder, while the refinement is named observational
preorder.
Concerning \emph{asynchronous} communication we can mention
previous works on refinement for asynchronous 
communication by some of the authors of this paper.
The work in~\cite{wsfm08} also considers fair compliance,
however here we consider binary (instead of multiparty) communication and we
use a unique input queue for all incoming messages instead of 
distinct named input channels.
Moreover, in the present paper we provide a sound characterisation of fair refinement
using coinductive subtyping and
provide a sound algorithm and its implementation.  In~\cite{sefm19,BravettiZ21} 
the asynchronous subtyping of \cite{MY15} 
is used to characterise refinement for a notion 
of correct composition based on the impossibility
to reach a deadlock, instead of the possibility to reach a final 
successful configuration as done in the present paper.
The  refinement from~\cite{sefm19}
does not support examples such as those in Figure~\ref{fig:runex-types}.

Concerning 
fairness in the context of session types,
Padovani studied a notion of fair subtyping for \emph{synchronous}
multi-party session types in~\cite{Padovani16}. This work notably
considers the notion of \emph{viability} which corresponds,  in the synchronous multiparty setting, to our
notion of controllability.
We use the term controllability instead of viability following the tradition
of service contract theories like those based on Petri nets \cite{Loh08,Wei08}
or process calculi \cite{BZ09a}.
Compared to~\cite{Padovani16},
asynchronous communication makes it much more involved to prove
soundness and completeness of the decidable characterisation of
controllability, as we do in this paper. Indeed in the asynchronous
case, transition systems arising from the communication of two types
are, in general, infinite state (due to unbounded queues), while they
are always finite state in the synchronous case.
Fair refinement in~\cite{Padovani16} is characterised by defining
a coinductive relation on
normal form of types, obtained by removing inputs leading to uncontrollable continuations.
Instead of using normal forms, we remove these inputs
during the asynchronous subtyping check. 
A limited form of variance on output
is also admitted in~\cite{Padovani16}.
Covariance between the outputs of a subtype and 
those of a supertype is possible when the additional
branches in the supertype are not needed to
have compliance with potential partners.
In~\cite{Padovani16} this check is made possible
by exploiting a \emph{difference} operation~\cite[Definition 3.15]{Padovani16}
on types, which synthesises a new type representing branches of one type that are absent in the other.
We observe that 
the same approach cannot work to 
introduce variance on outputs in an asynchronous setting.
Indeed the interplay between output anticipation and recursion
could generate differences in the branches of a 
subtype and a supertype that cannot be statically 
represented by a (finite) session type.

Padovani also studied an alternative notion of fair \emph{synchronous} subtyping
in~\cite{Padovani13}. Although the contribution of that paper refers to
session types, the formal framework therein seems to deviate from the usual session type approach. In particular, it considers
shared channel communication instead of binary channels: when a
partner emits a message, it is possible to have a race among several
potential receivers for consuming it.  As a consequence of this
alternative semantics, the subtyping in~\cite{Padovani13} does not
admit variance on input. Another difference with respect to session
type literature is the notion of \emph{success} among interacting
sessions: a composition of session is successful if at least one
participant reaches an internal successful state.  This approach has
commonalities with testing \cite{DH84}, where only the test composed
with the system under test is expected to succeed, but differs from
the typical notion of success considered for session types.
In~\cite{Barbanerad10,BernardiH16} (resp.\ \cite{MariangiolaPreciness})
it was proved that the Gay-Hole synchronous session subtyping
(resp. orphan message free asynchronous subtyping) coincides with
refinement induced by a successful termination notion requiring interacting processes to be {\it both} in the $\Tend$ state
(with empty buffers, in the asynchronous case).

More recently, van Glabbeek et al.~\cite{GlabbeekHH21} introduce a
type system for multiparty sessions that assumes
fairness. Nevertheless, the notion of fairness used in that paper is
different with respect to the notion considered by
Padovani~\cite{Padovani16} (in the synchronous case) and in this paper
(in the asynchronous case). In fact, in \cite{GlabbeekHH21}
\emph{weak} fairness is considered, consisting of a minimal fairness
assumption that ``guarantees only that concurrent transitions cannot
prevent each other from happening''.
On the other hand, Padovani~\cite{Padovani16} and ourselves consider a stronger notion
of fairness, namely, according to the terminology in \cite{GlabbeekH19},
we consider the composition of two session types correct
if their successful termination is a liveness property which holds
under the assumption of full fairness.
In \cite{GlabbeekH19} it is proved that, for finite state transition
systems, full fairness collapses to strong fairness of transitions, i.e., 
a transition which is (relentlessly) enabled infinitely many times during 
a computation, it is also executed infinitely often in such computation.
Session types are finite states, but we consider asynchronous
communication via unbounded FIFO buffers, hence our transition system 
(Definition \ref{def:transrel}) describing
the composition of two session types is not finite because buffers can store 
an unbounded amount of messages. On the contrary, in the context of
synchronous communication the transition system describing the composition 
of two session types is finite state, hence the above correspondence
result between full fairness and strong fairness applies.
A strong fair session subtyping has been recently used in a type
system that guarantees fair termination of sessions for a $\pi$-calculus like
language with binary sessions~\cite{CicconeP22}. The subtype defined in that paper differs
from previous strong fair subtypings because it also deals with higher-order types (useful
to type process languages including primitives for session creation and delegation) 
and because it is only sound but not complete w.r.t.\ fair session type
refinement. More precisely, it is complete only for bounded processes and it does
not capture subtypes like those discussed in Example \ref{ex:contra-variance}, where
the supertype has an uncontrollable (infinite) branch.

Several variants of asynchronous session subtyping have been proposed
in~\cite{ESOP09,MariangiolaPreciness, CDY2014, MY15, GhilezanPPSY21} and further studied in
our earlier work~\cite{BravettiCZ17,BravettiCLYZ21,sefm19,BCLYZ19}. All these
variants have been shown to be
undecidable~\cite{BCZ18,LY17,BravettiCZ17}.
Moreover, all these subtyping relations are (implicitly) based on an unfair notion of compliance. 
Some of these papers consider binary session types~\cite{MariangiolaPreciness, CDY2014, MY15}
as we do in this paper. 
An interesting technical difference with these papers is that they use finite input contexts
(i.e. without recursion) while we also consider infinite input contexts which may contain recursion
--- this is necessary to 
obtain 
$\exGR \subtype \exG$  and 
$\exS \subtype \exSR$ (see Figures~\ref{fig:runex-types} and~\ref{fig:runex-big}).
Moreover, the 
papers~\cite{MariangiolaPreciness, CDY2014} impose additional constraints in the
definition of asynchronous subtyping to guarantee 
absence of orphan-messages. Such constraints require the subtype not to have output loops whenever an output anticipation 
is performed, thus guaranteeing that at least one input
is performed in all possible paths.
In this paper, absence of orphan messages between compatible types
is guaranteed as successful termination is enforced under the assumption
of full-fairness.
Notice that not imposing this orphan-message-free
constraint is consistent with our recursive input contexts
that allows for input loops in the supertype whenever an output anticipation 
is performed.
The other papers~\cite{ESOP09, GhilezanPPSY21} consider asynchronous subtyping
for multiparty session types. In the binary case, a subtype can only anticipate
(under some specific conditions) outputs w.r.t input. In the multiparty
context additional differences are allowed, for instance, a subtype can anticipate
also an input w.r.t. other inputs of messages coming from other partners. Intuitively, this 
is possible because in the considered operational model messages coming from 
different partners are stored in distinct message queues.
A difference between~\cite{ESOP09} and~\cite{GhilezanPPSY21}
is that the former concentrates on deadlock freedom, while
the latter considers also orphan message freedom.
Notably, the subtyping in~\cite{GhilezanPPSY21} is proved to 
be precise (i.e. sound and complete), w.r.t. a notion of refinement 
that preserves orphan message freedom, deadlocks, and starvation,
for a $\pi$-calculus like language with multiparty sessions.  

In~\cite{BCLYZ19,BravettiCLYZ21}, we proposed a sound algorithm for the
(unfair) asynchronous subtyping in~\cite{MariangiolaPreciness}. 
The sound algorithm that we present in this paper substantially differs from that
of~\cite{BCLYZ19,BravettiCLYZ21}. Here we use witness trees that take under consideration both
increasing and decreasing of accumulated input.
In~\cite{BCLYZ19,BravettiCLYZ21}, instead, only regular growing accumulation is considered.
It is worth mentioning that in the context of multiparty session types there exist
alternative sound (but not complete) algorithmic approaches. 
In particular, in~\cite{DagninoGD23} a multiparty
approach is adopted: they study properties of networks of communicating 
end-point types instead
of studying a subtyping relation on binary session types in isolation, 
as we do in this paper. 
A first phase of their algorithm infers global 
types from networks, and a second phase checks the well formedness of the inferred
global types. Using techniques similar to ours (i.e. reduction from
queue machines) well formedness is proved to be undecidable, but a sound algorithmic
characterisation is proposed which is based on the notion of balancing.
The authors of that paper show that, following their approach, 
one of the examples not captured by the algorithm in~\cite{BCLYZ19,BravettiCLYZ21}
can be managed.

Finally, we mention work about refinement/subtyping in the context of
asynchronous multiparty sessions, where the use of global types 
allows for the definition of decidable type systems. More precisely,
both Castellani et al.~\cite{CastellaniDG21} and Li et al.~\cite{LiSW24} study a 
notion of refinement for
(asynchronous) multiparty session types that ensures that the
implementation of a given role can be replaced by another in the
context of a specific global type. This means that the relation
considers not only the component being refined, but also the other
components of the system. Unlike most subtyping relation for
asynchronous session types, this relation is decidable --- this is notably 
due to the relation being restricted to the specific context of a given 
global type.

\paragraph{Future work}
In future work, we will investigate the possibility to characterize 
a notion of fair asynchronous session subtyping 
which is complete with respect to our notion of fair refinement,
in particular, we are interested in a less restrictive 
subtyping which includes also some form of output variance.
We also plan to lift our study of fairness from binary to multiparty session types; 
in fact, the notions of fair compliance and refinement extend naturally 
to several partners.
Finally, we will investigate a more refined termination condition for our
algorithm using ideas from~\cite[Theorem 3.8]{BravettiCLYZ21}.
In particular, we plan to identify conditions similar to those
in Definition~\ref{def:witness-tree} such that it is always guaranteed
to find, during the computation of each branch of the simulation tree, a node 
with an ancestor satisfying such conditions.
Then, the initial phase of the algorithm dedicated to the
identification of the candidate
subtrees can terminate when such nodes are
detected, and the subsequent phase will continue to check
whether such candidate subtrees are also witness subtrees.

\section*{Acknowledgments}
We thank the anonymous reviewers for their valuable feedback and
insightful suggestions, which have improved the quality of this work.

\bibliographystyle{alphaurl}
\bibliography{biblio}

\iflong

\newpage 

\appendix

\section{Proofs}\label{app:proofs}
\subsection{Undecidability of Fair Refinement}\label{subsec:queuemachines}

Let $T=\sem{M,q_f,E}$ and $S=\sem{M,E}$; we
have that $T \refine S$ if and only if $q_f$ is reachable in $M$. 
To prove this,
we first characterize the set of types that are compliant with $S$.

\bisimilarityAndDual*
\begin{proof}
Let $S=\sem{M,E}$.

We first prove the if part. Let $S'$ be a session type 
with input/output labels in $\Gamma \cup \{E\}$ s.t. $S' \sim \dual{S}$.
We now prove that $S'$ is compliant with $S$.
It is trivial to see that $\dual S$ is compliant
with $S$; this holds because in the configuration $\cnfg{S}{\lempty}{\dual S}{\lempty}$
the two parties alternate inputs and outputs in such a way
that their buffers have maximal length 1, and moreover the possibility to
successfully terminate by selecting the ending label $E$ is never disallowed.
By Corollary \ref{cor:bisimilar} we have that also all types $S' \sim \dual{S}$
are compliant with $S$.

We now move to the only-if part.
Let $S'$ be a session type 
with input/output labels in $\Gamma \cup \{E\}$ s.t. 
$S'$ is compliant with $S$, i.e.,
$\cnfg{S}{\lempty}{S'}{\lempty}$ is a correct composition.
We have that $\unf{S'}$ cannot start with an output selection; in fact, if, for instance,
it starts with an output selection and it selects any label $A$, the type $S$ can select 
a branch with a different label $A'$, thus blocking. 
The initial input branching of $\unf{S'}$ must have branchings
labeled with all the symbols in $\Gamma$ plus the ending symbol $E$,
in that these are the labels that can be initially selected by $S$.
In each continuation of $S'$, the unfolding of the type should start with
an output selection, otherwise the entire
system is blocked in that the continuation of $S$ after the initial output
selection starts with an input branching.
Moreover, given that these input branchings of the continuation of $S$ have only 
the initially selected label, the output selection in the
continuation of $S'$ can have only such label. 
After each of these output selections of the continuation of $S'$,
the same reasoning can be applied, excluding the case in which the label $E$
was initially selected. In this case, the continuation of $S'$ should be
such that its unfolding is $\Tend$. This because, the continuation of $S$
becomes $\Tend$ after executing the input branching labeled with $E$.
These constraints that we have just proved holding for the type $S'$ 
guarantee that $S' \sim \dual{S}$.
\end{proof}
\noindent 
In order to prove the undecidability of refinement,
we first show that $T$ is compliant with $\dual S$
if and only if $q_f$ is reachable in $M$.

\encodingQueue*
\begin{proof}
Consider the queue machine $M$, the types $T=\sem{M,q_f,E}$ and $S=\sem{M,E}$
and the initial configuration 
$\cnfg{\semT{s}{\emptyset}}{\lempty}{\dual S}{\lempty}$.
The first transition is 
$\cnfg{T}{\lempty}{\dual S}{\lempty} \trans{}
\cnfg{\semT{s}{\emptyset}}{\lempty}{\dual S}{\$}$.

We now define a partial mapping function $\semcurly{\,}$ from configurations 
(reachable from the initial configuration
$\cnfg{\semT{s}{\emptyset}}{\lempty}{\dual S}{\$}$) to configurations
in the queue machine computation:
\begin{itemize}
\item
$\semcurly{\cnfg{\semT{q}{\emptyset}}{\word_T}{S'}{\word_S'}}=
(q,\word_T\cdot \omega \cdot(\word_S')^R)$ where 
\begin{itemize}
\item
$\omega = \lempty$
if $S'$ starts with an input branching, or $\omega = A$ if 
$S'$ starts with an output selection with unique label $A$,
\item
the operator $\cdot$ stands for concatenation, and
\item
and $\beta^R$ is the reverse of $\beta$. 
\end{itemize}
\end{itemize}
Notice that $\semcurly{\cnfg{\semT{s}{\emptyset}}{\lempty}{\dual S}{\$}}$ is defined
and it coincides with the initial 
configuration of the queue computation $(s,\$)$. 
In the following we use the following notation: 
\begin{itemize}
\item
$
\cnfg{\semT{q}{\emptyset}}{\word_T}{S'}{\word_S'}
\Rightarrow
\cnfg{\semT{q'}{\emptyset}}{\word_T'}{S''}{\word_S''}
$
if 
\begin{itemize}
\item
$
\cnfg{\semT{q}{\emptyset}}{\word_T}{S'}{\word_S'}
\trans{}^*
\cnfg{\semT{q'}{\emptyset}}{\word_T'}{S''}{\word_S''}
$
and 
\item
all intermediary traversed configurations are not
in the domain of the partial mapping function $\semcurly{\,}$.
\end{itemize}
\end{itemize}
\noindent 
Given that, excluding the final state $q_f$, for each state $q$ 
of the queue machine $\semT{q}{\emptyset}$ reproduces the dequeue/enqueue
actions of state $q$ and
$\dual S$ is a simple forwarder that repeatedly produces and consumes
the same labels, we have that 
given $q \neq q_f$ we have 
$(q,\gamma)\rightarrow _{M}(q',\gamma')$
if and only if
$
\cnfg{\semT{q}{\emptyset}}{\word_T}{S'}{\word_S'}
\Rightarrow
\cnfg{\semT{q'}{\emptyset}}{\word_T'}{S''}{\word_S''}
$
with 
$\semcurly{\cnfg{\semT{q}{\emptyset}}{\word_T}{S'}{\word_S'}}=(q,\gamma)$
and 
$\semcurly{\cnfg{\semT{q'}{\emptyset}}{\word_T'}{S''}{\word_S''}}=(q',\gamma')$.

We now prove the only-if part of the theorem.
Assume that $T$ is compliant with $\dual S$.
This means that there exists a computation leading to the
final successful configuration. The unique occurrence of $\Tend$
is inside the type $\semT{q_f}{\mathcal S}$,
hence we have 
$\cnfg{\semT{s}{\emptyset}}{\lempty}{\dual S}{\$}
\Rightarrow \ldots
\Rightarrow
\cnfg{\semT{q_f}{\emptyset}}{\word_T}{S'}{\word_S'}
$ 
thus implying that
state $q_f$ is reachable in $M$.

We now prove the if part. Assume that $q_f$ is reachable in $M$.
Consider $\cnfg{\semT{s}{\emptyset}}{\lempty}{\dual S}{\$} \trans{}^* 
\cnfg{T'}{\word_T'}{S'}{\word_S'}$.
There are two possible cases: either (i) it is possible to extend the 
sequence of transitions as follows
$\cnfg{T'}{\word_T'}{S'}{\word_S'} \trans{}^*  
\cnfg{\semT{q}{\emptyset}}{\word_T''}{S''}{\word_S''}$, for some state $q$,
(ii) or during the sequence of transitions 
$\cnfg{\semT{s}{\emptyset}}{\lempty}{\dual S}{\$} \trans{}^* 
\cnfg{T'}{\word_T'}{S'}{\word_S'}$ a configuration is traversed in 
which the l.h.s. type is $\semT{q_f}{\emptyset}$.

In the first case (i), we have that 
$(s,\$) \rightarrow _{M}^* \semcurly{\cnfg{\semT{q}{\emptyset}}{\word_T''}{S''}{\word_S''}}$;
moreover, in this computation of the queue machine the state $q_f$ is not 
traversed. This means that such a queue machine computation can be
extended to reach $q_f$, hence
the sequence of transitions 
$\cnfg{\semT{s}{\emptyset}}{\lempty}{\dual S}{\$} \trans{}^*
\cnfg{\semT{q}{\emptyset}}{\word_T''}{S''}{\word_S''}$
can be additionally extended to reach a configuration where the
l.h.s. type is $\semT{q_f}{\emptyset}$. From such a configuration,
we have that there are only finitely many transitions leading
to the final successful configuration (in this final transitions
both the queues are emptied and both types become $\Tend$).

In the second case (ii), we have that a configuration whose
l.h.s. type is $\semT{q_f}{\emptyset}$. As just observed,
this means that the
configuration $\cnfg{T'}{\word_T'}{S'}{\word_S'}$ is an intermediary
configuration in the final sequence of transitions 
leading
to the final successful configuration (in which
both the queues are emptied and both types are $\Tend$).
\end{proof}
\noindent 
By combining Theorem \ref{th:encodingQueue} with Lemma \ref{lem:dualQueue}, 
we can finally prove that our encoding of queue machines into session types
correctly reduces state reachability into refinement.

\correfinementundec*
\begin{proof}
We first prove the only-if part. Let $T \refine S$. By Lemma \ref{lem:dualQueue}
we have that $S$ is compliant with $\dual S$. Given that $T \refine S$,
also $T$ is compliant with $\dual S$. By Theorem \ref{th:encodingQueue} this implies
that $q_f$ is reachable in $M$.

We now prove the if part. Assume that $q_f$ is reachable in $M$. 
As discussed in Section~\ref{sec:refinement} (see footnote \ref{foot:labels}) 
our encoding of queue machines
assumes that the set $\mathcal{L}$ of labels in the Definition \ref{def:sessiontypes} of session types includes the symbols 
in the queue machine
alphabet $\Gamma$ plus the symbol $E$. We now consider a queue machine
$M'=(Q' , \Sigma , \Gamma'\supseteq \Gamma , 
 \$ , s , \delta' \supseteq \delta)$ 
obtained by replacing the queue alphabet $\Gamma$
with a richer alphabet $\Gamma'$ such that
$\mathcal{L} = \Gamma' \cup \{E\}$,
and by extending $\delta$ with 
a new transition relation $\delta'$ which includes 
also the additional queue symbols in its domain. 
The behaviour of $\delta'$ on these additional symbols
is irrelevant because these symbols will never be placed
in the queue, given that the input alphabet is still $\Sigma$.
We have that $q_f$ is reachable in $M'$, simply because 
$M'$ reproduces the same computations of $M$.
By Theorem \ref{th:encodingQueue} 
we have that $T$ is compliant with $\dual S$. By Corollary \ref{cor:bisimilar} we have that
$T$ is compliant with all $S'$ such that $S' \sim \dual S$.
Under the assumption that $\mathcal{L} = \Gamma' \cup \{E\}$,
by Lemma \ref{lem:dualQueue} we have that the set of
types $S'$ such that $S' \sim \dual S$ precisely corresponds
with the types with which $S$ is compliant. We have observed 
that $T$ is compliant with all such $S'$, hence we can conclude that
$T \refine S$.
\end{proof}

 \subsection{Controllability Characterisation}\label{subsec:controlcharacter}

In this section we will prove the following theorem about controllability characterisation.

\thcontrolcharacter*

\noindent 
We start by introducing some notions and definitions that will be needed in the proof.

First of all we present an equivalent definition, based on purely structural induction, of the $\unicontro$ predicate introduced in Definition \ref{def:controllability} characterizing session type controllability.
\begin{defi}\label{def:unicontropure}
  Given a session type $T$, we define the judgment $T\unicontro$
  inductively as follows:
  \begin{mathpar}
    \inferrule
    {\,}
    {\Tvar t \unicontro}

    \inferrule
    {\,}
    {\Tend \unicontro}
    
    \inferrule
    {
      \Tend \!\in\! T \vee \exists \Tvar t' \!:\!  \Tvar t' \!\neq\! \Tvar t \wedge \Tvar t' \!\in\! \free{T}
      \and 
      T \unicontro
    }
    {\Trec t.T \unicontro}
\\
    \inferrule
    {
      T\unicontro
    }
    { \Tbranchsingle{l}{T}  \unicontro}

        \inferrule
    {
      \forall i \in I \dst T_i \unicontro
    }
    {  \Tselect{l}{T}   \unicontro}
  \end{mathpar}
where $\free{T}$ is the set of variables $t$ occurring free in $T$.
\end{defi}

In the following we will use a reformulation of session types in terms of equation sets. In equation set notations we
will use terms $T$ that have the same syntax as those used to denote session types, excluding the $\Trec t.\_$ recursion operator. Notice that in such notations we consider possibly open terms $T$ (i.e.\ such that $\free{T}$ is not empty). Session types are, thus, denoted by $T\{\Tvar t=T_{\Tvar t} \mid \Tvar t \in \mathsf{Vars} \}$, with $\mathsf{Vars}$ being a set of variables $\vt$ that includes all variables in $\free{T}$ and also in $\free{T_\vt}$ for all $\vt \in \mathsf{Vars}$.

Formally, given a session type $T$ (we assume with loss of generality that each of its recursions uses a variable with a different name) we consider its equivalent equation set notation $\mathsf{esn}(T)= T_{\mathsf{init}}\{\Tvar t=T_{\Tvar t} \mid \Tvar t \in \mathsf{Vars} \}$, defined as follows:
\begin{itemize}
\item $\mathsf{Vars}$ is the set of variable names used in the recursions of $T$
\item $T_\mathsf{init}$ is the only term without recursion operators satisfying: there exists a set of terms $T'_\vt$, one for each variable $\vt \in \free{T_\mathsf{init}}$, such that $T_\mathsf{init}\{T'_\vt/\vt \mid \vt \in \free{T_\mathsf{init}}\}=T$
\item each $T_{\Tvar t}$, with $\Tvar t \in \mathsf{Vars}$, is the only term without recursion operators satisfying: there exists a 
set of variables $\mathsf{Vars}_\vt \subseteq \free{T_{\Tvar t}}$ and a
set of terms $T'_{\vt'}$, one for each variable $\vt' \in \mathsf{Vars}_\vt$, such that $T_{\Tvar t}\{T_{\vt'}/\vt' \mid \vt' \in \mathsf{Vars}_\vt \}=T''$ with $\Trec t.T''$ occurring in $T$.
\end{itemize}

\begin{defi}[Unfolding]Given session type in equation set notation we define its unfolding $\unf{T\{\Tvar t=T_{\Tvar t} \mid \Tvar t \in \mathsf{Vars} \}}$  
as follows: 
\[
\unf{T\{\Tvar t=T_{\Tvar t} \mid \Tvar t \in \mathsf{Vars} \}} = 
\begin{cases}
  \unf{T_{\vt'} \{\Tvar t=T_{\Tvar t} \mid \Tvar t \in \mathsf{Vars} \}} & \text{if $T=\vt'$}
  \\
  T\{\Tvar t=T_{\Tvar t} \mid \Tvar t \in \mathsf{Vars} \} & \text{otherwise}
\end{cases}
\]
\end{defi}
\noindent 
Notice that unfolding is well defined because we consider session types with guarded recursion in equation set notation.

The transition relation for configurations $\cnfg{T_1\{\Tvar t=T_{1,\Tvar t} \mid \Tvar t \in \mathsf{Vars}_1 \}}{\word_1}{T_2\{\Tvar t=T_{2,\Tvar t} \mid \Tvar t \in \mathsf{Vars}_2 \}}{\word_2}$, with $T_i\{\Tvar t=T_{i,\Tvar t} \mid \Tvar t \in \mathsf{Vars}_i \}$, for $i \in \{1,2\}$, being session types in equation set notation, is defined as in Definition \ref{def:transrel} by using the above definition of unfolding (and by assuming that the $\{\Tvar t=T_{i,\Tvar t} \mid \Tvar t \in \mathsf{Vars}_i \}$ equational part is copied, for both $T_1$ and $T_2$, after every transition). 

Given $T_1$ and $T_2$ session types, it obviously holds (by standard arguments) that the transition system of $\cnfg{T}{\lempty}{S}{\lempty}$ is bisimilar to that of $\cnfg{\mathsf{esn}(T)}{\lempty}{\mathsf{esn}(S)}{\lempty}$, hence
that: $T$ and $S$ are compliant if and only if $\mathsf{esn}(T)$ and $\mathsf{esn}(S)$ are compliant.

We now define predicate $\contro$ for session types in equation set notation. $\contro$ is defined as in Definition \ref{def:controllability}, by assuming that predicate $\unicontro$ is, instead, defined as follows.
$T\{\Tvar t=T_\vt \mid \Tvar t \in \mathsf{Vars} \} \unicontro$ if there exists an indexing (total order) ${\Tvar t}_i$ on the variables of $\mathsf{Vars}$  such that $\{ \vt_i \mid 1 \leq i \leq n \}=\mathsf{Vars}$ and, for all $i$, with $1 \leq i \leq n$, it, holds:
$$\Tend \!\in\! T_i \vee \exists \Tvar t_j \!:\!  j<i \wedge \Tvar t_j \!\in\! \free{T_i}$$
Moreover, as in Definition \ref{def:controllability}, in order to establish $\contro$ of a session type $T\{\Tvar t=T_\vt \mid \Tvar t \in \mathsf{Vars} \}$ input prefix replacement must preliminarily be performed, so to obtain session types $T'\{\Tvar t=T'_\vt \mid \Tvar t \in \mathsf{Vars}' \}$ where $\mathsf{Vars}' \subseteq \mathsf{Vars}$ and in both term $T' \in \mathsf{sin}(T)$ and all terms $T'_\vt \in \mathsf{sin}(T_\vt)$, with $\Tvar t \in \mathsf{Vars}'$, all input prefixes have a single label.

\begin{prop}\label{prop:esncontrollable}
$T$ $\contro$ if and only if $\mathsf{esn}(T)$ $\contro$.
\end{prop}
\begin{proof}
We first show that $T$ $\contro$ implies $\mathsf{esn}(T)$ $\contro$. Given $T'$ obtained by input prefix replacement from $T$ (so to have input prefixes with single choices) that satisfies the $\unicontro$ predicate, we correspondingly consider $\mathsf{esn}(T')$, which is 
an input prefix replacement of  $\mathsf{esn}(T)$. $\mathsf{esn}(T') \unicontro$ is an immediate consequence of $T' \unicontro$ by considering the indexing  ${\Tvar t}_i$ of variable names used in the recursions of $T$ obtained as follows. 
We incrementally assign indexes to variables (starting from $1$) according to a depth-first visit of the syntax tree of $T$ 
as follows. When we are at a 
 $\Trec t.T''$ node, we have two cases. Either $\vt$ has already an assigned index (not possibile at the beginning) or not.
In the latter case: we consider all $\Trec t'.\_$ operators occurring in $T''$, if any, that syntactically include $\Tend$ or variable $\Tvar t''$ such that $\Tvar t'' \!\neq\! \Tvar t \wedge \Tvar t'' \!\in\! \free{T''}$ and we assign an index to all such $\vt'$ (incrementing the last assigned index) in increasing order from the innermost to the outermost; then we assign  an index to $\vt$ (incrementing the last assigned index). Finally, in both cases, we visit all the $\Trec t'.\_$ descendants (with no other recursion node in-between) of the $\Trec t.\_$ node, if any.

We now show that $\mathsf{esn}(T)$ $\contro$ implies $T$ $\contro$. Given $T_\mathsf{init}\{\Tvar t=T_\vt \mid \Tvar t \in \mathsf{Vars} \}$ obtained by input prefix replacement from $\mathsf{esn}(T)$ that satisfies the $\unicontro$ predicate, we correspondingly consider the only term $T'$ which is an input prefix replacement of $T$ such that $\mathsf{esn}(T')=T_\mathsf{init}\{\Tvar t=T_\vt \mid \Tvar t \in \mathsf{Vars} \}$.
We show that $T' \unicontro$ (Definition \ref{def:unicontropure} above) by structural induction:
\begin{itemize}
\item  For the base cases ${\Tvar t \unicontro}$ and ${\Tend \unicontro}$ we have nothing to show.
\item  $\Tbranchsingle{l}{T''}  \unicontro$ and $\Tselect{l}{T''}   \unicontro$ are a direct consequence of the induction
hypothesis, i.e.\ $T''  \unicontro$ and $\forall i \in I \dst T''_i \unicontro$, respectively.
\item  $\Trec t.T'' \unicontro$ is  a direct consequence of the induction hypothesis $T'' \unicontro$ and of the fact that: 
$\Tend \!\in\! T'' \vee \exists \Tvar t' \!:\!  \Tvar t' \!\neq\! \Tvar t \wedge \Tvar t' \!\in\! \free{T''}$. The latter is shown  as follows. 
From $T_\mathsf{init}\{\Tvar t=T_\vt \mid \Tvar t \in \mathsf{Vars} \} \unicontro$ we know that there exists a variable indexing ${\Tvar t}_i$ such that,
for all $i\in I$ it, holds: $\Tend \!\in\! T_i \vee \exists \Tvar t_j \!:\!  j<i \wedge \Tvar t_j \!\in\! \free{T_i}$.
So, given index $i$ such that $\vt_i=\vt$, we have to show: $\Tend \!\in\! T'' \vee \exists z \!:\!  z \!\neq\! i \wedge \Tvar \vt_z \!\in\! \free{T''}$. What we know  is that $\Tend \!\in\! T_i \vee \exists \Tvar t_j \!:\!  j<i \wedge \Tvar t_j \!\in\! \free{T_i}$, so there are two cases:
\begin{enumerate}
\item Either it holds $\Tend \!\in\! T'' \vee \Tvar t_j \!\in\! \free{T''}$ and we are done (with $z=j$).
\item Or $\Trec t_j.T'''$, for some $T'''$, is a subterm of $T''$. In this case we show that: $\Tend \!\in\! T''' \vee \exists z \!:\!  z \!\neq\! i \wedge \Tvar \vt_z \!\in\! \free{T'''}$. To do this we consider index $j$ and the defining term $T_j$ in its equation: we know that $\Tend \!\in\! T_j \vee \exists \Tvar t_k \!:\!  k<j \wedge \Tvar t_k \!\in\! \free{T_j}$.
Now again we have the same two cases, considering index $k$ instead of $j$ and term $T'''$ instead of term $T''$. Notice that we cannot proceed like this forever because the syntax of $T''$ is finite, hence case $1.$ must eventually apply. Moreover when this happens, we are sure that the variable $\vt_z$ that we detect is different from $\vt=\vt_i$ (i.e.\ $z \neq i$) because the indexing of the variables that we consider are always strictly smaller than $i$.  
\qedhere
\end{enumerate}
\end{itemize}
\end{proof}
\noindent 
We are now in a position to prove the desired theorem. We prove implications in the two opposite directions one at a time.

\begin{thm}If there exists a  
session type $S$ such that $T$ and $S$ are compliant then $T$ $\contro$.
\end{thm}
\begin{proof}
Since $T$ and $S$ are compliant, as observed above, we have also that $\mathsf{esn}(T)$ and $\mathsf{esn}(S)$ are compliant.
Therefore (the transition system of) configuration $\cnfg{\mathsf{esn}(T)}{\lempty}{\mathsf{esn}(S)}{\lempty}$ is a correct  composition according to Definition \ref{def:compliance}.
 
We now show that $\mathsf{esn}(T)$ $\contro$: by Proposition \ref{prop:esncontrollable}
this implies that $T$ $\contro$. In order to do this we need to enrich the transition system representation of the behaviour of
configurations $\cnfg{T_1\{\Tvar t=T_{1,\Tvar t} \mid \Tvar t \in \mathsf{Vars_1} \}}{\word_1}{T_2\{\Tvar t=T_{2,\Tvar t} \mid \Tvar t \in \mathsf{Vars_2} \}}{\word_2}$. We assume the transition relation $\rightarrow$ defined in Definition 
\ref{def:transrel} to be enriched as follows:  $\rightarrow$  transitions originated from outputs of $T_1$ (rule $1.$ of Definition 
\ref{def:transrel}) are assumed to be decorated with the label $l_j$ of the performed output (denoted by $\arrow{\overline{l_j}}$), while $\rightarrow$  transitions originated from inputs of $T_1$ (rule $2.$ of Definition 
\ref{def:transrel}) are assumed to be decorated with the label $l_j$ of the performed input (denoted by $\arrow{l_j}$). Notice that, in case of transitions originated from inputs or outputs of $T_2$ no decoration is added to transitions $\rightarrow$. Moreover, rule $3.$ (about recursion unfolding) of Definition \ref{def:transrel} is assumed to just copy the decoration labeling the transition (if there is any). 

We now consider such an enriched transition system over configurations  $\cnfg{T_1\{\Tvar t=T_{1,\Tvar t} \mid \Tvar t \in \mathsf{Vars_1} \}}{\word_1}{T_2\{\Tvar t=T_{2,\Tvar t} \mid \Tvar t \in \mathsf{Vars_2} \}}{\word_2}$. We use $\conf$ to range over these configurations. We say that a configuration $s= \cnfg{T_1\{\Tvar t=T_{1,\Tvar t} \mid \Tvar t \in \mathsf{Vars_1} \}}{\word_1}{T_2\{\Tvar t=T_{2,\Tvar t} \mid \Tvar t \in \mathsf{Vars_2} \}}{\word_2}$ exposes variable $\vt' \in \mathsf{Vars_1}$ if $T_1=\vt'$. Moreover, we denote transition systems paths starting from a given configuration $\conf$, i.e.\ finite sequences of transitions $\conf \arrow{\alpha_1} \conf_1 \arrow{\alpha_2} \conf_2 \dots \arrow{\alpha_n} \conf_n$ (where $\alpha_i$ decorations can be $\varepsilon$ in case of non decorated $\rightarrow$ transitions), by means of strings 
$\langle \alpha_1,\conf_1 \rangle \langle \alpha_2,\conf_2 \rangle \dots \langle \alpha_n,\conf_n \rangle$ (strings over pairs $\langle \alpha',\conf' \rangle$ with $\alpha'$ being a decoration or $\varepsilon$ and $\conf'$ a configuration).

Assuming $\mathsf{esn}(T)= T_{\mathsf{init}}\{\Tvar t=T_{\Tvar t} \mid \Tvar t \in \mathsf{Vars} \}$,
we now construct an indexing on the variables in the subset $\mathsf{Vars'}$ of $\mathsf{Vars}$, which includes variables $\vt$ such that: a configuration $\conf$ that exposes $\vt$ is reachable from the initial configuration $\cnfg{\mathsf{esn}(T)}{\lempty}{\mathsf{esn}(S)}{\lempty}$. We proceed as follows.
If $\mathsf{Vars'} \neq \emptyset$, then we consider
any reachable configuration $\conf$ that exposes some variable $\vt \in \mathsf{Vars}$. 
Since $\cnfg{\mathsf{esn}(T)}{\lempty}{\mathsf{esn}(S)}{\lempty}$ is a correct  composition, the configuration $\conf$ must 
reach a configuration $\conf'$ such that $\conf'\surd$. We consider the path from $\conf$ to $\conf'$ and the last configuration $\conf''$ of such a path that exposes a variable. We denote such a variable with $\vt_1$, the configuration $\conf''$ that exposes it with $\conf_1$, and the path (string) from $\conf_1$ that leads to $\conf'$ (part of the path from $\conf$ to $\conf'$ considered above) with $\mathsf{path}_1$.
In any subsequent $k$-th step, with $k \geq 2$, we consider the set $\mathsf{Vars}_k = \mathsf{Vars'} - \{ \vt_h \mid h < k\}$.
If $\mathsf{Vars}_k \neq \emptyset$, then we consider
any reachable configuration $\conf$ that exposes some variable $\vt \in \mathsf{Vars}_k$. 
Since $\cnfg{\mathsf{esn}(T)}{\lempty}{\mathsf{esn}(S)}{\lempty}$ is a correct  composition, the configuration $\conf$ must 
reach a configuration $\conf'$ such that $\conf'\surd$. We consider the path from $\conf$ to $\conf'$ and the first configuration $\conf''$ of such a path that either exposes a variable in $\{ \vt_h \mid h < k\}$ or is such that $\conf'' \surd$. Again we consider the path from $\conf$ to $\conf''$ and the last configuration $\conf'''$ of such a path that: is different from $\conf''$ and exposes a variable (such a variable must exist, because $\conf$ exposes a variable, and belong to $\mathsf{Vars}_k$ because of the way we have selected $\conf''$). We denote such a variable with $\vt_k$, the configuration $\conf'''$ that exposes it with $\conf_k$,
and the path (string) from $\conf_k$ that leads to $\conf''$ (part of the path from $\conf$ to $\conf''$ considered above) with $\mathsf{path}_k$.

We now consider terms $T'_k$ for each variable $\vt_k \in \mathsf{Vars'}$. We build $T'_k$ terms inductively by taking $T'_k=\mathsf{term}(T_{\vt_k}, \conf_k, \mathsf{path}_k)$, where  $\mathsf{term}(T',\conf, \mathsf{optpath})$, with $\mathsf{optpath}$ being either a $\mathsf{path}$ or $*$ (that represents being outside the path), is defined as follows.

\begin{itemize}
\item $\mathsf{term}(\Tvar t, \conf, \varepsilon)=\Tvar t$
\item $\mathsf{term}(\Tend, \conf, \varepsilon)=\Tend$

\item $\mathsf{term}(\Tbranch{l}{T}, \conf, \langle l_j,\conf' \rangle \mathsf{path})=\Tbranchsingle{l_j}{\mathsf{term}(T_j, \conf', \mathsf{path})}$
\item $\mathsf{term}(\Tselect{l}{T}, \conf, \langle \overline{l_j},\conf' \rangle \mathsf{path})=\Tselect{l}{T'}$\\
where $T'_j\!=\!\mathsf{term}(T_j, \conf', \mathsf{path})$ and, for all $i \!\in\! I$,\ $i\!\neq\! j$: $T'_i\!=\!\mathsf{term}(T_i, \conf_i, *)$ with
$\conf \arrow{\overline{l_i}} \conf_i$

\item $\mathsf{term}(T', \conf, \langle \varepsilon,\conf' \rangle \mathsf{path})=\mathsf{term}(T', \conf', \mathsf{path})$

\item $\mathsf{term}(\Tvar t, \conf, *)=\Tvar t$
\item $\mathsf{term}(\Tend, \conf, *)=\Tend$

\item $\mathsf{term}(\Tbranch{l}{T}, \conf, *)=\Tbranchsingle{l_j}{\mathsf{term}(T_j, \conf_j, *)}$ if $\conf$ has some $\arrow{l}$ transition\\
where $j$ is any $i \in I$ such that $\conf \arrow{l_j} \conf_j$

\item $\mathsf{term}(\Tselect{l}{T}, \conf, *)=\Tselect{l}{\mathsf{term}(T_i, \conf_i, *)}$ if $\conf$ has some $\arrow{\overline{l}}$ transition\\
where, for all $i \!\in\! I$, $\conf \arrow{l_i} \conf_i$

\item $\mathsf{term}(T', \conf, *)=\mathsf{term}(T', \conf', *)$ if $T' \notin \{\Tvar t, \Tend\}$ and $\conf$ has neither $\arrow{l}$ nor $\arrow{\overline{l}}$ transitions\\
where $\conf'$ is the first configuration having some $\arrow{l}$ transition or some $\arrow{\overline{l}}$ transition
in the path from $\conf$ to a configuration $\conf''$ such that $\conf'' \surd$ (such a path must exist because $\cnfg{\mathsf{esn}(T)}{\lempty}{\mathsf{esn}(S)}{\lempty}$ is a correct  composition)

\end{itemize}
where we use $\varepsilon$ to represent the empty string.

We also take $T'_\mathsf{init} =  \mathsf{term}(T_\mathsf{init},\cnfg{\mathsf{esn}(T)}{\lempty}{\mathsf{esn}(S)}{\lempty},*)$.

We now have that $T'_{\mathsf{init}}\{\vt_k=T'_k \mid \vt_k \in \mathsf{Vars}' \}$ is a session type in equation notation: 
$\mathsf{Vars}'$ must include all variables in $\free{T'_{\mathsf{init}}}$ and also in $\free{T'_k}$ for all $\vt_k \in \mathsf{Vars}'$ because, otherwise, a configuration $\conf$ exposing the variable that is not included in $\mathsf{Vars}'$ would have been reachable from the initial configuration $\cnfg{\mathsf{esn}(T)}{\lempty}{\mathsf{esn}(S)}{\lempty}$ (which contradicts the definition of $\mathsf{Vars}'$). Moreover, due to the way $\mathsf{term}$ is defined, $T'_{\mathsf{init}}\{\vt_k=T'_k \mid \vt_k \in \mathsf{Vars}' \}$ is obtained from $T_{\mathsf{init}}\{\Tvar t=T_{\Tvar t} \mid \Tvar t \in \mathsf{Vars} \}$ by performing input replacement that yields input prefixes with single inputs.
Finally, being $\conf_k$ the last configuration exposing a variable inside a path ending with a configuration $\conf$ that either exposes a variable in $\{ \vt_h \mid h < k\}$ (and not having previous configurations exposing such variables) or is such that $\conf \surd$, each of the $T'_k$ satisfies the constraint $\Tend \!\in\! T'_k \vee \exists \Tvar t_h \!:\!  h<k \wedge \Tvar t_h \!\in\! \free{T'_k}$.
\end{proof}

\begin{thm}If $T$ $\contro$ then there exists a  
session type $S$ such that $T$ and $S$ are compliant.
\end{thm}
\begin{proof}
If $T$ $\contro$ then $\mathsf{esn}(T)=T_\mathsf{init}\{\Tvar t=T_\vt \mid \Tvar t \in \mathsf{Vars} \}$ $\contro$. 
That is, there exists an input prefix replacement that yields a session type $T'_\mathsf{init}\{\Tvar t=T'_\vt \mid \Tvar t \in \mathsf{Vars}' \}$ such that $\mathsf{Vars}' \subseteq \mathsf{Vars}$ (and in both term $T'_\mathsf{init} \in \mathsf{sin}(T_\mathsf{init})$ and all terms $T'_\vt \in \mathsf{sin}(T_\vt)$, with $\Tvar t \in \mathsf{Vars}'$, all input prefixes have a single label) and that satisfies the $\unicontro$  predicate, i.e.\ there exists an indexing $\vt_i$ of the $\mathsf{Vars}'$ variables, such that:
$\Tend \!\in\! T'_{\vt_i} \vee \exists \Tvar t_j \!:\!  j<i \wedge \Tvar t_j \!\in\! \free{T'_{\vt_i}}$.
We assume set $\mathsf{Vars}'$ to be minimal, i.e.\ to not include any defined but unused variable name and  we take $S$ to be the unique session type  such that $\mathsf{esn}(S)=\dual{T'_\mathsf{init}}\{\Tvar t=\dual{T'_\vt} \mid \Tvar t \in \mathsf{Vars}' \}$.

In the following we will consider configurations $\cnfg{T_1\{\Tvar t=T_\vt \mid \Tvar t \in \mathsf{Vars} \}}{\word_1}{T_2\{\Tvar t=\dual{T'_\vt} \mid \Tvar t \in \mathsf{Vars}' \}}{\word_2}$ that are reachable from the initial configuration $\conf_\mathsf{init}=\cnfg{\mathsf{esn}(T)}{\lempty}{\mathsf{esn}(S)}{\lempty}$. We say that any such configuration exposes variable $\vt' \in \mathsf{Vars}$ if $T_1=\vt'$. 
Now, given any configuration $\conf$ reachable from the initial configuration $\conf_\mathsf{init}$, we have that $\conf$ is such that:
\begin{itemize}
\item $\word_1 = \lempty \vee \word_2 = \lempty$ 
\item There exists a configuration $\conf_\lempty$, which is reached from $\conf$ with the transitions originated by performing either the non-empty $\word_1$ sequence of inputs in the lefthand type or the non-empty sequence $\word_2$ of inputs in the righthand type,  such that $\conf_\lempty = \cnfg{T'_1\{\Tvar t=T_\vt \mid \Tvar t \in \mathsf{Vars} \}}{\lempty}{\dual{T'_2}\{\Tvar t=\dual{T'_\vt} \mid \Tvar t \in \mathsf{Vars}' \}}{\lempty}$, with $T'_2 \in \mathsf{sin}(T'_1)$.
\end{itemize}
This property of $\conf$ is, indeed, an invariant property of all configurations reachable from the initial configuration $\conf_\mathsf{init}$ in that: it is satisfied by $\conf_\mathsf{init}$ itself and it is preserved both by transitions originated from outputs of the lefthand or righthand type (which, for a configuration satisfying the above property, can be done only if its own queue is empty, and have the effect of enqueuing in the righthand or lefthand type, respectively, a symbol that it can then, dually, dequeue with an input) and by transitions originated from inputs of the lefthand or righthand type (which just make the already existing input transition sequence to $\conf_\lempty$ shorter).

We now notice that it is possible to reach, from $s_\lempty$, by performing outputs of the lefthand or righthand type immediately followed by inputs dually executed by the righthand or lefthand type, respectively: either a configuration $\conf'$ such that $\conf' \surd$ (in case $\Tend \in T'_2$), or a configuration exposing  an indexed variable $\vt_i \in \mathsf{Vars}'$. In the latter case, we can, similarly, reach: either a configuration $\conf''$ such that $\conf'' \surd$ (in case $\Tend \in T'_{\vt_i}$), or a configuration exposing an indexed variable $\vt_j \in \mathsf{Vars}'$ with $j <i$. In the latter case, we repeat, again, the same step: we are guaranteed to eventually meet the case in which a $\surd$ configuration is reached in that variable indexes strictly decrease at each step.
We thus have that $\mathsf{esn}(T)$ and $\mathsf{esn}(S)$ are compliant, hence $T$ and $S$ are compliant.
\end{proof}

\subsection{Soundness of Fair Asynchronous Subtyping w.r.t. Fair Refinement}
\label{app:soundnessProof}

\begin{lem}\label{lemma1}
Consider the session type $T = \context {A} {\Tselectindex l{T_k}{j}{J}} {k\in K}$.
Let $P_2=\cnfg{T}{\word_T}{S}{\word_S}$ and 
$P_1^i=\cnfg{\context {A} {T_{ki}} {k\in K}}{\word_T}{S}{\word_S \append l_i}$,
for every $i \in J$. If $P_2$ is a correct composition
then one of the following holds: 
\begin{itemize}
\item
$\mathcal A$ does not contain any input branching and
$P_2 \trans{} P_1^i$, for every $i \in J$;
\item
$\mathcal A$ contains an input branching and 
$P_1^i$ (for every $i \in J$) and $P_2$ have at least one outgoing transition.\\
For every possible transition $P_1^i \trans{} P_1'$
we have that one of the following holds:
\begin{enumerate}
\item\label{item1}
$P_1^i$ does not consume the label $l_i$ and
there exist $\mathcal A'$, $W$, $T'_{wj}$ (for every $w\in W$, $j \in J$), $S'$, $\word_T'$ 
and $\word_S'$ s.t. $P_1' = [\context {A'} {T'_{wi}} {w\in W},\word_T']\pa [S',\word_S'\append l_i]$
        and \\
$P_2 \trans{} [\context {A'} {\Tselectindex l{T'_w}{j}{J}} {w\in W},\word_T']\pa [S',\word_S']$;
\item\label{item2}
$P_1^i$ consumes the label $l_i$, hence
$P_1' = [\context {A} {T_{ki}} {k\in K},\word_T] \pa [S',\word_S]$, and
$\exists j \in \{1,\ldots, m\}$ s.t.
$P_2 \trans{}^* [T_{ji},\word_T']\pa [S',\word_S]$
and
$\word_T=a_1 \append \dots \append  a_w \append \word_T'$,
where $a_1,\dots,a_w$ are the labels in one of the paths to $[\,]^j$
in $\mathcal A$.
\end{enumerate}
For every possible transition $P_2 \trans{} P_2'$
we have that there exist $\mathcal A'$, $W$, $T'_{wj}$ (for every $w\in W$, $j \in J$), $S'$, $\word_T'$ 
and $\word_S'$ s.t. \\
$P_2' = [\context {A'} {\Tselectindex l{T'_w}{j}{J}} {w\in W},\word_T']\pa [S',\word_S']$ and\\
$P_1^i \trans{}  [\context {A'} {T'_{wi}} {w\in W},\word_T']\pa [S',\word_S' \append l_i]$.
\end{itemize}
\end{lem}

\begin{lem}\label{lemma2}
Consider $P_1=[\context {A} {T_{k}} {k\in K},\word_T] \pa [S,\word_S]$ and
$P_2 = [T_{j},\word_T']\pa [S,\word_S]$
with
$\word_T = a_1 \append \dots \append  a_w \append \word_T'$,
where $a_1,\dots,a_w$ are the labels in one of the paths to $[\,]^j$
in $\mathcal A$. We have that if $P_2$ is a correct composition,
then also $P_1$ is a correct composition.
\end{lem}
\begin{proof}
By contraposition, assume $P_1$ is not a correct composition.
This implies the existence of $P_1'$, from which it is not
possible to reach a successful configuration, such that 
$P_1 \trans{}^* P_1'$. If the labels $a_1,\dots,a_w$
were not consumed, we extend $P_1 \trans{}^* P_1'$ to
$P_1 \trans{}^* P_1''$ by allowing the l.h.s. type to consume
all the labels $a_1,\dots,a_w$. We have that also from $P_1''$
is not possible to reach a successful configuration.
We now reorder the transitions in $P_1 \trans{}^* P_1''$
such that in the initial $w$ steps the l.h.s. type consumes
the labels $a_1,\dots,a_w$. After these transitions the
configuration $P_2$ is reached. This implies that also
$P_2 \trans{}^* P_1''$, but this is not possible because
$P_2$ is a correct composition and from $P_1''$ no successful 
configuration can be reached.
\end{proof}

\begin{lem}\label{lemma3}
Consider the session type $T = \context {A} {\Tselectindex l{T_k}{j}{J}} {k\in K}$.
Let $P_2=\cnfg{T}{\word_T}{S}{\word_S}$ and 
$P_1^i=\cnfg{\context {A} {T_{ki}} {k\in K}}{\word_T}{S}{\word_S \append l_i}$,
for every $i \in J$. If $P_2$ is a correct composition
then, for every $i\in J$, there exists $\cnfg{T'}{\word_T'}{S'}{\word_S'}$ such that
$P_1^i \rightarrow^* \cnfg{T'}{\word_T'}{S'}{\word_S'}$ and  
$\cnfg{T'}{\word_T'}{S'}{\word_S'}\surd$.
\end{lem}
\begin{proof}
Given that $P_2$ is a correct composition, we know that 
there exists 
$\cnfg{T'}{\word_T'}{S'}{\word_S'}$ s.t. $\cnfg{\context {A} {\Tselectindex l{T_k}{j}{J}} {k\in K}}{\word_T}{S}{\word_S} \rightarrow^* \cnfg{T'}{\word_T'}{S'}{\word_S'}$ and  
$\cnfg{T'}{\word_T'}{S'}{\word_S'}\surd$.
During this sequence of transitions, the input context $\ctx A$ will 
become without input branchings, because a configuration that 
contains one type with an input branching is not successful. 
In other terms there exist a prefix of the
sequence of transitions, at the end of which the input context 
becomes without input branchings. We proceed by induction on the
length of such a prefix. If the length is zero, we can apply the first
item of Lemma \ref{lemma1} to conclude that $P_2 \trans{} P_1^i$, for every $i \in J$,
hence also $P_1^i$ can reach a successful configuration.
In the inductive step, we consider the first transition of $P_2$,
we apply the last item of Lemma \ref{lemma1} to show that
also $P_1^i$, for every $i \in J$, can perform a transition
such that it is possible to apply again the hypothesis on the
reached configurations. This is possible because if $P_2$ is correct, also the
configurations it can reach are correct.
\end{proof}

\begin{prop}\label{prop:anticipation}
Consider the session type $T = \context {A} {\Tselectindex l{T_k}{j}{J}} {k\in K}$.
If 
$\cnfg{T}{\word_T}{S}{\word_S}$ is a correct composition then, for 
every $i \in J$, we have that also
$\cnfg{\context {A} {T_{ki}} {k\in K}}{\word_T}{S}{\word_S \append l_i}$
is a correct composition.
\end{prop}
\begin{proof}
By contraposition, assume $i \in J$ s.t.
$P_1^i=\cnfg{\context {A} {T_{ki}} {k\in K}}{\word_T}{S}{\word_S \append l_i}$
is not a correct composition.
This means the existence of $P_1^i \trans{}^* P'$ such that
$P'$ cannot reach a successful configuration.
By induction on the length of this sequence of transition
we show that, differently from what assumed, $P'$
can reach a successful configuration.
If the length is 0, we simply apply Lemma \ref{lemma3} to show
that $P_1^i=P'$ can reach a successful configuration.
If the length is not 0, we consider two possible cases:
(i) the initial transition of $P_1^i \trans{} P''$ of $P_1^i \trans{}^* P'$
consumes the label $l_i$ from the the queue of the r.h.s. type or
(ii) it does not.
In case (i) we use the corresponding item \ref{item2} in Lemma \ref{lemma1}
to see that we can apply Lemma \ref{lemma2} on $P_2$ and $P''$, in order to conclude
that $P''$ is a correct composition. Given that $P'' \trans{}^* P'$
we can conclude that $P''$ can reach a successful configuration.
In case (ii) we use the corresponding item \ref{item1} in Lemma \ref{lemma1}
to conclude that we can apply again the inductive hypothesis
on the shortest sequence of transitions $P'' \trans{}^* P'$. This is 
possible because $P_2$ has a corresponding transition to $P_2 \trans{} P_2'$,
such that $P''$ and $P_2'$ still satisfies the assumption in the
statement of the Lemma. In particular $P_2'$ is a correct composition 
because also $P_2$ is a correct composition.
\end{proof}

\begin{lem}\label{lemma:controllable}
If $\cnfg{S}{\word_S}{R}{\word_R}$ is a correct composition then
$S$ is controllable.
\end{lem}
\begin{proof}
We show the existence of a type $T$ such that $\cnfg{S}{\lempty}{T}{\lempty}$ is a correct composition.

Consider a type $T$ defined as follows. Assume $\word_S=l_1^S\cdots l_k^S$ and $\word_R=l_1\cdots l_w^R$.
The type $T$ initially performs $k$
outputs with single output labels $l_1$, $\cdots$, $l_k$, respectively. After such outputs,
it becomes like $R$, with the difference that along all of its paths, the initial $w$ input
branchings are replaced by one of its continuation as follows:
the $i$-th input branching is replaced by its continuation in the branch 
labeled with $l_i^R$.

We now show by contraposition that $\cnfg{S}{\lempty}{T}{\lempty}$ is a correct composition.
If $\cnfg{S}{\lempty}{T}{\lempty}$ is not correct, then there exists  
$\cnfg{S}{\lempty}{T}{\lempty} \trans{}^* \cnfg{S'}{\word_S'}{T'}{\word_T'}$
such that from $\cnfg{S'}{\word_S'}{T'}{\word_T'}$ it is not possible to reach a successful configuration.
It is not restrictive to assume that during 
$\cnfg{S}{\lempty}{T}{\lempty} \trans{}^* \cnfg{S'}{\word_S'}{T'}{\word_T'}$ 
the r.h.s. type has produced the queue $\word_S$ (in fact, if it has not
produced them, we continue the computation performing them).
We can also assume that outputs in $T$, corresponding to outputs in $R$
along an initial path with less than $w$ inputs have been all
performed (also in this case, if these outputs were not performed,
we continue the computation executing them). We have that also 
$\cnfg{S}{\word_S}{R}{\word_R}$ can perform a computation
$\cnfg{S}{\word_S}{R}{\word_R} \trans{}^* \cnfg{S'}{\word_S'}{T'}{\word_T'}$.
Given that $\cnfg{S}{\word_S}{R}{\word_R}$ is a correct composition,
we have that from $\cnfg{S'}{\word_S'}{T'}{\word_T'}$ will be possible
to reach a successful configuration, thus contradicting the 
above assumption. 
\end{proof}

\subtypingPropSoundness*
\begin{proof}
Given that $\cnfg{S}{\word}{R}{\word_R}$ is a correct composition,
there exist $S'$, $\word''$, $R''$, and $\word_R''$ such that 
$\cnfg{S}{\word}{R}{\word_R} \trans{}^* \cnfg{S'}{\word''}{R''}{\word_R''}$
and $\cnfg{S'}{\word'}{R''}{\word_R''}\surd$.
We proceed by induction on the length of this sequence
of transition.

If the length is 0, then $\cnfg{S}{\word}{R}{\word_R}\surd$,
that implies $\unf{S}=\Tend$, that also implies $\unf{T}=\Tend$
(because $T \subtype S$), from which we have $\cnfg{T}{\word}{R}{\word_R}\surd$.

If the length is greater than 0, we proceed by case analysis on the
possible first transition $\cnfg{S}{\word}{R}{\word_R} \trans{} \cnfg{S''}{\word'''}{R'''}{\word_R'''}$.

If the transition is inferred by $R$ it is sufficient to observe that
$S''=S$ and $\cnfg{T}{\word}{R}{\word_R} \trans{} \cnfg{T}{\word'''}{R'''}{\word_R'''}$,
and then apply the inductive hypothesis because $\cnfg{S''}{\word'''}{R'''}{\word_R'''}$
is a correct composition in that it is reachable from a correct composition.

We now consider that the transition is inferred by $S$.\\
We first discuss the case in which $\unf{S}=\Tselect{l}{S}$.
In this case, the above transition is 
$\cnfg{S}{\word}{R}{\word_R} \trans{} \cnfg{S_i}{\word'''}{R'''}{\word_R'''}$, for 
some $i \in I$.
Given that $T\subtype S$, and $S$ is controllable by Lemma \ref{lemma:controllable},
we have $\unf{T}=\Tselect{l}{T}$
with $T_i \subtype S_i$, for every $i\in I$. This ensures that 
$\cnfg{T}{\word}{R}{\word_R} \trans{} \cnfg{T_i}{\word'''}{R'''}{\word_R'''}$.
Then we can apply the inductive hypothesis because $T_i \subtype S_i$
and $\cnfg{S_i}{\word'''}{R'''}{\word_R'''}$
is a correct composition.

We now discuss the case in which $\unf{S}=\Tbranch{l}{S}$.
There are two possible subcases: (i) also $T$ starts with
an input branching, i.e., $\unf{T}=\&\{l_j:T_j\}_{j\in J}$,
or (ii) $T$ starts with
an output selection, i.e., $\unf{T}=\oplus\{l_j:T_j\}_{j\in J}$.

In case (i), the above transition is 
$\cnfg{S}{\word}{R}{\word_R} \trans{} \cnfg{S_i}{\word'''}{R'''}{\word_R'''}$, for 
some $i \in I$.
Given that $T\subtype S$, and $S$ is controllable by Lemma \ref{lemma:controllable},
we have $\unfold {}T = \Tbranchindex lTjJ$, $J \supseteq K$, and
$\forall k\in K  \ldotp T_k \subtype S_k$,
where $K = \{ k \in I \; | \; S_k \text{ is controllable} \}$.
Given that $\cnfg{S}{\word}{R}{\word_R}$ is a correct composition and
$\cnfg{S}{\word}{R}{\word_R} \trans{} \cnfg{S_i}{\word'''}{R'''}{\word_R'''}$,
also the latter configuration is a correct composition.
By Lemma \ref{lemma:controllable} we have that $S_i$ is controllable.
This implies that $i \in K$, hence also $i \in J$.
This ensures that 
$\cnfg{T}{\word}{R}{\word_R} \trans{} \cnfg{T_i}{\word'''}{R'''}{\word_R'''}$.
Then we can apply the inductive hypothesis because $T_i \subtype S_i$
and $\cnfg{S_i}{\word'''}{R'''}{\word_R'''}$
is a correct composition.

In case (ii), given that $T \subtype S$,  and $S$ is controllable,
we have that 
$\selunf {S} = \context {A} {\Tselectindex l{S_k}{i}{J}} {k\in K}$,
and $\unf{T}=\oplus\{l_j:T_j\}_{j\in J}$
with $T_j \subtype \context {A} {S_{kj}} {k\in K}$, for every $j \in J$.
We first observe that the sequence of transitions 
$\cnfg{S}{\word}{R}{\word_R} \trans{}^* \cnfg{S'}{\word''}{R''}{\word_R''}$,
with $\cnfg{S'}{\word''}{R''}{\word_R''}\surd$,
includes at least one output selection $l_j$ executed
by one of the output selections filling the holes in $\ctx{A}$.
This label $l_j$ is the first one emitted by the l.h.s. type
after it has executed input branchings in $\ctx{A}$.
We have that the same sequence of transitions, excluding the output 
of $l_j$, can be executed from the configuration 
$\cnfg{\context {A} {S_{kj}} {k\in K}}{\word}{R}{\word_R \append l_j}$.
Such a sequence is $\cnfg{\context {A} {S_{kj}} {k\in K}}{\word}{R}{\word_R \append l_j} 
\trans{}^* \cnfg{S'}{\word''}{R''}{\word_R''}$,
with $\cnfg{S'}{\word''}{R''}{\word_R''}\surd$; notice that it is shorter than the 
above one.
We now consider $\cnfg{T}{\word}{R}{\word_R} \trans{} \cnfg{T_i}{\word}{R}{\word_R \append{l_j}}$.
We can now apply the inductive hypothesis on the shorter sequence 
$\cnfg{\context {A} {S_{kj}} {k\in K}}{\word}{R}{\word_R \append l_j} 
\trans{}^* \cnfg{S'}{\word''}{R''}{\word_R''}$, because $T_j \subtype \context {A} {S_{kj}} {k\in K}$
and by Proposition \ref{prop:anticipation} $\cnfg{\context {A} {S_{kj}} {k\in K}}{\word}{R}{\word_R \append l_j}$
is a correct composition.
\end{proof}

\subtypingSoundness*
\begin{proof}
If $S$ is not controllable, then the thesis trivially holds
because $T \refine S$ for every $T$.

We now consider $S$ controllable, and
we prove the thesis by showing that if $T \subtype S$ then,
for every $\word$, $R$, and $\word_R$ such that 
$\cnfg{S}{\word}{R}{\word_R}$ is a correct composition, we have that
the following holds:
\begin{itemize}

\item \label{item:one_stepT}
if $\cnfg{T}{\word}{R}{\word_R} \rightarrow \cnfg{T'}{\word'}{R'}{\word_R'}$
then there exists $S'$ such that $T' \subtype S'$ and
$\cnfg{S'}{\word'}{R'}{\word_R'}$ is a correct composition.

\end{itemize}
The above implies the thesis because, given $T \subtype S$ and the 
correct composition $\cnfg{S}{\lempty}{R}{\lempty}$,
if there exists a computation 
$\cnfg{T}{\lempty}{R}{\lempty} \trans{}^* 
\cnfg{T'}{\word'}{R'}{\word_R'}$,
we can apply the above result
on each step of the computation 
to prove that there exists $S'$ such that $T' \subtype S'$
and $\cnfg{S'}{\word'}{R'}{\word_R'}$ is a correct composition.
Then, by Proposition \ref{prop:pathTuSuccess}, we have that 
there exist $T''$, $\word''$, $R''$, and $\word_R''$
such that $\cnfg{T'}{\word'}{R'}{\word_R'} \trans{}^* \cnfg{T''}{\word''}{R''}{\word_R''}$
and $\cnfg{T''}{\word''}{R''}{\word_R''}\surd$.

We now prove the above result.
The transition $\cnfg{T}{\word}{R}{\word_R} \trans{} \cnfg{T'}{\word'}{R'}{\word_R'}$
can be of four possible kinds:
\begin{enumerate}
\item
the consumption of a message from the r.h.s. queue,
i.e. $\cnfg{T}{\word}{R}{l \append \word_R'} \rightarrow \cnfg{T}{\word}{R'}{\word_R'}$;
\item
the insertion of a new message in the l.h.s. queue, 
i.e. $\cnfg{T}{\word}{R}{\word_R} \rightarrow \cnfg{T}{\word\append l}{R'}{\word_R}$;
\item
the consumption of a message from the l.h.s. queue, 
i.e. $\cnfg{T}{l \append q'}{R}{\word_R} \rightarrow \cnfg{T'}{\word'}{R}{\word_R}$;
\item
the insertion of a new message in the r.h.s. queue, 
i.e. $\cnfg{T}{\word}{R}{\word_R} \rightarrow \cnfg{T'}{\word}{R}{\word_R\append l}$.
\end{enumerate}
In the first two cases, we simply observe that there exists also
$\cnfg{S}{\word}{R}{l \append \word_R'} \rightarrow \cnfg{S}{\word}{R'}{\word_R'}$
(resp.
$\cnfg{S}{\word}{R}{\word_R} \rightarrow \cnfg{S}{\word\append l}{R'}{\word_R}$),
that $T \subtype S$, 
and also $\cnfg{S}{\word}{R'}{\word_R'}$ (resp. $\cnfg{S}{\word\append l}{R'}{\word_R}$)
is a correct composition because reachable from the correct composition
$\cnfg{S}{\word}{R}{l \append \word_R'}$ (resp. $\cnfg{S}{\word}{R}{\word_R}$).

In the third case we have that $\unf T$ starts with an input
branching. Given that $T \subtype S$, and $S$ is controllable,
also $\unf S$ must start
with an input branching, i.e. 
$\unf{S}=\Tbranch{l}{S}$.
By definition of $\subtype$ we have that 
$\unfold {}T = \Tbranchindex lTjJ$, $J \supseteq K$, and
$\forall k\in K  \ldotp T_k \subtype S_k$,
where $K = \{ k \in I \; | \; S_k \text{ is controllable} \}$.
Given
that $\cnfg{S}{l \append q'}{R}{\word_R}$ is a correct composition,
there exists $i \in I$ s.t. $l=l_i$ and 
$\cnfg{S}{l \append q'}{R}{\word_R} \trans{} 
\cnfg{S_i}{\word'}{R}{\word_R}$. The former configuration is a correct
composition, hence also the latter is such. This implies, by Lemma \ref{lemma:controllable},
that $S_i$ is controllable, hence $i \in K$ and also $i\in J$.
Thus, we have
$\cnfg{T}{l \append q'}{R}{\word_R} \trans{} 
\cnfg{T_i}{\word'}{R}{\word_R}$,
with $T_i \subtype S_i$.
We conclude this case by observing again that 
$\cnfg{S_i}{\word'}{R}{\word_R}$ is a correct composition in that
reachable from the correct composition $\cnfg{S}{l \append q'}{R}{\word_R}$.

In the fourth and last case, we have that $\unf T$ starts with an output
selection, and $T'$ is the continuation in the branch with label $l$.
Given that $T \subtype S$, and $S$ is controllable, we have
$\selunf {S} = \context {A} {\Tselectindex l{S_k}{j}{I}} {k\in K}$,
and $T' \subtype S_{km}$, for every $k \in K$ and some $m \in I$ such that $l_m=l$.
It remains to show
that $\cnfg{\context {A} {{S_{km}}} {k\in K}}{\word}{R}{\word_R\append l}$ is a 
correct composition, but this follows from Proposition \ref{prop:anticipation} 
and the fact that $\cnfg{\context {A} {\Tselectindex l{S_k}{j}{I}} {k\in K}}{\word}{R}{\word_R}$,
with $l=l_m$ for some $m \in I$,
is a correct composition. In fact $\selunf {S} = \context {A} {\Tselectindex l{S_k}{j}{I}} {k\in K}$
and $\cnfg{S}{\word}{R}{\word_R}$ is a correct composition.
\end{proof}

\subsection{Undecidability of Fair Asynchronous Subtyping}\label{app:undecidabilitySubtyping}

\undecidabilitySubtyping*
\begin{proof}

We first consider the only-if part, proving the contrapositive statement, that is,
if the queue machine $M$ terminates then $T \not\!\!\!\subtype S$.
If the queue machine terminates, we have that $(s,\$) \rightarrow _{M}^* (q',\lempty)$.
Consider now the pair of types $(T,S)$ with $T=\semthree{M,\_,E}$ and $S=\semthree{M,E}$.
If, by contradiction, $T \subtype S$, since $S$ is controllable
(it is compliant, e.g., with its dual) we have that 
by Definition \ref{def:subtyping} there exists a fair asynchronous
subtyping relation $\mathcal R$ such that $(T,S) \in \mathcal R$.
We now show that, by definition of fair asynchronous
subtyping relation, $\mathcal R$ will have to include other pairs of
types $(T'',S'')$ corresponding with configurations $(q'',\gamma'')$ 
reachable in the queue machine $M$.
Consider the type $T$:
$$
\Trec{s}.\Tbranchset{A}{\semTcont{B^A_1\cdots B^A_{n_A}}_{q'}^{\{s\}}}{\Gamma}
$$
assuming $\delta(s,A)=(q',B^A_1\cdots B^A_{n_A})$ and 
$$
  \begin{array}{l}
           \semTcont{B_1\cdots B_{m}}_{r}^{\mathcal T} \!=\! \left\{\!\!
          \begin{array}{ll}
            \!\BsemT{r}{\mathcal T} 
            & \text{if }m=0\\
\begin{array}{ll}
            \!\!\!\!\oplus & 
            \!\!\!\!\big( \big\{B_1:  \semTcont{B_2\ldots B_m}_{r}^{\mathcal
            T}\big\} \cup
            \\
            & \! \big\{{A:V}\big\}_{A\in\Gamma\setminus\{B_1\}} \cup
            \{E: V'\}
             \big) 
            \end{array} 
            & \text{otherwise}
          \end{array}
              \right.
  \end{array}
$$
It starts with an input branching, with labels for each queue alphabet symbol
including the initial queue symbol $\$$.
Then it has a sequence of output selections, including the sequence
of symbols to be emitted by the queue machine after having consumed $\$$.
Consider now the type $S$:
$$
\&\{\$ : \Trec{\Tvar t}.\Tselectset{A}{\&\{A:\Tvar t\}}{\Gamma} \cup \{E:\&\{E:\Tend\}\}\}
$$
It starts with an input branching with only
label $\$$, followed by an output selection on all symbols, including label
$E$ having continuation $\&\{E: \Tend\}$. The latter ensures that $S$ is controllable.
If we consider the constraints
imposed by the Definition \ref{def:subtyping} on fair asynchronous subtyping relations,
we can conclude that $\mathcal R$ should contain a pair of types $(T',S')$
where $T'$ is the type corresponding to the new state of the queue machine
(reached after the above sequence of output selections
$\semTcont{B^\$_1\cdots B^\$_{n_\$}}_{q'}^{\{s\}}$ to 
be emitted by the queue machine after having consumed $\$$)
and $S'$ is like $S$, with the difference that before the output selection 
there is a sequence of input branchings, each one with only one label, corresponding
with the sequence of symbols $B^\$_1\cdots B^\$_{n_\$}$ in the queue after the first computation step.
This reasoning can be repeatedly applied to prove that $\mathcal R$ should also contain 
other pairs of types $(T'',S'')$, one for each configuration $(q'',\gamma'')$ 
reachable in the queue machine $M$.
Consider now the pair $(T_f,S_f) \in \mathcal R$ 
corresponding to the terminating configuration  $(q',\lempty)$.
The type $T_f$, as all the types representing states in the queue machine,
starts with an input branching.
The type $S_f$, on the other hand, represents the empty queue, so it is
$\Trec{\Tvar t}.\Tselectset{A}{\&\{A:\Tvar t\}}{\Gamma} \cup \{E:\&\{E:\Tend\}\}$,
i.e. it is 
like $\sem{M,E}$ but without input branchings before the output
selection. This means that $(T_f,S_f)$ does not satisfy the item for
input selection in Definition \ref{def:subtyping}.
Hence $\mathcal R$ cannot be a fair asynchronous subtyping,
but this contradicts the above initial assumption about $\mathcal R$
being a fair asynchronous session subtyping.

We now move to the if part. Assume that the queue machine $M$ does not terminate.
We show that there exists a fair asynchronous subtyping relation $\mathcal R$
that contains the pair $(T,S)$, hence $T \subtype S$. There 
are two kinds of pairs in $\mathcal R$: 
(i) the pairs discussed in the above
only-if part of the proof that corresponds to the path in the subtyping
simulation game that reproduces the computation of the queue machine $M$,
and (ii) other pairs corresponding to alternative paths.
The pairs of types (i) satisfy the constraints imposed by 
Definition~\ref{def:subtyping} because output selections
of the l.h.s. type can always be mimicked by the r.h.s. type
(that always include an output selection after a sequence of
input branchings with only one label),
and input branchings can always be mimicked by the r.h.s. type
because under the assumption that the queue machine does not terminate,
the queue is always non-empty during the computation.
Also the pairs of type (ii) satisfy the constraints imposed by 
Definition~\ref{def:subtyping}. In fact, these pairs are generated
considering the alternative branches in the l.h.s. types
$\semTcont{B_1\cdots B_{m}}_{r}^{\mathcal T}$ in Definition \ref{def:controlEncoding2},
namely, the branches corresponding with the labels $A$ and $E$ in the definition,
that we report here for reader convenience:
  $$
  \begin{array}{l}
           \semTcont{B_1\cdots B_{m}}_{r}^{\mathcal T} \!=\! \left\{\!\!
          \begin{array}{ll}
            \!\BsemT{r}{\mathcal T} 
            & \text{if }m=0\\
\begin{array}{ll}
            \!\!\!\!\oplus & 
            \!\!\!\!\big( \big\{B_1:  \semTcont{B_2\ldots B_m}_{r}^{\mathcal
            T}\big\} \cup
            \\
            & \! \big\{{A:V}\big\}_{A\in\Gamma\setminus\{B_1\}} \cup
            \{E: V'\}
             \big) 
            \end{array} 
            & \text{otherwise}
          \end{array}
              \right.
  \end{array}
  $$
  with $V=\Trec{\Tvar t}. \big( \Tselectset{A}{\Tvar t}{\Gamma} \cup \{E:V'\} \big)$
  and $V'=\Trec{\Tvar t}. \big( \Tbranchset{A}{\Tvar t}{\Gamma} \cup \{E:\Tend\} \big)$.
The l.h.s. type in the pairs $(T',S')$ associated with these branches,
are of two kinds: (a) they are able to recursively perform all possible
outputs until the label $E$ is selected (type $V$),
or (b) they are able to recursively perform all possible inputs until the label 
$E$ is selected (type $V'$). 
In the first case (a), the constraints in Definition \ref{def:subtyping}
are satisfied because the r.h.s. type is always able to mimick output selections
(see the above observation).
In the second case (b), we have that the output $E$ has been previously
selected by the last pair of kind (a) considered. Hence, the r.h.s. type
is a sequence of input branchings, with only one label, where
all inputs excluding the last one are different
from $E$, and the last one, having label $E$,
has continuation $\Tend$. This guarantees that all these
pairs satisfy the constraints in Definition \ref{def:subtyping},
under the assumption that also a final pair $(\Tend,\Tend)$ belongs
to $\mathcal R$. We the conclude by observing that we have proved the existence
of a fair session subtyping relation $\mathcal R$ such that
$(T,S) \in \mathcal R$ (in that this is 
the first pair of the kind (i) above), hence we have that $T \subtype S$.
\end{proof}

 \subsection{Soundness of the Algorithm w.r.t. Fair Asynchronous Subtyping}\label{subsec:algorithmsoundness}

\lemalgosoundness*
\begin{proof}
We proceed by induction.
If $h=1$, the thesis directly follows from the fact that $\mathcal T^1$
is contained in a simulation tree.

If $h>1$, by inductive hypothesis we have that the thesis holds
for $\mathcal T^{h-1}$. We prove that the thesis holds also for 
$\mathcal T^{h}$ showing that there exists a simulation tree
including $m \treetrans{} m'$ with $m'$ labeled with
$(T',\crepl{\grepl{\calA''}{\crepl{\calA^{v'}}{S'_j}^{j \in J}}{J}}{S'_k}^{k \in K})$
if and only if
there exists a simulation tree including
$t \treetrans{} t'$ with $t'$ labeled with
$(T',\crepl{\grepl{\calA''}{\crepl{\calA^{v'+1}}{S'_j}^{j \in J}}{J}}{S'_k}^{k \in K})$.
The proof is by case analysis, considering the three possible steps in the subtyping
simulation game at the basis of the definition of $\treetrans{}$. 

If $T$ starts with a recursive definition, the thesis trivially
holds because $\treetrans{}$
simply modify the l.h.s. type by unfolding its initial recursion and leaves
the r.h.s. type unchanged.

If $T$ starts with an input branching, by Definition \ref{def:subtyping}
we have that the r.h.s. type contains an entire context $\calA$ in its
growing holes.
We initially consider
$m \treetrans{} m'$ with $m'$ labeled with
$(T',\crepl{\grepl{\calA''}{\crepl{\calA^{v'}}{S'_j}^{j \in J}}{J}}{S'_k}^{k \in K})$.
This means that by applying $\unfold{}$ to the r.h.s. type we obtain an input context
starting with an input branching satisfying the constraints imposed by Definition \ref{def:subtyping}.
The step of the subtyping simulation game corresponding to $m \treetrans{} m'$
selects a branch of the input branching
such that its continuation $\crepl{\grepl{\calA''}{\crepl{\calA^{v'}}{S'_j}^{j \in J}}{J}}{S'_k}^{k \in K}$
is controllable.
Now consider $t$ with label 
$(T,\crepl{\grepl{\calA'}{\crepl{\calA^{v+1}}{S_j}^{j \in J}}{J}}{S_k}^{k \in K})$.
The application of $\unfold{}$ modifies the outer context in the same way
thus obtaining a type starting with the same input branching, simply
with an additional nesting of $\calA$ in the holes in $J$.
The continuation $\crepl{\grepl{\calA''}{\crepl{\calA^{v'+1}}{S'_j}^{j \in J}}{J}}{S'_k}^{k \in K}$
is also controllable because it is an input contexts with the set of indexed holes,
hence the same set of types $S'_j$ and $S'_k$.
Hence it is possible to apply a corresponding step in the subtyping 
simulation game $t \treetrans{} t'$ with $t'$ labeled with
$(T',\crepl{\grepl{\calA''}{\crepl{\calA^{v'+1}}{S'_j}^{j \in J}}{J}}{S'_k}^{k \in K})$.
Notice that the same reasoning can be applied assuming that
$t \treetrans{} t'$ with $t'$ labeled with
$(T',\crepl{\grepl{\calA''}{\crepl{\calA^{v'+1}}{S'_j}^{j \in J}}{J}}{S'_k}^{k \in K})$
to prove that there exists also the corresponding step in the subtyping
simulation game $m \treetrans{} m'$. In this case we use the assumption that 
in the growing holes of the r.h.s. type of the label of $m$ we have an entire 
context $\calA$, thus guaranteeing the presence of the same $S'_j$ in all the continuations
of the initial input branching present in the outer context. 

If $T$ starts with an output selection, we initially consider
$m \treetrans{} m'$ with $m'$ labeled with
$(T',\crepl{\grepl{\calA''}{\crepl{\calA^{v'}}{S'_j}^{j \in J}}{J}}{S'_k}^{k \in K})$.
This means that by applying $\selunf{}$ to the r.h.s.
type we obtain an input context filled with types starting with output
selections satisfying the constraints imposed by Definition \ref{def:subtyping}.
Notice that the application of $\selunf{}$ to the outer input context 
does not remove holes, but at most replicates some of them.
Moreover, the application of $\selunf{}$ applies to the innermost
types $S_j$ and $S_k$ by unfolding the variables inside outputs
replacing them with their definitions (already present in $S_j$ and $S_k$ given
that these are closed terms). The considered step in the subtyping
simulation game modifies (the unfoldings of) $S_j$ and $S_k$ by resolving 
initial output selections, thus obtaining $S_j'$ and $S_k'$.
Now consider $t$ with label 
$(T,\crepl{\grepl{\calA'}{\crepl{\calA^{v+1}}{S_j}^{j \in J}}{J}}{S_k}^{k \in K})$.
What we have just observed about the step $m \treetrans{} m'$
of subtyping simulation game, holds also for this new pair of types.
The application of $\selunf{}$ respectively modifies the
outer input context and the inner types $S_j$ and $S_k$ in the
same way, and also the same resolution of the initial output selections 
in $S_j$ and $S_k$ is possible.
Hence there exists 
$t \treetrans{} t'$ with $t'$ labeled with
$(T',\crepl{\grepl{\calA''}{\crepl{\calA^{v'+1}}{S'_j}^{j \in J}}{J}}{S'_k}^{k \in K})$.
Notice that the same reasoning can be applied assuming that
$t \treetrans{} t'$ with $t'$ labeled with
$(T',\crepl{\grepl{\calA''}{\crepl{\calA^{v'+1}}{S'_j}^{j \in J}}{J}}{S'_k}^{k \in K})$
to prove that there exists also the corresponding step in the subtyping
simulation game $m \treetrans{} m'$.
\end{proof}

\propalgosoundness*
\begin{proof}
We proceed by induction on the length of $n \treetrans{}\!\!^*\, n'$.

If the length is 0, then $n'$ is the root of $\mathcal T$
hence its label is obviously in $\mathcal T^1$.

If the length is greater than 1, consider 
$n \treetrans{}\!\!^*\, n''\treetrans{}n'$.
By inductive hypothesis we have that $\tlab(n'')$
is a label present either  
in $\mathcal T^h$, for some $h$,
or in $\simtree{T'}{S'}=(N', n_0', \treetrans, \tlab')$
with $T' \subtype S'$.

We start from the latter case, i.e., there exists $m''$
in $\simtree{T'}{S'}=(N', n_0', \treetrans, \tlab')$
such that $\tlab'(m'')=\tlab(n'')$. We have that there exists
$m''\treetrans{} m'$ in $\simtree{T'}{S'}$ s.t. $\tlab'(m')=\tlab(n')$.

We now consider the former case, i.e., there exists one node in 
$\mathcal T^h$, for some $h$, labeled with $\tlab(n'')$.
Let $m''$ be such node.
There are two possibilities, 
either (i) the node $m''$ is a leaf in $\mathcal T^h$,
or (ii) it is not a leaf.
In the case (ii) we have that $\mathcal T^h$
contains $m''\treetrans{} m'$, with $m'$ labeled with $\tlab(n')$. 
If $m''$ is a leaf, we consider the four kinds of leaves separately.

If $m''$ is a leaf of type \ref{algo:loop}, then there exists an ancestor $m'''$
of $m''$ in $\mathcal T^h$ with the same label $\tlab(n'')$. Given that the ancestor 
is not a leaf, $\mathcal T^h$
contains $m'''\treetrans{} m'$, with $m'$ labeled with $\tlab(n')$. 

If $m''$ is a leaf of type \ref{algo:increase} in $\mathcal T$, 
we have $\tlab(n'')=$ $(T',\crepl{\crepl{\calA^{h+1}}{S_j}^{j \in J}}{S_k}^{k\in K})$.
The node $n''$ has an ancestor $n'''$ in $\mathcal T^h$
s.t. $\tlab(n''')=(T',\crepl{\crepl{\calA^{h}}{S_j}^{j \in J}}{S_k}^{k\in K})$.
Consider now the corresponding node $m'''$ in $\mathcal T^{h+1}$.
We have that $m'''$ is labeled with $(T',\crepl{\crepl{\calA^{h+1}}{S_j}^{j \in J}}{S_k}^{k\in K})=\tlab(n'')$.
Given that $m'''$ 
is not a leaf, $\mathcal T^{h+1}$
contains $m'''\treetrans{} m'$, with $m'$ labeled with $\tlab(n')$. 

If $m''$ is a leaf of type \ref{algo:decrease} in $\mathcal T$, 
we have $\tlab(n'')=(T',\crepl{\crepl{\calA^{h}}{S_j}^{j \in J}}{S_k}^{k\in K})$.
We have two cases. If $h=1$, by definition of witness tree, $T' \subtype \crepl{\crepl{\calA^{h}}{S_j}^{j \in J}}{S_k}^{k\in K}$.
The node $n''$ has the same label as the root of 
$\simtree{T'}{\crepl{\crepl{\calA^{h}}{S_j}^{j \in J}}{S_k}^{k\in K}}$.
Hence such a simulation tree includes a transition from its root
to a node labeled with $\tlab(n')$.
If $h>1$ the node $n''$ has an ancestor $n'''$ in $\mathcal T^h$
such that $\tlab(n''')=(T',\crepl{\crepl{\calA^{h+1}}{S_j}^{j \in J}}{S_k}^{k\in K})$.
Consider now the corresponding node $m'''$ in $\mathcal T^{h-1}$.
We have that $m'''$ is labeled with $(T',\crepl{\crepl{\calA^{h}}{S_j}^{j \in J}}{S_k}^{k\in K})=\tlab(n'')$.
Given that $m'''$ 
is not a leaf, $\mathcal T^{h-1}$
contains $m'''\treetrans{} m'$, with $m'$ labeled with $\tlab(n')$. 

If $m''$ corresponds to leaf of type \ref{algo:const} in $\mathcal T$, 
we have that the label $\tlab(n'')$ of $m''$ is the same as the label in 
the corresponding node in $\mathcal T$, i.e. $(T',\calA'[S_k]^{k \in K'})$.
In fact labels of
the leaves of type \ref{algo:const} in $\mathcal T$ do not change when moving 
to $\mathcal T^h$. This because the input context $\calA'$
does not include growing holes.
By definition of witness tree we have that
$T' \subtype \calA'[S_k]^{k \in K'}$.
The node $n''$ has the same label as the root of 
$\simtree{T'}{\calA'[S_k]^{k \in K'}}$.
Hence such a simulation tree includes a transition from its root
to a node labeled with $\tlab(n')$.
\end{proof}

\thmalgosoundness*
\begin{proof}
Let $\mathcal T$ be the witness subtree with root in $n$.
By Proposition \ref{prop:algo} we have that $\tlab(n')$ is a label present either  
in $\mathcal T^h$, for some $h$,
or in $\simtree{T'}{S'}=(N', n_0', \treetrans, \tlab')$
with $T' \subtype S'$.
In the latter case the thesis trivially holds because
all nodes $m'$ in $\simtree{T'}{S'}$ are either successful or 
there exists $m' \treetrans{} m''$.
In the former case there are two cases: either there exists
an intermediary node (non-leaf) in one $\mathcal T^h$, for some $h$,
labeled with $\tlab(n')$ is an intermediary, or such a node can be
only in leaf positions. In the first case the thesis trivially holds 
because all intermediary nodes have successors.
The second case can occur only for leaves of type \ref{algo:decrease} 
in $\mathcal T$, or corresponding to leaves of type \ref{algo:const} 
in $\mathcal T$. Both cases imply that $\tlab(n')=(T',S')$
with $T' \subtype S'$. Hence $n'$ has the same label as the root
of $\simtree{T'}{S'}$ and, as above, the thesis trivially holds because
all nodes $m'$ in $\simtree{T'}{S'}$ are either successful or 
there exists $m' \treetrans{} m''$.
\end{proof}

\end{document}